\renewcommand\footnotetextcopyrightpermission[1]{} 
\colorlet{mygray}{black!30}
\colorlet{mygreen}{green!60!blue}
\colorlet{mymauve}{red!60!blue}
\def\acmBooktitle#1{\gdef\@acmBooktitle{#1}}
\newcommand{\revise}[1]{{#1}}
\newcommand{\rthread}[1]{{#1}}
\newcommand{\code}[1]{{\small\texttt{#1}}}
\newcommand{\smallcode}[1]{{\footnotesize\texttt{#1}}}
\newcommand{\para}[1]{\medskip\noindent\textbf{#1}}
\newcommand{\smallpara}[1]{\smallskip\noindent\textbf{#1}}
\newcommand{\etal}{\textit{et al. }}
\newcommand{\eg}{\textit{e.g., }}
\newcommand{\ie}{\textit{i.e., }}
\renewcommand{\emph}[1]{{\textit{#1}}}
\newcommand{\titlesysname}{{NFGen}\xspace}
\newcommand{\sysname}{{\small \sf \titlesysname}\xspace}
\newcommand{\f}{{$F(x)$}\xspace}
\newcommand{\addop}{{$\mathsf{+}$}\xspace}
\newcommand{\mulop}{{$\mathsf{\times}$}\xspace}
\newcommand{\gtop}{{$\mathsf{>}$}\xspace}
\newcommand{\divop}{{$\frac{1}{x}$}\xspace}
\newcommand{\expop}{{$e^{x}$}\xspace}
\newcommand{\logop}{{$\ln{x}$}\xspace}
\newcommand{\sqrtop}{{$\sqrt{x}$}\xspace}
\newcommand{\sigmoid}{{\textit{sigmoid}}\xspace}
\newcommand{\Pkm}{{$\mathcal{\hat{P}}$}\xspace}
\newcommand{\pkm}{{$\hat{p}_{k}^{m}$}\xspace}
\newcommand{\pk}{{$\hat{p}_{k}$}\xspace}
\newcommand{\kbar}{{$\bar{k}$}\xspace}
\newcommand{\zerohat}{{\hat{0}}\xspace}
\newcommand{\xhat}{{\hat{x}}\xspace}
\newcommand{\chat}{{\hat{c}}\xspace}
\newcommand{\phat}{{\hat{p}}\xspace}
\newcommand{\pkhat}{{$\hat{p}_{k}$}\xspace}
\newcommand{\repz}{{\emph{Rep2k}}\xspace}
\newcommand{\repf}{{\emph{RepF}}\xspace}
\newcommand{\repzps}{{\emph{Ps-Rep2k}}\xspace}
\newcommand{\repfps}{{\emph{Ps-RepF}}\xspace}
\newcommand{\shamir}{{\emph{Shamir}}\xspace}
\begin{document}



  


\title{\titlesysname: Automatic Non-linear Function Evaluation Code Generator for General-purpose MPC Platforms} 

\author{Xiaoyu Fan}
\email{fanxy20@mails.tsinghua.edu.cn}
\affiliation{IIIS, Tsinghua University}

\author{Kun Chen}
\email{chenkun@tsingj.com}
\affiliation{Tsingjiao Information Technology Co. Ltd.}

\author{Guosai Wang}
\email{guosai.wang@tsingj.com}
\affiliation{Tsingjiao Information Technology Co. Ltd.}

\author{Mingchun Zhuang}
\email{mczhuang@bupt.edu.cn}
\affiliation{Beijing University of Posts and Telecommunications}

\author{Yi Li}
\email{xiaolixiaoyi@tsingj.com}
\affiliation{Tsingjiao Information Technology Co. Ltd.}

\author{Wei Xu}
\email{weixu@tsinghua.edu.cn}
\affiliation{IIIS, Tsinghua University}

\renewcommand{\shortauthors}{Xiaoyu Fan et al.}

\begin{abstract}

Due to the absence of a library for non-linear function evaluation,  so-called \emph{general-purpose} secure multi-party computation (MPC) are not as ``general'' as MPC programmers expect.  Prior arts either naively reuse plaintext methods, resulting in suboptimal performance and even incorrect results, or handcraft \emph{ad hoc} approximations for specific functions or platforms.  We propose a general technique, \sysname\footnote{The source code is released in \url{https://github.com/Fannxy/NFGen}}, that utilizes pre-computed \emph{discrete piecewise polynomials} to accurately approximate generic functions using fixed-point numbers.  We implement it using a performance-prediction-based code generator to support different platforms.  Conducting extensive evaluations of 23 non-linear functions against six MPC protocols on two platforms, we demonstrate significant performance, accuracy, and generality improvements over existing methods. 


\end{abstract}


\begin{CCSXML}
  <ccs2012>
  <concept>
  <concept_id>10002978.10002991</concept_id>
  <concept_desc>Security and privacy~Security services</concept_desc>
  <concept_significance>500</concept_significance>
  </concept>
  </ccs2012>
\end{CCSXML}
\ccsdesc[500]{Security and privacy~Security services}

\keywords{Secure Multi-Party Computation (MPC), Non-linear Function Evaluation, Automatic Code Generation} 

\maketitle

\section{Introduction}\label{sec:intro}

Privacy-preserving computation, especially \emph{secure multi-party computation (MPC)}, has attracted a lot of attention in both academia and industry.  They provide a promising trade-off between mining the data and privacy protection.  
People have proposed many \emph{general-purpose MPC platforms}~\cite{keller2020mp, li2019privpy, mohassel2017secureml, knott2021crypten, bogdanov2008sharemind, tan2021cryptgpu} that provide high-level abstractions and practical performance, allowing people to develop secure data processing applications without understanding the details of underlying MPC protocols.   

Most platforms use a version of \emph{secret sharing (SS)} protocols to build basic secure operations like \addop, \mulop, and comparison (e.g., \gtop), and then construct complex functions by composing them, just like writing plaintext expressions. The security of compound operations/functions is guaranteed by the \emph{universal composability}~\cite{canetti2001universally} of these protocols.  These platforms usually provide built-in support for common non-linear functions such as reciprocal (\divop, for real number divisions), exponential (\expop), logarithm (\logop), and square root (\sqrtop).   They implement these functions either using generic numerical methods (e.g. the \emph{Newton method}) or adopting protocol-specific algorithms like in~\cite{rathee2021sirnn,damgaard2019new}.

It remains a big challenge, however, to support the large variety non-linear functions in scientific computing and machine learning, such as $\chi^2 test$ and \sigmoid.  The naive approach is to compose them with built-in functions.  E.g., we can compute $tanh(x) = \frac{e^{x} - e^{-x}}{e^{x} + e^{-x}}$ by composing two \expop, one \divop and two \addop.  We refer to this approach as \emph{direct evaluation}. Unfortunately, there are four pitfalls.  

\smallpara{Pitfall 1: Correctness and precision. }
Most practical MPC platforms use \emph{fixed-point (FXP)} numbers instead of the common \emph{floating-point (FLP)} ones for efficiency.  Although there are attempts to support FLP in MPC~\cite{aliasgari2013secure}, the low performance for \addop prevents people from adopting it.  FXP still dominates the practical MPC platforms~\cite{keller2020mp, li2019privpy, mohassel2017secureml, knott2021crypten, tan2021cryptgpu, bogdanov2008sharemind}.  Unfortunately, ignoring the differences between FXP and FLP leads to two severe issues:

First, FXP supports a \emph{much smaller range and resolution} than FLP, leading to more overflow/underflow as well as precision loss.  Even worse, FXP cannot represent $NaN$ and $Inf$ like FLP.  Also, the inputs/outputs on MPC are in ciphertext, so there is no way to detect overflows.  Using $tanh$ as an example, even if the function has a range of $[-1, 1]$, the intermediate results, $e^{x}$ and $e^{-x}$, can easily overflow when $|x|$ is large.  In plaintext, people use a \emph{scaling conversion} to enlarge the range of FXP, but in MPC, the encrypted $x$ makes scaling costly. 
In fact, the built-in $tanh$ function in MP-SPDZ~\cite{keller2020mp} gives wrong results if $|x| > 44$, even if we increase FXP width to 128 bits.  
Second, each non-linear function has a precision loss that \emph{accumulates} if we compose them with multiple steps.  Section~\ref{exp:accuracy} shows more examples of both issues.

\smallpara{Pitfall 2: Performance. }
Even with aggressive optimizations, non-linear function evaluation takes significant time using MPC.  Depending on the platform, they can be orders of magnitude slower than the plaintext version.  We measure the relative performance between non-linear functions vs. basic operations in 6 MPC protocols and observe dramatic differences (Table~\ref{Tab:settings} in Section~\ref{settings}).  Composing these functions sequentially makes things even slower.  

\smallpara{Pitfall 3: Generality. }
Although we can write many non-linear functions using the built-ins, some functions are hard to implement.  For example, functions $\gamma(x, z)$, $\Gamma{(x, z)}$ and $\Phi{(x)}$ are defined as integrals.  It is very tedious, if not impossible, to implement them in MPC using numerical methods and built-in operations.  

\smallpara{Pitfall 4: Portability. }
Even if we can afford the engineering effort to build a complete scientific computation library, there are too many performance trade-offs to make it portable to different MPC systems and applications.  
MPC systems use a variety of protocols (to support different security assumptions), number representations and sizes, as well as custom programming languages.  Also, they are deployed in different hardware/software/network environments.  We want the computation library to maintain efficiency in all cases with minimal porting efforts. 

In this paper, we offer a new scheme, \emph{non-linear function code generator (\sysname)}, to evaluate non-linear functions on general-\\purpose MPC platforms.  We approximate each non-linear function using an $m$-piece \emph{piecewise polynomial} with max order $k$ (we automatically determine $k$ and $m$).  
Our approach has three advantages.   First, it only uses secure \addop, \mulop and \gtop operations supported by all popular MPC platforms, and we can prove that the security properties are the same as the underlying platform with the same adversarial models.  Second, the evaluation is \emph{oblivious}, i.e., the operation sequence is not dependent on input data, allowing for predictable running time and avoiding timing-related side-channels.  Third, obtaining the approximation is independent of input data and hence can be precomputed on plaintext.  

\smallpara{Key challenges.  } 
Finding a good \emph{$(k,m)$-piecewise polynomial} approximation for MPC platforms is a challenging problem. 
First, fitting polynomials for FXP computation introduces many challenges: 
1) the polynomial needs to meet both the \emph{range} and the \emph{resolution} requirements of FXP, making sure all intermediate steps neither overflow nor underflow; 
2) an FXP is essentially an integer, making the polynomials discrete. Fitting one polynomial minimizing the approximation error is an NP-complete \emph{integer programming (IP)} problem. Thus we need to find an approximation.
Second, there is a trade-off between $m$ and $k$: whether we get more pieces (mainly leading to more \gtop's) or use a higher-order polynomial (leading to more \mulop's). The choice depends on the performance of the specific MPC system, as we discussed above. 

\smallpara{Method overview. } We compute the polynomial fitting as a pre-computation step in plaintext.  
In a nutshell, we first construct a polynomial with order $k$ in FLP using \emph{Chebyshev Interpolation}~\cite{trefethen2019approximation} (or \emph{Lagrange Interpolation}~\cite{mathworld2015lagrange} for corner case) and then discretized it into FXP.  We check the accuracy of the FXP polynomial using random data samples.  If it does not achieve the user-specified accuracy, we split the input domain into two and recurse on each smaller domain.  
We design a series of algorithms leveraging the FLP capability to help find a better FXP approximation, like using two FXPs to expand the \emph{range} and improving the precision through residual functions.  Section~\ref{sec:NFCG-pkm} provides the details of the algorithms. 

At runtime, we evaluate the $(k,m)$-piecewise polynomial in an \emph{oblivious} way, i.e. the execution only depends on $(k,m)$, but not input data.  We design an \emph{oblivious piecewise polynomial evaluation (OPPE)} algorithm (Section~\ref{sec:oppe}).   
To get a good $(k, m)$ trade-off on different MPC platforms,   
\sysname uses a profiler to collect performance metrics of a specific deployment of an MPC platform and learns a model to predict the performance with different $(k, m)$.  \sysname automatically makes the choice using the prediction and generates MPC-platform-specific OPPE code using built-in code templates.  \sysname provides templates for both PrivPy~\cite{li2019privpy} and MP-SPDZ~\cite{keller2020mp}. 
The template is the only platform-dependent part in \sysname, and it only takes a short template to port \sysname to a new MPC platform. 

\smallpara{Evaluation results.  } We evaluate \sysname against 6 secret sharing protocols using $15$ commonly used non-linear functions on both PrivPy~\cite{li2019privpy} and MP-SPDZ~\cite{keller2020mp}.  
We observe significant performance improvements over baselines (direct evaluation) in $93\%$ of all cases with an average speedup of $6.5\times$ and a maximum speedup of $86.1\times$.  
\sysname saves $39.3\%$ network communications on average.
We also show that we can avoid the overflow errors and achieve much better accuracy comparing with the baseline, and allow a larger input domain even with small FXP widths.  
Using logistic regression (LR) as an example, we demonstrate performance and accuracy improvements over direct evaluation and \emph{ad hoc} approximations, with $3.5\times$ speedup and $0.6\%-14\%$ accuracy improvements.  Additionally, we illustrate how \sysname helps users to easily implement otherwise hard-to-implement functions on MPC by using 8 complex functions defined as integrals and the \emph{$\chi^{2}$ test} on real data.

In summary, our contributions include:

1) We propose a series of algorithms to fit an effective \emph{piecewise polynomial} approximation on plaintext, fully considering the differences between FLP and FXP representations.  

2) We design and implement the code generator with automatic profiling to allow portability to different MPC systems with distinct performance characteristics. 

3) We conduct comprehensive evaluations against six MPC protocols on two platforms over 23 non-linear functions, showing significant improvements in performance, accuracy, portability to different MPC platforms, usability and savings in communications. 

\begin{figure*}[t]
	\centering
	\includegraphics[width=0.9\textwidth]{./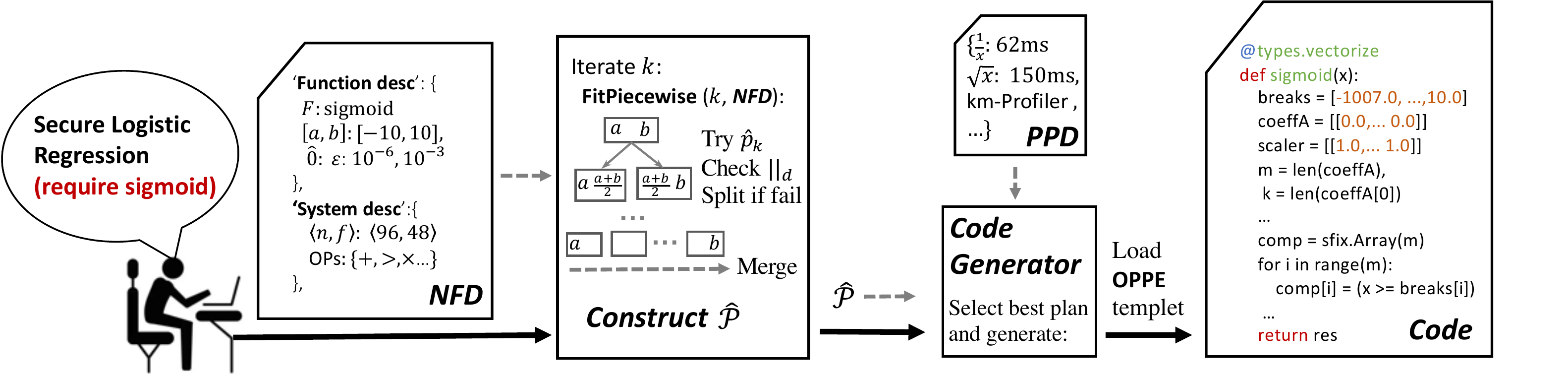}
	\vspace{-0.2in}
	\caption{End-to-end Workflow of \sysname}
	\label{Fig:workflow}
\end{figure*}

\section{Background and Related work}

In this section, we first introduce the background of general-purpose MPC platforms and FXP numbers for readers unfamiliar with this area and then review related works. 

\subsection{General-purpose MPC Platforms}\label{sec:back-knowledge}

General-purpose MPC platforms usually provide a high-level programming front-end with a compiler/interpreter to translate a high-level code into a series of cryptographic building blocks~\cite{hastings2019sok}.  
They provide common operators like \addop, \mulop, \gtop and \divop in the front-end, and implement these operations using MPC protocols.  These platforms guarantee privacy in the end-to-end algorithms based on the \emph{universal composability} of the underlying security protocols.  Example platforms include MP-SPDZ~\cite{keller2020mp}, PrivPy~\cite{li2019privpy}, CrypTen~\cite{knott2021crypten}, SecureML~\cite{mohassel2017secureml}, ABY~\cite{demmler2015aby}, and CryptGPU~\cite{tan2021cryptgpu} etc.  They hugely lower the barriers of developing privacy-preserving applications. 

Most of the general-purpose MPC platforms use \emph{secret sharing (SS)}~\cite{demmler2015aby, li2019privpy, knott2021crypten, mohassel2017secureml, tan2021cryptgpu} as the underlying protocol.  There are a range of security assumptions (e.g., semi-honest vs. malicious) in these protocols, resulting in significantly differing performance.  Some platforms allow users to choose the underlying protocols, like~\cite{keller2020mp}.  Our goal is to make \sysname protocol-agnostic.

Secret sharing protocols require communications among the participants on each operation, which results in significantly slower performance than plaintext (usually $10\times - 1000\times$ slower, depending on the protocol and network environment).  Also, different operators in the same protocol exhibit vast performance difference.  For example, in the most common \emph{additive} secret sharing protocol, \addop is almost as fast as plaintext as it is communication-free.  However, \mulop and \gtop require communication and are slower.  Non-linear operations like \divop and \sqrtop operate more slowly.  It is important to make \addop fast as it is the most common operation in many applications.

To make \addop fast, MPC systems need to avoid floating point (FLP) numbers because the ciphertext \emph{exponent} in FLP prevents us from adding up FLPs directly.  Thus, almost all practical MPC platforms use fixed-point (FXP) numbers~\cite{mohassel2017secureml, tan2021cryptgpu, knott2021crypten, mohassel2018aby3, li2019privpy, catrina2010secure}.  
FXP represents each real number as an $n$-bit integer, out of which $f$ least significant bits represent the fraction, and the most significant bit is the sign.  
We denote it as a $\langle n,f \rangle$-FXP number.  In the $\langle n,f \rangle$ format, both the \emph{range} and \emph{resolution} of the FXP are fixed to $2^{n-f-1}$ and $2^{-f}$, respectively.  
FXP offers a much smaller range and resolution than FLP, and thus programmers need to be more careful with overflows and precision losses.  MPC further complicates the problem as we cannot detect overflows on ciphertext.  Also, the common scaling method in plaintext FXP requires additional bit operations in ciphertext and thus becomes expensive in MPC.

\subsection{Non-linear Functions in MPC Platforms}

Lacking of a complete general numeric computation library, MPC application developers need to roll their application-specific solutions even for common functions, like logistic regression (LR)~\cite{han2020efficient}, decision trees~\cite{lindell2000privacy, damgaard2019new}, principal component analysis (PCA)~\cite{fan2021ppca} and neural networks (NNs)~\cite{rathee2021sirnn, tan2021cryptgpu, knott2021crypten}.  
Mohassel \etal~\cite{mohassel2017secureml, mohassel2018aby3} use a 3-piece linear function to replace the slow \sigmoid function in LR and NNs.
However, there is no guarantee that the approximation maintains accuracy in all cases.  In fact, we find cases with significant LR accuracy loss using this approximation. 

The most straightforward method to evaluate non-linear functions in MPC is directly adapting the plaintext code and replacing basic operations to secure ones (e.g., \addop, \mulop and \gtop).  For example, CrypTen~\cite{knott2021crypten} and CryptGPU~\cite{tan2021cryptgpu} adopt a series of plaintext algorithms like \emph{Newton-Raphson iterations}, \emph{limit approximations} to implement \divop, \expop, \sqrtop and \logop. 
Natively adopting plaintext code usually leads to poor performance, 
as we do not know when we have reached the desirable accuracy and need to iterate more. 

There are also protocol-specific approaches.  For example, Rathee \etal\cite{rathee2021sirnn} implement $10$ efficient cryptographic building blocks to support high-performance non-linear functions like \textit{reciprocal-of-sqrt}, \textit{sigmoid} and \textit{exponent}. However, these functions are optimized for their 2-party-computation platform only.  Similiarly, Damg{\aa}rd \etal\cite{damgaard2019new} propose a series of primitives to support efficient machine learning applications for SPD$\mathbb{Z}_{2^{k}}$ protocol~\cite{cramer2018spd}. 
More generally, Catrina \etal~\cite{catrina2010secure} design a collection of general building-blocks suitable for any fixed-point MPC platform and propose an optimized \divop primitive using \emph{Goldschmidt} algorithm on secret FXP. MP-SPDZ~\cite{keller2020mp} uses the same algorithm.

Another line of methods is to approximate non-linear functions with polynomials.  It is a well-studied problem in plaintext of finding an optimal polynomial approximating a given function \f, minimizing the maximum difference over a given input domain $[a, b]$.   
Such polynomial is referred to as \textit{minimax polynomial}~\cite{atkinson2005theoretical}.
\emph{Chebyshev interpolation}~\cite{trefethen2019approximation} is a well-known solution.  It offers a close approximation to the \emph{minimax polynomial} (so-called \emph{Chebyshev (near) minimax polynomial})~\cite{atkinson2005theoretical, trefethen2019approximation}.  Hesamifard \etal\cite{hesamifard2018privacy} adopt the Chebyshev polynomial to evaluate \sigmoid on homomorphic encryption.  However, the range of FXP limits the order $k$ of the polynomial and prevents it from reaching the desired approximation accuracy either. 
Another challenge is that the limited resolution of FXP cannot capture sometimes-tiny coefficients in these polynomials.  Previous work ignores this problem and leads to large errors~\cite{hesamifard2018privacy}.  Boura \etal~\cite{10.1007/978-3-662-58387-6_10} propose to use \emph{Fourier series} to approximate \sigmoid in MPC platform. They work around the range and resolution issues using secure quadruple-precision FLP but brings in expensive computation overhead.

Unlike previous efforts, we adopt \emph{piecewise polynomial} and fit a Chebyshev polynomial for each piece, resulting in improved accuracy.  
We take into account the differences between FXP and FLP and handle corner cases.  Also, prior works focus on one or a few functions on a single MPC platform, whereas our goal is to build a general, cross-platform solution for all \emph{Lipschitz-continuous} functions\footnote{Even if the functions are Lipschitz continuous, there is no theoretical guarantee that we will find a feasible approximation for \emph{all} functions on any accuracy requirements. However, empirically, \scriptsize{\sf NFGen} \footnotesize{works well on all the functions we tested.}}. 
\section{Overview}\label{sec:overview}

\smallpara{Notations and assumptions.  }
We first describe the notations and assumptions we use throughout the paper. 
We assume \f is a Lipschitz continuous function, and we can evaluate \f in plaintext using FLP. 
\sysname approximates \f by finding a set of feasible \emph{piecewise polynomial}s with different max order $k$ and the number of pieces $m$.  
We denote the set of all feasible polynomials as \Pkm $= \{\text{\pkm}\}$.  
Each of the \pkm contains $m$ pieces covering the entire input domain of $[a, b]$ as $[w_0, w_1, \dots,  w_j, \dots, w_m]$ ($w_0 = a$ and $w_m = b$).  In each piece $j, (i.e., [w_{j-1}, w_{j}])$, we have a polynomial $\hat{p}^{(j)}_{k}(\xhat) = \sum_{i=0}^{k}{\chat_{i}\xhat^{i}}$ to approximate it.  When the piece index $j$ is not important, we use a shorthand of \pk$(x)$ to denote it.  
Throughout the paper, symbols with a \emph{hat} (${\hat{\cdot}}$) denote $\langle n,f \rangle$-FXP representable variables and the definition of functions like \pkm means all the coefficients and terms are in $\langle n,f \rangle$-FXP representation (Section~\ref{sec:back-knowledge}).
We assume 64-bit double-precision FLP is equivalent to $\mathbb{R}$. It is a safe assumption in our situation, considering both the resolution and range of FLP are orders of magnitudes larger than FXP of our concern. 

\smallpara{\titlesysname input files.  }
\sysname generates non-linear function approximation code for different MPC platforms according to two input files.  The first is a user-provided \emph{non-linear function definition (NFD)}, containing the expression of the target function \f, its domain $[a, b]$, FXP format $\langle n, f\rangle$, target accuracy $\epsilon$ and $\zerohat$ (in Eq.~\ref{eq:rela-dis}), and a list of the operators supported by the target MPC platform.  The users can generate the second input file, \emph{performance profile definition (PPD)}, running a \sysname-provided profiler on the target MPC platform deployment.  Note that NFD is MPC-platform-specific but independent of the actual \emph{deployment} (i.e., the CPU and networking configurations), and PPD describes the deployment.  Separating them allows better portability across different deployments. 

\smallpara{Workflow.  }
Figure~\ref{Fig:workflow} illustrates \sysname workflow.  \sysname first reads in the NFD file, and on plaintext runs the algorithms in Section~\ref{sec:NFCG-pkm} to fit the set of \Pkm with different $k$ and $m$ settings.  Then using the PPD file, \sysname chooses one \pkm $\in$ \Pkm with the $k$ and $m$ that maximizes performance on the specific deployment (Section~\ref{sec:profiler}).  Finally, \sysname outputs the generated code that runs just like any user-defined function on the target MPC system, using a set of pre-defined, platform-dependent code templates.

\smallpara{Requirements for a feasible \pkm.  }
Given an MPC platform with $\langle n,f \rangle$-FXP, the target function \f on input domain $[a,b]$, a feasible \pkm needs to meet the following three conditions.  

1) The evaluation of \pkm should only consist of provided operators in the target MPC platform.  This requirement is usually true as all we need are basic operators of \addop, \mulop and \gtop.

2) All the intermediate results in \pkm, including polynomial coefficients $\chat_i$ and $\xhat^i$ terms, etc., should be representable in $\langle n,f \rangle$-FXP without overflow or underflow.  

3) For \emph{all} $\xhat$ in the range [$a$, $b$], the \pkm should approximate \f with high accuracy, so we need to bound the \emph{max error} of the approximation rather than the mean error.  We measure the approximation error using the \emph{soft relative distance (SRD)}, 
\begin{equation}\label{eq:rela-dis}
|x - y|_d = 
\begin{cases}
|x - y| / |x|, &  |x| > \hat{0} \\
|x - y|, & |x| \leq \hat{0}
\end{cases}, \hat{0} < \epsilon, 
\end{equation} 
and require
\begin{equation}~\label{cons: accuracy}
	\max_{\xhat \in [a, b]}|F(\xhat) - \text{\pkm}(\xhat)|_d \leq \epsilon. 
\end{equation}

The $\hat{0}$ is the \emph{soft zero}.   We use soft zeros because the relative error can be very large when $|x|\rightarrow 0$.  For example, for $x = 2^{-50}$ (in FLP), a good representable approximation in a $\langle96, 48\rangle$-FXP, $\xhat=2^{-48}$ gives a large relative error of $2^{2}-1 > \epsilon$.
To avoid ruling out these good-performance approximations, we switch to bound the \emph{absolute error} instead when $|x|\leq\hat{0}$. By default, we set $\epsilon=10^{-3}$ and $\hat{0}=10^{-6}$.
We further relax the accuracy definition to maximum \emph{sample} SRD rather than the true maximum SRD by computing the max error over a sample set of $\xhat \in [a,b]$ for practical performance.  Indeed, we prove that for any Lipschitz continuous function \f, the true maximum SRD can be bounded by the sampled version, for details, see Appendix~\ref{appendix:precision}.  Empirically, we find that a modest sample set (e.g. 1000 per piece) is frequently sufficient (Section~\ref{exp:accuracy}).  

\section{Non-linear approximation}
\label{sec:NFCG-pkm}

The core of \sysname is to fit a set of piecewise polynomials \Pkm that estimate the function \f in plaintext.  All \pkm $\in$ \Pkm can be evaluated obliviously in ciphertext. 
We first introduce the overall algorithm that recursively finds a good $m$ for a given $k$ if possible.  Then we focus on the algorithm that fits a $\hat{p}_{k}$ for a single piece, which is the most challenging part due to FXP limitations.  Finally, we introduce the oblivious evaluation algorithm.

\begin{algorithm}[t]
    \small
    \SetKwInOut{Input}{\textbf{Input}}
    \SetKwInOut{Config}{\textbf{Global config}}
    \SetKwInOut{GlobalState}{\textbf{Global state}}
    \SetKwInOut{Return}{\textbf{Return}}
    \SetKwFunction{Exit}{\textbf{Exit}}
    \caption{$\mathsf{FitPiecewise}$ Algorithm}
    \label{Algo:sampledpoly}
    \Config{FXP format $\langle n,f \rangle$, max sampling numbers $MS$ and max pieces $m_{max}$}
    \GlobalState{The fitted piecewise polynomial \pkm}
    \Input{Target function \f, input domain $[a, b]$ and order $k$ }
    \BlankLine
    Initialize piece counter $m_{c} \leftarrow 0$ \; 
	\If{$\hat{p}_{k} \leftarrow \mathsf{FitOnePiece}(F, [a, b], k)$ is NOT \code{Null}}{
        Add $\hat{p}_{k}$ to global state \pkm\;
    }\Else{
        $\mathsf{FitPiecewise}$($F$, $[a, \frac{a+b}{2}]$, $k$) \;
        $\mathsf{FitPiecewise}$($F$, $[\frac{a+b}{2}, b]$, $k$) \;
        $m_c += 1$\;
        \textbf{If} $m_c > m_{max}$: \textbf{Exit}\; 
    }
    \BlankLine
    $i \leftarrow 0$ \;
    \While{$i$ not reach the tail of \pkm}{
        $a, b \leftarrow w_{i}, w_{i+2}$ in \pkm \;
        \uIf{$\hat{p}_{k} \leftarrow \mathsf{FitOnePiece}(F, [a, b], k)$ is NOT \code{Null}}{
            Replace $\hat{p}_{k}^{(i)}$ and $\hat{p}_{k}^{(i+1)}$ with single $\hat{p}_{k}$ \;
        }\Else{
            $i += 1$ \;
        }
    }
    \textbf{Exit}
    \vspace{-0.05in}
\end{algorithm}

\subsection{Fitting Piecewise Polynomials}

The cost for evaluating \pkm is mostly for computing 1) the $k$-th order polynomial and 2) deciding which of the $m$ pieces that input $\xhat$ belongs to.  Each \pkm has $(m \times k)$ parameters.  We want to determine $m$ and $k$ automatically.  As different MPC platforms may have different \gtop and \mulop performance, we want to generate multiple plans with different $(k, m)$ choices and let the latter stages (Section~\ref{sec:profiler}) to decide which one to use.  
This step only takes 2-3 seconds on plaintext in most of our experiments.  

We iterate through a number of $k$ values.  For each $k$, we use Algorithm~\ref{Algo:sampledpoly} to find an $m$ piece polynomial as a candidate if possible.  

We start with the user-defined domain $[a, b]$.  We fit a best-effort $k$th-order polynomial \pk using \code{FitOnePiece} subroutine (Algorithm~\ref{Algo:fitonepoly}) minimizing the maximum absolute error (Line 2).  The fitting process is quite involved as we need to deal with the limited range and resolution of FXPs.  We leave the details of fitting \pk to Section~\ref{sec:single-poly}. 

\code{FitOnePiece} (Algorithm~\ref{Algo:fitonepoly}) returns a feasible \pk satisfying both representability and accuracy constrains in domain $[a, b]$ if it successfully finds it. If it failed, it returns \code{Null},  which means that the order $k$ is not enough to fit \f in domain $[a, b]$, and thus we split $[a, b]$ into $[a, (a+b)/2]$ and $[(a+b)/2, b]$ and recurse on each smaller ranges (Line 5-10).

As a final step, we try to merge adjacent pieces because splits may result in unnecessary pieces.  For each adjacent pair of pieces, we try \code{FitOnePiece} (Algorithm~\ref{Algo:fitonepoly}) again to fit a single polynomial in the combined range with the accuracy requirement satisfied (Line 11-19).  Finally, we get the set of all $m$ $k$th-order polynomials \pk's, constructing candidate \pkm.  

The algorithm eventually terminates, either when $m$ exceeds the limit $m_{max}$ (Line 9), or finds a \pkm that passes the accuracy test.  There is no theoretical guarantee that the algorithm will find a feasible solution or guarantee for the optimality.  However, it works well empirically on all functions in our tests (Section~\ref{exp:accuracy}).

\subsection{Fitting Polynomial for One Piece }\label{sec:single-poly}
We introduce the core part of Algorithm~\ref{Algo:sampledpoly}, the \code{FitOnePiece} function in Algorithm~\ref{Algo:fitonepoly}, fitting a single \pkhat in the domain of $[a, b]$.

\begin{algorithm}[tb]
	\small
	\SetKwInOut{Input}{\textbf{Input}}\SetKwInOut{Return}{\textbf{Return}}
	\SetKwFunction{Exit}{\textbf{Exit}}
	\SetKwFunction{constrainK}{$\mathsf{ConstrainK}$}
	\caption{$\mathsf{FitOnePiece}$ Algorithm}
	\label{Algo:fitonepoly}
	\Input{Target function $F(x)$, domain $[a, b]$ and order $k$}
	\Return{Feasible discrete polynoimial \pk or \code{Null}}
	\BlankLine
	$\bar{k} \leftarrow \mathsf{ConstrainK}([a, b], \langle n,f \rangle)$ \tcc{\textbf{1) Constrain k}}
	\BlankLine
	\tcc{\textbf{2) Fit best polynomial in FLP space}}
	Maximum representable points $\mathsf{N} \leftarrow \frac{b-a}{2^{-f}}$ \;
	\uIf{$\mathsf{N} > \bar{k}+1$}{
		$p_{\bar{k}} \leftarrow \textit{Cheby-Interpolation}(F, [a, b], \bar{k})$
	}\Else{
		$\bar{k} = N-1$ \;
		$p_{\bar{k}} \leftarrow \textit{Lagrange-Interpolation}(F, \mathsf{N}\, \text{feasible points and}\, \bar{k})$
	}
	\tcc{\textbf{3) \& 4) Convert to FXP space}}
	$\hat{p}_{\bar{k}} \leftarrow \mathsf{ScalePoly}(p_{\bar{k}}, [a, b])$ (Algo~\ref{Algo: scalepoly}) \;
	$\hat{p}_{\bar{k}} \leftarrow \mathsf{ResidualBoosting}(\hat{p}_{\bar{k}}, F, [a, b])$ (Algo~\ref{residual-boosting}) \;
	$\hat{p}_{k}$: Expand coefficients and scaling factors of $\hat{p}_{\bar{k}}$ to $k$, filling 0 \;	

	\tcc{\textbf{5) Check accuracy, return valid $\hat{p}_k$ or \code{Null}}}
	Sampled number $\mathsf{N_{s}} \leftarrow \min(MS, \mathsf{N})$ \;
	$\hat{\mathsf{X}} \leftarrow \mathsf{FLPsimFXP}(\mathsf{Linspace}([a, b], \mathsf{N_{s}}))$ \;
	\uIf{$\max_{\xhat \in \hat{\mathsf{X}}}|\hat{p}_{k}(\xhat) - F(\xhat)|_d < \epsilon$}{
		\textbf{Return: } $\hat{p}_{k}$ \;
	}
	\textbf{Return: } \code{Null} \;

\end{algorithm}

\smallpara{Problem Definition.  }
We want to find a $k$th-order polynomial $\hat{p}_k = \sum_{i=0}^{k}(\hat{c}_i \xhat^{i})$ that approximates \f over the domain of $\xhat \in [a, b]$, minimizing the max error.  
It is important that we limit the max error instead of the mean error to avoid occasional wrong results.  Formally, we define the following optimization problem. 
\begin{equation}~\label{eq:IQP}
\begin{aligned}
 \text{Minimize} & \max_{\xhat\in [a, b]}|F(\xhat) - \hat{p}_{k}(\xhat)| \\
 s.t., \, & \hat{p}_k = \sum_{i=0}^{k}(\hat{c}_i \hat{x}^{i}) \\ 
\end{aligned}
\end{equation}

It is easy to show that Eq.~\ref{eq:IQP} is an NP-Complete \emph{integer programming (IP)} problem, because 
\begin{equation}
\begin{aligned}
\hat{p}_k(\xhat) & = \sum_{i=0}^{k}\hat{c}_i \xhat^{i} =  \sum_{i=0}^{k}((\hat{c}_i \cdot 2^{f}) \cdot \frac{\xhat^{i}}{2^f}) 
 = \sum_{i=0}^{k}(\bar{c}_i \cdot \frac{\xhat^{i}}{2^f}), 
\end{aligned}
\end{equation}
where coefficients $\bar{c}_i$'s are $n$-bit integers.   We present an effective approximation by firstly solve the optimal polynomial $p_k$ in continuous space and then discretize it to $\hat{p}_{k}$ and optimize it in FXP.

\smallpara{Issues in FXP approximation.  }
~\label{sec: challenge}
FLP offers a much larger range compared with FXP. For $64$-bit double precision FLP, the representable range is from $-2^{1024}$ to $2^{1024}$ (IEEE754 standard~\cite{8766229}), while even for $\langle 128, 48\rangle$-bit FXP, the range is only $-2^{79}$ to $2^{79}$.  Thus, overflows are more common in FXP.  
For precision, double precision FLP can represent the smallest number of $2^{-1023}$, while FXP only has a fixed $f$-bit resolution. Any number smaller than $2^{-f}$ is rounded off to zero.
Unfortunately, prior MPC algorithms do not handle FXP correctly, leading to wrong results even if both the domain and range are representable. 

Specifically, We need to find a $\hat{p}_{k}= \sum_{i=0}^{k}\hat{c}_i\hat{x}^i$ in FXP to approximate $p_{k}=\sum_{i=0}^{k}c_ix^{i}$ in FLP and avoid the following three issues.  

\textbf{Issue 1)} $\hat{x}^k$ can overflow if $|\xhat|$ is too large, or underflow if $\xhat$ is close to zero, especially with a large $k$.  

\textbf{Issue 2)} When a coefficient $\hat{c}_i$ gets small, we need to use many of the $f$ bits to represent the leading $0$'s, losing significant bits, and even causing an underflow if $|\hat{c}_i| < 2^{-f}$.  However, $\xhat^i$ may be still large and we need an accurate $\hat{c}_i$ for $\hat{c}_i\xhat^i$ to approximate $c_ix^i$ in the continuous space.  
In fact, we observe that $c_i$ tends to be small when $i$ is close to $k$.  Intuitively, there is a relationship between the smoothness of target functions and their polynomial approximations.  The smoother the target function is, the faster its polynomial approximation converges (coefficient $|c_n| \rightarrow 0$ with $n \rightarrow \infty$). Theoretically,  \cite{trefethen2019approximation} shows that for Chebyshev polynomials, the absolute value of $k$th-order coefficient $|c_k|$ is inversely proportional to the exponent of its order $k$, presenting a quick descending rate. 

\textbf{Issue 3)} Converting all parameters into FXP involves many roundings to evaluate the polynomial (rounding the fractional parts beyond the $f$ bits to 0, in both computing $x$ to the $k$th power and adding-up all terms)\footnote{The conversion is done through \code{FLPsimFXP} (Algo~\ref{flpsimfxp}), which is analyzed in Appendix~\ref{proof:FLPsimFXP}}, hurting the precision.

\smallpara{Our solution.  }
Algorithm~\ref{Algo:fitonepoly} outlines our solution.  The algorithm firstly uses \emph{Chebyshev interpolation} or \emph{Lagrange interpolation} to find the optimal polynomial in the continuous space (represented by double-precision FLPs) and transfer the polynomial to discrete space (represented by FXPs) while avoiding the above issues.

\smallpara{Step 1: Constraining $k$ to avoid overflow (Issue 1). }
First, to avoid $\xhat^k$ overflow or underflow (issue 1),
we find the max feasible \kbar $\leq k$, guaranteeing that $\xhat^\text{\kbar} \text{does not overflow or underflow}\, \forall \xhat \in [a, b]$ (Line 1 in Algorithm~\ref{Algo:fitonepoly}).
In \code{ConstrainK} (Algorithm~\ref{Algo:constrainK}), overflow is easy to constrain, as we only need $(|\hat{x}|_{max})^{k_{O}} \leq 2^{n-f-1}$, or $k_{O} \leq \frac{n-f-1}{\log_2{|\hat{x}|_{max}}}$ if $|\hat{x}|_{max} > 1$ (Line 3).  Underflow is more involved, as if $0 \in [a, b]$, $|\xhat|$ can be arbitrarily close to zero.  In such case, we heuristically limit $k_{U}$ to $3$ (Line 4).  Otherwise, we need $(|\hat{x}|_{min})^{k_{U}} \geq 2^{-f}$, or $k_{U} \leq \frac{f}{(-\log_2{|\hat{x}|_{min}})}$ if $|\hat{x}|_{min} < 1$ (Line 5). The max feasible $\bar{k}$ is the smallest number among $\{k, k_{O} \,\text{and}\, k_{U}\}$.

\begin{algorithm}[tb]
	\small
	\caption{$\mathsf{ConstrainK}$}\label{Algo:constrainK}
	\SetKwProg{Fn}{Function}{:}{}
	\SetKwInOut{Return}{\textbf{Return}}
	\Fn{\constrainK{domain $[a, b]$, FXP format $\langle n,f \rangle$}}{
		\BlankLine
		$|\hat{x}|_{max} \leftarrow \max(|a|, |b|)$ and $|\hat{x}|_{min} \leftarrow \min(|a|, |b|)$ \;
		$k_{O} \leftarrow k \, \text{if} \, (|\hat{x}|_{max} < 1)\, \text{else}\, \frac{n-f-1}{\log_2{(|\hat{x}|_{max})}}$ \; 
		\textbf{If} $a \cdot b < 0$ \textbf{then} $k_{U} \leftarrow 3$\; 
		\textbf{Else} $k_{U} \leftarrow k \, \text{if} \, (|\hat{x}|_{min} > 1)\, \text{else}\, \frac{f}{-\log_2{(|\hat{x}|_{min})}}$ \;
		\Return{Maximum feasible $\bar{k} \leftarrow \min(k, k_{O}, k_{U})$}
	}
\end{algorithm}

\begin{algorithm}[tb]
	\small
	\SetKwInOut{Input}{\textbf{Input}}\SetKwInOut{Return}{\textbf{Return}}
	\caption{$\mathsf{FLPsimFXP}: x \rightarrow \hat{x}$}\label{flpsimfxp}
	\Input{$x \in \mathbb{R}$ and FXP format $\langle n,f \rangle$.}
	\Return{$\hat{x}$}
	\textbf{If} $(|x| > 2^{n-f-1})$: \textbf{Return: } $2^{n-f-1}$ \;
	\textbf{If} $(|x| < 2^{-f})$: \textbf{Return: } $0$ \;
	\textbf{Return:} $\mathsf{round}_2(x, f)$
\end{algorithm}

\begin{algorithm}[tb]
    \small
    \normalem
    \SetKwInOut{Input}{\textbf{Input}}\SetKwInOut{Return}{\textbf{Return}}
    \SetKwFunction{Initialize}{\textbf{Initialize}}
    \SetKwFunction{scalec}{$\mathsf{ScaleC}$}
    \caption{$\mathsf{ScalePoly}: p_{k} \rightarrow \hat{p}_{k}$}
    \label{Algo: scalepoly}
    \Input{$p_{k} = \sum_{i=0}^{i=k}(c_{i} \cdot x^{i})$ in continuous space, domain $[a, b]$.}
    \Return{$\hat{p}_{k} = \sum_{i=0}^{i=k}(\hat{c}_{i} \cdot x^{i} \cdot \hat{s}_{i})$ in discrete space.}
    \BlankLine
    Character $\xhat \leftarrow \max{(|a|, |b|)}$, most likely to overflow \;
    \For{$i \leftarrow 0$ \KwTo $k$}{
        $\hat{c}_{i}, \hat{s}_{i} \leftarrow \mathsf{ScaleC}(c_{i}, \langle n,f \rangle, i, \xhat)$ \;
    }
    \textbf{Return} $\hat{p}_{k}$.
    \BlankLine
    \SetKwProg{Fn}{Function}{:}{}
    \Fn{\scalec{\rm $c$, $\langle n,f \rangle$, order $k$ and $\xhat$}}{
    \BlankLine
    $\hat{s}_{CUF} \leftarrow 2^{-f}$ \;
    $\hat{s}_{COF} \leftarrow \mathsf{FLPsimFXP}(\frac{c \xhat^{k}}{2^{n-f-1}}, n, f)$ \;
    $\hat{s} \leftarrow \min{\{\max{(\hat{s}_{CUF}, \hat{s}_{COF})}, 1\}}$ \;
    $\hat{c} \leftarrow \mathsf{FLPsimFXP}(\frac{c}{\hat{s}}, n, f)$ \;
    \textbf{Return} $\hat{c}, \hat{s}$ \;
}
\end{algorithm}

\smallpara{Step 2: Fitting a polynomial in FLP.  }
We use \emph{Chebyshev interpolation}~\cite{trefethen2019approximation} that interpolates on $\bar{k}+1$ Chebyshev roots to construct the Chebyshev polynomial (Line 4), which is a close approximation to the real \emph{minimax polynomial}, i.e., minimizing the max approximation error. The method is widely adopted in practice (plaintext)~\cite{trefethen2019approximation, atkinson2005theoretical}. 
There is a corner case when $[a, b]$ is so small that $\frac{b-a}{2^{-f}} \leq \bar{k}+1$, i.e., the domain in such case does not contain enough points to fit the $\bar{k}$th-order polynomial. We construct the $(\frac{b-a}{2^{-f}}-1)$th-order polynomial trying to cover all the discrete points. Specifically, we use \emph{Lagrange interpolation}~\cite{mathworld2015lagrange} to solve the polynomial using all discrete points (Line 7).

Note that both Chebyshev and Lagrange methods fit the polynomial in continuous space using FLP.  Then we need to convert them back into the FXP space.  Issue 2-3 arise on this conversion. 

\smallpara{Step 3: Converting to FXP space with a scaling factor to enlarge the representation range (Issue 2).}
As we mentioned, when we round the fitted $c_i$'s to $\hat{c}_{i}$'s in FXP, 
$\hat{c}_i$ may be too small to represent precisely.  As $\hat{c}_i$'s are in plaintext, we can use the typical scaling factors to enlarge its representation range.
We translate each FLP $c_i$ into two FXP numbers, $(\chat_i,\hat{s}_i)$, letting $\chat_i \geq c_i$ to be large to preserve sufficient significant bits and $\hat{s}_i \leq 1$ is a scaling factor such that $c_i \approx \chat_i\hat{s}_i$.  
We want to let $\hat{c}_i$ contain more significant bits to maintain precision, but we need to avoid a too large $\chat_i$ that causes $\chat_i\xhat^k$ to overflow, especially when $\xhat^k$ is large.  More precisely, we require: 
1) $\hat{c}_i\xhat^k \leq 2^{n-f-1} $, i.e., it does not overflow; 2) $\hat{s}_i$ itself is a valid FXP, and 3) $ 0< \hat{s}_i \leq1$, i.e., it indeed scales up the coefficient, not making it even smaller.  
Algorithm~\ref{Algo: scalepoly} converts all FLP coefficients $c_i$ into $(\chat_i, \hat{s}_i)$ FXP pairs, satisfying all the three requirements. 

\smallpara{Step 4: Further reducing the rounding precision loss using residual boosting (Issue 2 - 3).}  
After step 3, we get an FXP approximation $\phat(\xhat) = \sum_{i=0}^{k}(\chat_i \xhat^{i} \hat{s}_i)$.
The rounding errors in $\hat{c}_i$ and $\hat{s}_i$ exacerbate the difference between the approximation and the real $F(x)$, and we use a \emph{residual function} $R(x) = F(x) - \phat_{k}(x)$ in FLP to capture the difference.  If we can estimate $R(x)$ with another discrete polynomial $\hat{r}_{k'}(x)$ with $k' < k$, we \emph{may} get a better precision if we use $\phat_{k}(x) + \hat{r}_{k'}(x)$ to approximate $F(x)$.  

\begin{algorithm}[tb]
    \normalem
    \small
    \SetKwInOut{Input}{\textbf{Input}}\SetKwInOut{Return}{\textbf{Return}}
    \SetKwFunction{boost}{$\mathsf{Boost}$}
    \caption{$\mathsf{Residual Boosting}$}
    \label{residual-boosting}
    \Input{$\hat{p}_k$ in discrete space, target \f and domain $[a, b]$.}
    \Return{$\hat{p}^{*}_{k}$ in discrete space.}
    \BlankLine
    $\hat{p}^{*}_{k} \leftarrow \hat{p}_k$, $R \leftarrow F - \hat{p}^{*}_{k}$ \;
    Sample number $\mathsf{N_{s}} \leftarrow \min{(\text{Max samples}\, MS, \text{All points} \, \frac{b-a}{2^{-f}})}$ \;
    $\hat{\mathsf{X}} \leftarrow \mathsf{FLPsimFXP}(\mathsf{Linspace}([a, b], \mathsf{N_{s}}))$ \;
    \For{$k' \leftarrow k-1$ \KwTo $0$}{
        $r_{k'} \leftarrow \textit{Cheby-Interpolation}(R, [a, b], k')$ \;
        $\hat{p}^{tmp}_{k} \leftarrow \mathsf{Boost}(\hat{p}^{*}_{k}, r_{k'}, [a, b])$\;
        \tcc{Boost when benefit exist. }
        \footnotesize
        \If{$(\max_{\hat{x} \in \hat{\mathsf{X}}}|F(\hat{x}) - \hat{p}^{tmp}_{k}(\hat{x})|_d < \max_{\xhat \in \hat{\mathsf{X}}}|F(\xhat) - \hat{p}^{*}_{k}(\xhat)|_d)$}{
            $\hat{p}^{*}_{k} = \hat{p}^{tmp}_{k}$, $R = F - \hat{p}^{*}_{k}$ \;
        }
    }
    \small
    \Return{$p^{*}_{k}$}
    \BlankLine
    \SetKwProg{Fn}{Function}{:}{}
    \Fn{\boost{\rm $p_k$ ,$r_{k'}, k\geq k'$ with domain $[a, b]$}}{
        \BlankLine
        $p_{k'}(x) = \sum_{i=0}^{k'}(\hat{c}^{(\hat{p}_{k})}_{i} \cdot \hat{s}^{(\hat{p}_{k})}_{i} + c^{(r_{k'})}_{i}) \cdot x^{i}$ \;
        $\hat{p}_{k'} \leftarrow \mathsf{ScalePoly}(p_{k'}, [a, b])$ \;
        \footnotesize
        $\hat{p}_{k}(x) = \sum_{i=0}^{k'}(\hat{c}^{(\hat{p}_{k'})}_{i} \cdot x^{i} \cdot \hat{s}^{(\hat{p}_{k'})}_{i}) + \sum_{i=k'+1}^{k}(\hat{c}^{(\hat{p}_{k})}_{i} \cdot x^{i} \cdot \hat{s}^{(\hat{p}_{k})}_{i})$ \;
        \small
        \Return{\pk}
    }
\end{algorithm}

We observe that we \emph{may} approximate $R(x)$ using a series of \emph{lower-order} FXP polynomials because the lower-order coefficients tend to be larger and preserve more significant bits.  Algorithm~\ref{residual-boosting} illustrates the residual boosting procedure.  In a nutshell, the algorithm iterates $k'$ through $(k-1)$ to $1$, trying to fit a $k'$th-order polynomial $\hat{r}_{k'}$ approximating $R(x)$ using the same Chebyshev interpolation as in Algorithm~\ref{Algo:fitonepoly} and $\mathsf{ScalePoly}$ Algorithm~\ref{Algo: scalepoly}.  
We add $\hat{r}_{k'}$ to $\hat{p}_k$ through function \code{Boost} in Algorithm~\ref{residual-boosting} if we are able to obtain smaller max error on sample set $\hat{\mathsf{X}}$.  
The residual boosting algorithm is best-effort and opportunistic, but empirically, it performs well (Section~\ref{sec:ablation-study}). 

\smallpara{Step 5: Checking if the polynomial is actually feasible and returning it if so.  }
As the last step, we get \emph{sample set} $\hat{\mathsf{X}}$ from $[a, b]$ in FXP and check the accuracy of $\phat_{k}(\xhat)$ obtained from the previous steps by computing $|\phat_{k}(\xhat)-F(\xhat)|_d, \forall \xhat \in \hat{\mathsf{X}}$ and find the max error.  If the check passes (max SRD less than $\epsilon$), we return the polynomial to Algorithm~\ref{Algo:sampledpoly}, otherwise the function returns a \code{Null}, causing Algorithm~\ref{Algo:sampledpoly} to recurse on smaller ranges.

\subsection{The Runtime Evaluation Algorithm OPPE}
\label{sec:oppe}
\begin{algorithm}[tb]
    \small
    \normalem
    \SetKwInOut{Config}{\textbf{Config}}
    \SetKwInOut{Input}{\textbf{Input}}\SetKwInOut{Return}{\textbf{Return}}
    \SetKwFunction{Initialize}{\textbf{Initialize}}
    \SetKwFunction{calculateKx}{$\mathsf{CalculateKx}$}
    \caption{\textit{OPPE} Algorithm ($\mathsf{OPPE}$)}
    \label{Algo:OPPE}
    \Config{Three parts plaintext parameters of \pkm: $\hat{W}$(without endpoint $\hat{w}_m$), $C=\{\hat{c}_{j,i}\}$ and $S=\{\hat{s}_{j,i}\}$.}
    \Input{Secret input $[\xhat]$.}
    \Return{The secret evaluation result of $[\text{\pkm}{(\xhat)}]$.}
    \BlankLine
    $[\mathsf{comp}] \leftarrow \mathsf{GT}([\xhat], \hat{W})$ \# compare $x$ with each break point. \;
    $[\mathsf{mask}] \leftarrow \mathsf{ADD([\mathsf{comp}], -\mathsf{leftshift}([\mathsf{comp}], 1))}$ \;
    \For{$i \leftarrow 0$ \KwTo $k-1$}{
        $[\mathsf{coeff}]_{i} \leftarrow \hat{\sum}_{j=0}^{j=m-1}\mathsf{MUL}([\mathsf{mask}]_{j}, \hat{c}_{j,i})$ \;
        $[\mathsf{scaler}]_{i} \leftarrow \hat{\sum}_{j=0}^{j=m-1}\mathsf{MUL}([\mathsf{mask}]_j, \hat{s}_{j,i})$ \;
    }
    $[\mathsf{xterm}] \leftarrow \mathsf{CalculateKx}([\xhat], k)$ \;
    \textbf{Return} $\hat{\sum}_{i=0}^{i=k-1}(\mathsf{MUL}(\mathsf{MUL}([\mathsf{coeff}]_{i}, [\mathsf{xterm}]_{i}), [\mathsf{scaler}]_{i})$\;
    \BlankLine
    \SetKwProg{Fn}{Function}{:}{}
    \Fn{\calculateKx{$[\xhat], k$}}{
        \tcc{Calculate $[1, [\xhat], [\xhat]^{2}, ..., [\xhat]^{k}]$}
        \BlankLine
        \footnotesize
        $\mathsf{shift} \leftarrow 1$, $[\mathsf{res}] \leftarrow [1, \mathsf{tail}([\xhat], k)]$ \tcc{repeat $[\xhat]$ $k$ times}
        \small
        \While{$\mathsf{shift} < k$}{
            $[\mathsf{res}]_{\mathsf{shift}:} = \mathsf{MUL}([\mathsf{res]_{\mathsf{shift}:}},\mathsf{[res}]_{:\mathsf{-shift}})$\;
            $\mathsf{shift}\, \times = 2$ \;
        }
        \textbf{Return} $[\mathsf{res}]$ \;
    }
\end{algorithm}

At runtime, we take the output of Algorithm~\ref{Algo:sampledpoly}, \pkm, as plaintext config, and take the secret-shared value $[\xhat]$ as ciphertext input ($[\xhat]$ indicate the secret shares of value $\xhat$ among each party), to compute the result $[\hat{p}_{k}^{m}[\xhat]]$ in the \emph{oblivious piece-wise polynomial evaluation (OPPE)} Algorithm~\ref{Algo:OPPE}.  
The piecewise polynomial \pkm is described by three parameters: $W = [a=\hat{w}_0, \hat{w}_1, \hat{w}_2, \dots \hat{w}_m=b]$ (without endpoint $\hat{w}_m$) are the boundaries for the $m$ pieces, the coefficients $\chat_{j,i}$ and scaling factors $\hat{s}_{j,i}$ for all $j = 0 \dots m-1$ and $i = 0 \dots k$.  
All plaintext and ciphertext inputs are FXP numbers. 
We use $\mathsf{ADD}$, $\mathsf{MUL}$ and $\mathsf{GT}$ in Algorithm~\ref{Algo:OPPE} to denote the subroutines evaluating \emph{secure} addition, multiplication and greater-than, respectively ($\hat{\sum}$ means summation through $\mathsf{ADD}$).

\smallpara{OPPE Design. }
OPPE Algorithm~\ref{Algo:OPPE} treats each subroutine as an \emph{arithmetic black box} and organize them \emph{obliviously}, \ie the execution path is independent of the inputs. For details of the obliviousness property, see Appendix~\ref{appendix:oblivious-oppe}.
OPPE first determines which piece $[\xhat]$ belongs to, using one vectorized $\mathsf{GT}$ and one $\mathsf{ADD}$ (Lines 1-2).  The comparison result is a ciphertext vector [\code{mask}], containing a single \emph{one}, and all other elements are \emph{zero}s. 
Line 3-6 select the coefficients and scaling factors obliviously, using the [\code{mask}].  
Line 7 computes all $[\xhat]^i\, \text{,} \forall i=0\dots k$ using $(\lfloor\log{k}\rfloor + 1)$ vectorized $\mathsf{MUL}$'s (each on two ciphertext vectors with size $< k$) using the subroutine \code{CalculateKx}.
Line 8 computes every term using two $\mathsf{MUL}$'s: $\mathsf{MUL}(\mathsf{MUL}(\hat{c}_{j,i}, \hat{x}^i), \hat{s}_{j,i})$, and adds up the products to compute the result.  Note that we must execute the two $\mathsf{MUL}$s in this specific order to take advantage of the scaling factor.

\smallpara{Complexity.  } Algorithm~\ref{Algo:OPPE} uses $(2km)$ plaintext-with-ciphertext $\mathsf{MUL}$'s, $m$ $\mathsf{GT}$'s and $O(k\log{k})$ $\mathsf{MUL}$'s.  We can also leverage the vector (a.k.a., SIMD) optimization in many MPC platforms. If so, we only need $(\lfloor\log{k}\rfloor + 1)$ rounds of ciphertext $\mathsf{MUL}$, $2$ rounds of plaintext-with-ciphertext $\mathsf{MUL}$, and 1 round of $\mathsf{GT}$.  Thus the running time is predictable on an MPC platform, independent of input $\xhat$. Appendix~\ref{sec:oppe-complexity} shows the complexity analysis of OPPE algorithm.

\smallpara{Independent operations and parallelism.}
We observe that the [\code{mask}] computation and coefficient selection step (Line 1-6) is independent of the \code{CalulateKx} routine (Line 7).  
Thus if an MPC platform supports concurrency, we can run both independently, further reducing the running time.  Also, when the input vector $\xhat$ is long, we automatically break it up into multiple pieces to utilize the underlying platform's threading support to evaluate each piece.

\section{Security Analysis}\label{sec:overview-security}
	
\smallpara{Security definitions.  } 
\sysname uses the same security definitions as the \emph{secure multi-party protocol} of underlying MPC platform, $\pi_{f}$ that aims to let $n$ parties evaluate function $f$ without a trusted third party.  The security is defined as the \emph{security properties} achieved in the presence of some adversary $\mathcal{A}$ who can control a set of at most $t$ corrupted parties according to some \emph{adversarial model}.  Different protocols have their own choices of both the adversarial model and security properties to achieve, usually for trade-off of performance.  

The common \emph{adversarial model} trade-offs include 4 dimensions\\~\cite{lindell2017framework}: 
1) \emph{corruption strategy}: adaptive vs. non-adaptive; 
2) \emph{corruption proportion}: dishonest vs. honest-majority;
3) \emph{behavior}: malicious vs. semi-honest;
4) \emph{power}: informational vs. computational-secure.

Under these assumptions, protocols usually achieve the following two essential 
security properties: 
\emph{privacy} and \emph{correctness}. 
Optionally, there are other security properties a protocol may consider~\cite{lu2019honeybadgermpc}. E.g.,  
\emph{fairness}, i.e., if one party receives the result, all parties receive it; and \emph{guaranteed delivery}: whether the joint parties can always receive the results. 

\smallpara{Security assumptions of \titlesysname. }
\sysname builds on top of general-purpose MPC platforms with each party carrying out the computation connected through \emph{secure channels}. 
\sysname assumes that three secure subroutines $\mathsf{ADD}$, $\mathsf{MUL}$ and $\mathsf{GT}$ evaluating secret addition, multiplication and greater-than are provided and all the inputs are secret-shared among computation parties before the evaluation of the generated protocol. These assumptions are easily to meet as various implementations of secret addition, multiplication and greater-than have been proposed and some of them are widely adopted (\eg \cite{beaver1991efficient,damgaard2006unconditionally}).
Specifically, \sysname introduces no different assumptions about the \emph{adversary model} for the underlying protocols that implement the three required subroutines. 

\smallpara{Security analysis of \titlesysname.  } 
We illustrate the security of \sysname's code generation approach by showing that it guarantees the \emph{same security properties} as the three subroutines in the presence of the same adversary.

1) Obviously, the pre-computation steps (Algorithm ~\ref{Algo:sampledpoly} - ~\ref{residual-boosting}) are independent of secret inputs and performed offline, thus cannot affect any security property;

2) To prove the privacy and security properties, we directly follow the \emph{real-ideal} paradigm introduced in~\cite{canetti2000security}.  It defines security by requiring that the distribution of the protocol evaluation in the \emph{real} world is indistinguishable from the \emph{ideal} world with a trusted third party. 
Under this paradigm, we show that
\sysname generates the \pkm evaluation protocol by \emph{composing} the three subroutines as so-called arithmetic black boxes in the standard \emph{modular composition} way~\cite{canetti2000security} without revealing any information nor introducing any interaction.  Thus, \sysname naturally inherits the same security property of the subroutines from the \emph{Canetti's composition theorem}~\cite{canetti2000security}. The security preserving property is the direct result of the composition theorem. We show the detailed analysis in Appendix~\ref{sec:formal-security}.

3) We show that \sysname provides the same optional security properties as the underlying protocols.  Using \emph{guaranteed delivery} as an example, if the provided subroutines offer this property, meaning that there is no \emph{abort} within these subroutines, the \pkm evaluation protocol (Algorithm~\ref{Algo:OPPE}) will not abort either, as there is no breakpoint in the routine.  On the other hand, if the subroutines are \emph{secure with abort}, Algorithm~\ref{Algo:OPPE} does not try to handle these abortions at all and lets the protocol abort.  

\section{Implementation }
\label{sec:methods}

In this section, we briefly introduce how we integrate the fitted \Pkm into the target MPC system.

\subsection{Profiler and Performance Prediction}\label{sec:profiler}

In order to select the best \pkm $\in$ \Pkm, we need to model the performance of specific deployed MPC systems. Such modeling is not straightforward as the performance not only depends on the MPC protocols but also on the implementation and deployment. We use a profiler that runs on the target system to automatically build the performance prediction model.  

The execution time of Algorithm~\ref{Algo:OPPE} depends on $(k,m)$ only, making the prediction possible.  
We use the profiler to measure the evaluation time $t$ of $2,000$ configurations of piecewise polynomial samples with different $(k,m)$ combinations ($k \in [3, 10]$ and $m \in [2, 50]$).
Then we fit a \textit{multivariate polynomial regression model}~\cite{sklearn2011Regression} on these samples.  
Note that the performance model is independent of $F(x)$, and thus we only need to profile once per system. 

Some MPC systems provide built-in functions that are highly optimized for their settings, such as \divop, \expop, or \logop.  For example, MP-SPDZ provides a very efficient \divop~\cite{catrina2010secure}.  If users list these functions in the NFD, the profiler also measures the performance of such functions. 
If \f is simple, it may be better off taking the direct evaluation approach.  Direct evaluation is only viable with all the three conditions: 
1) \f does \emph{not} contain \expop as an intermediate step, as it is highly likely to overflow; 
2) \f contains less than three steps with non-linear functions to avoid unpredictable error accumulations and
3) the predicted running time of direct evaluation is shorter than all \pkm $\in$ \Pkm.  
We rarely find suitable cases to use direct evaluation, but on functions like $isru$, which is $1/(1 + |x|)$, effectively just a single \divop, direct evaluation is $1.6\times$ faster than the best \pkm (Section~\ref{micro-benchmark}).

We can either run the profiler in pre-computation or run it \emph{just-in-time} right before running a large MPC task.  In this paper, we do all profiling and plan selections in the pre-computation.

\subsection{OPPE Code Generation  }

Different MPC platforms offer not only different high-level languages, but also different support for vector operations and multi-threading.  To best utilize these platform-specific optimizations while still remain portable, we use a template-based code generation approach to implement OPPE (Algorithm~\ref{Algo:OPPE}).  

\sysname provides MPC-platform-specific code templates implementing OPPE.  Each template is highly optimized for a specific platform.  For example, the code template uses multi-threading in PrivPy to compute all independent segments concurrently~\cite{li2019privpy}, and leverages the \textit{probability trunction} optimizations in MP-SPDZ~\cite{dalskov2020secure}.  Note that we only need to customize the OPPE template for each platform, and all other procedures in \sysname are reuseable across platforms.  Currently, we support both PrivPy and MP-SPDZ. 

Using the templates, it is straightforward to generate \pkm evaluation code, as we only need to insert \pkm parameters in to the code template as literals.  Appendix~\ref{sec:code-example-NFCG} shows an example of NFD, PPD and generated code.  
The generated code runs the same as normal functions in the target MPC system.  

We pass \pkm as literals in generated code instead of arguments to OPPE function, because compilers in platforms like MP-SPDZ significantly improves performance if all input lengths (in our case, $m$ and $k$) are statically known.
\section{Evaluation}

In our evaluation, we show that \sysname is able to 1) offer better performance and lower communication costs across algorithms, protocols and systems; 2) avoid the overflow/underflow errors in the traditional approaches, and provide better accuracy in complex functions; 3) calculate sophisticated non-linear functions otherwise requiring extensive calculus knowledge to implement; 4) support a large domain with reasonable accuracy even with a very limited number of bits; and 5) benefit real applications with both performance and accuracy improvements.  
We also evaluate the different design choices in \sysname, such as the effectiveness of profiling, scaling and residual boosting. 

\subsection{Experiment Setup}~\label{settings}

\smallpara{MPC platforms. }
We evaluate \sysname on two MPC platforms, MP-SPDZ~\cite{keller2020mp} that implements over 30 secret sharing protocols (and we choose 5 for evaluation), and PrivPy~\cite{li2019privpy} that only supports a single protocol.  MP-SPDZ first compiles the high-level code into bytecode to execute with underlying protocols, while PrivPy executes programs by interpreting Python code at runtime.  Both platforms support FXP number with different width. 

\smallpara{Secret sharing protocols. }
We adopt six different secret-sharing protocols, covering different security assumptions over adversarial behaviors (\emph{semi-honest} or \emph{malicious}); \emph{computation domains} (over a ring of $\mathbb{Z}_{2k}$ or finite field $\mathbb{F}_{p}$ by modulo a prime $p$) and \emph{sharing methods} (using replicated secret sharing or \emph{shamir} secret sharing). We briefly introduce each protocol in the following.

First we introduce the four \emph{semi-honest} protocols:
A. \emph{PrivPy-Rep2k} is an \emph{2-out-of-4} replicated secret sharing protocol, proposed by Li \etal~\cite{li2019privpy}. It splits each value $x$ into four shares over a ring of $\mathbb{Z}_{2k}$ and let each party $P_i$ ($i = {1, 2, 3, 4}$) holds two shares, satisfying that any two parties can reconstruct $x$ while each one sees two random integers. 
B. \emph{Rep2k} and C. \emph{RepF} two protocols split a value $x$ into three shares, satisfying that $x \equiv x_1 + x_2 + x_3 \pmod{M}$ and let each party $P_i$ ($i = {1, 2, 3}$) holds $(x_i, x_{i+1})$(indexes wrap around 3). $M = 2^{k}$ for \emph{Rep2k} protocol and $M = p$ for \emph{RepF} protocol where $p$ is a prime.
D. \emph{Shamir} shares a value $x \in \mathbb{F}$ though a random chosen 2-degree polynomial $f_{s}$, such that $f_s(0) = x$. Each party $P_i$, ($i = {1, 2, 3}$) holds a distinct point over polynomial $f_{s}$. They together can reconstruct $f_{s}$ and obtain $x = f_s{(0)}$ while any set less than three parties contain no information about $x$.

Then we introduce the two \emph{malicious} protocols:
E. \emph{Ps-Rep2k} and F. \emph{Ps-RepF}: \emph{Ps} refers to \emph{Post-Sacrifice} strategy proposed by Lindell \etal~\cite{lindell2017framework}. It compiles a semi-honest protocol into a malicious secure version by adding a verification step. The verification step let the honest parties detect cheating behavior with high probability. The initial work~\cite{lindell2017framework} only considers finite field (\emph{Ps-RepF}) and the follow-up work~\cite{cramer2018spd} extends it to ring (\emph{Ps-Rep2k}).

We choose these six protocols not only because they are common practical protocols with different assumptions, but also because they exhibit various performance characteristics on the basic operations, which is helpful to show the generality of \sysname.  
Table~\ref{Tab:settings} summarizes the selected MPC settings ($\mathcal{S}$) with the absolute performance of \mulop and the performance of the other basic operations relative to \mulop.

\begin{table}
	\small
	\centering
	\caption{\revise{MPC Settings with Varied Operation Performance}}\label{Tab:settings}
	\vspace{-0.2in}
	\begin{threeparttable}
		
		\begin{tabular}{ c | c | c | c| l  }
			\midrule[1.1pt]
			Sec model & No. & MPC sys $\mathcal{S}$ & \mulop(ms) & \mulop : \gtop : \divop : \sqrtop : \logop : \expop\\ \hline\hline
			\multirow{4}{*}{\shortstack{Semi\\-honest}} & A & \textit{PrivPy Rep2k} & 1 & $1:11:67:55:118:44$\\ \cline{2-5}
			&B & \textit{Rep2k} & 2 & $ 1:4:31:75:68:107$\\ \cline{2-5}
			&C & \textit{RepF} & 32 & $1:1:11:28:26:47$\\ \cline{2-5}
			&D & \textit{Shamir} & 81 & $1:1:8:16:15:29$ \\ \hline
			\multirow{2}{*}{Malicious} & E & \textit{Ps-Rep2k} & 851 & $1:1:16:35:26:97$\\ \cline{2-5}
			&F & \textit{Ps-RepF} & 84 & $1:2:24:56:44:137$\\ \hline
		\end{tabular}
		\footnotesize
		\begin{tablenotes}
			\item B-E use $\langle 96,48 \rangle$-FXP and A uses $\langle 128,48 \rangle$. Column 3 is the absolute time (in ms) to compute \mulop on 100-dimensional vector and Column 4 is the relative performance to \mulop.
		\end{tablenotes}
	\end{threeparttable}
\end{table}

\smallpara{Evaluation environment. }
We perform all the evaluations on a cluster of four servers with two 20-core 2 GHz Intel Xeon CPUs and 180 GB RAM each, connected through 10 Gbps Ethernet.  MP-SPDZ uses only three servers while PrivPy uses all four.

\subsection{Performance}
~\label{micro-benchmark}

\begin{table*}
\caption{Examples in Performance Evaluations (Full Results in Appendix~\ref{sec:full-microbenchmark}) }\label{Tab:mb-activation}
\vspace{-0.2in}
\scriptsize
\resizebox{0.95\textwidth}{!}{
\begin{threeparttable}
    \centering
    \begin{tabular}{c|c|c|c|c|c|c|c|c|c|c}
    \midrule[1.1pt]
    \multirow{2}{*}{$F(x)$} & \multirow{2}{*}{$\mathcal{S}$} & \multirow{2}{*}{\checkmark}  & \multirow{2}{*}{$(k, m)$} & \multirow{2}{*}{$T_{\text{Fit}}$} & \multicolumn{3}{c|}{Communication (MB)} & \multicolumn{3}{c}{Computation time (ms)} \\ \cline{6-11}
    &  &  &  &  & Base & \scriptsize{\sf{NFGen}} & \textbf{Save} & Base &\scriptsize{\sf{NFGen}} & \textbf{SpeedUp} \\[1pt] \hline\hline
    \multirow{6}*{\shortstack{$\textit{sigmoid}(x) = \frac{1}{1 + e^{-x}}$ \\  $x \in [-50, +50]$, $F(x) \in [0.0, 1.0]$ \\ Non-linear buildling-blocks: $2$}}  
    & A    & $\times$ &     (10, 8)     &  4.3    &     618 &       263 &        \textbf{60\%} &            147 &        23 &      \textbf{6.3$\times$} \\
    & B    & \checkmark &     (7, 10)     &  3.5    &     1   &         1 &        -5\% &            137 &       124 &      \textbf{1.1$\times$} \\
    & C    & \checkmark &     (5, 14)     &  3.5    &     4   &         4 &        -5\% &           1155 &       802 &      \textbf{1.4$\times$} \\
    & D    & \checkmark &     (5, 14)     &  3.5    &     18  &        19 &        -8\% &           1863 &      1525 &      \textbf{1.2$\times$} \\
    & E    & \checkmark &     (5, 14)     &  3.5    &     212 &       308 &       -45\% &          75949 &    106857 &      0.7$\times$ \\
    & F    & \checkmark &     (5, 14)     &  3.5    &     207 &       234 &       -13\% &           9732 &     11224 &      0.9$\times$ \\
    \midrule
    \multirow{6}*{\shortstack{$\textit{tanh}(x) = \frac{e^{x} - e^{-x}}{e^{x} + e^{-x}}$ \\  $x \in [-50, +50]$, $F(x) \in [-1.0, 1.0]$ \\ Non-linear buildling-blocks: $3$}}  
    & A    & $\times$ &     (9, 8)  &   4.5     &   1876    &       216 &       \textbf{90\%} &            335 &        21 &       \textbf{15.7$\times$} \\
    & B    & $\times$ &     (5, 9)  &   3.2     &   13      &         1 &       \textbf{92\%} &            800 &        80 &       \textbf{10.0$\times$} \\
    & C    & $\times$ &     (5, 9)  &   3.2     &   19      &         3 &       \textbf{83\%} &           5901 &       597 &       \textbf{9.9$\times$} \\
    & D    & $\times$ &     (5, 9)  &   3.2     &   64      &        14 &       \textbf{78\%} &           8882 &      1115 &        \textbf{8.0$\times$} \\
    & E    & $\times$ &     (5, 9)  &   3.2     &   996     &       197 &       \textbf{80\%} &         337530 &     68550 &        \textbf{4.9$\times$} \\
    & F    & $\times$ &     (5, 9)  &   3.2     &   966     &       150 &       \textbf{84\%} &          45486 &      7309 &        \textbf{6.2$\times$} \\
    \midrule
    \multirow{6}*{\shortstack{$\textit{soft\_sign}(x) = \frac{x}{1 + |x|}$ \\  $x \in [-50, 50]$, $F(x) \in [-1.0, 1.0]$ \\ Non-linear buildling-blocks: $2$}}  
    & A   & $\times$ &      (8, 8)  &  1.9      &       518 &       231 &         \textbf{60\%} &            131 &        21 &        \textbf{6.1$\times$} \\
    & B   & \checkmark &      NA      &  1.3      &       1   &         1 &         0\%  &             79 &        78 &        1.0$\times$ \\
    & C   & \checkmark &      NA      &  1.3      &       2   &         2 &         0\%  &            451 &       437 &        1.0$\times$ \\
    & D   & \checkmark &      NA      &  1.3      &        8  &         8 &         0\%  &            741 &       753 &        1.0$\times$ \\
    & E   & \checkmark &      NA      &  1.3      &       52  &        52 &         0\%  &          15507 &     15520 &        1.0$\times$ \\
    & F   & \checkmark &      NA      & 1.3       &       49  &        49 &         0\%  &           2315 &      2373 &        1.0$\times$ \\
    \midrule
    \multirow{6}*{\shortstack{$\textit{Normal\_dis}(x) = \frac{e^{-\frac{x^2}{2}}}{\sqrt{2\pi}}$ \\  $x \in [-10, +10]$, $F(x) \in [0.0, 0.4]$ \\ Non-linear buildling-blocks: $1$}}  
    & A    &  $\times$  &   (8, 12)     &   5.2  &      420     &       295 &        \textbf{30\%} &             67 &         24 &       \textbf{2.8$\times$}  \\
    & B    &  $\times$  &   (8, 12)     &   3.6  &      3       &         2 &        \textbf{45\%} &           4906 &       156 &       \textbf{31.5$\times$} \\
    & C    &  $\times$  &   (8, 12)     &   3.6  &      7       &         5 &        \textbf{27\%} &           5029 &       970 &        \textbf{5.2$\times$}  \\
    & D    &  $\times$  &   (8, 12)     &   3.6 &       24      &        23 &         \textbf{5\%} &           6588 &      1846 &        \textbf{3.6$\times$}  \\
    & E    &  $\times$  &   (5, 22)     &   3.6  &      257     &       481 &       -87\% &          89740 &    166328 &        0.5$\times$  \\
    & F    &  $\times$  &   (8, 12)     &   3.6  &      249     &       301 &       -21\% &          14908 &     14861 &        1.0$\times$  \\
    \midrule
    \multirow{6}*{\shortstack{$\textit{Bs\_dis}(x)\text{~\cite{birnbaum1969new}}= \left (\frac{\sqrt{x} + \sqrt{\frac{1} {x}}} {2\gamma x} \right) 
    \phi \left (\frac{\sqrt{x} - \sqrt{\frac{1} {x}}} {\gamma} \right)$ \\ $\gamma=0.5$, $x \in [10^{-6}, 30]$, $F(x) \in [0.0, 0.2]$ \\ Non-linear buildling-blocks: $3$}}  
    & A  & $\times$  &     (10, 8) &   4.0     &   2815 &      263     &       \textbf{90\%} &            630 &         22 &        \textbf{29.1$\times$}  \\
    & B  & $\times$   &     (7, 11) &   3.2     &   13 &         1      &        \textbf{89\%} &          11463 &       133 &       \textbf{86.1$\times$}  \\
    & C  & $\times$   &     (5, 16) &   3.2     &   23 &         5      &        \textbf{79\%} &          14631 &       915 &       \textbf{16.0$\times$}  \\
    & D  & $\times$   &     (5, 16) &   3.2     &   65 &        22      &        \textbf{66\%} &          19167 &      1763 &       \textbf{10.9$\times$}  \\
    & E  & $\times$   &     (5, 16) &   3.2     &   741 &       352     &        \textbf{53\%} &         239549 &    122325 &        \textbf{2.0$\times$}  \\
    & F  & $\times$   &     (5, 16) &   3.2     &   718 &       268     &        \textbf{63\%} &          42157 &     13136 &        \textbf{3.2$\times$}  \\
    \midrule  
    \end{tabular}
    \begin{tablenotes}
        \item * $T_{\text{Fit}}$ is the time for \pkm fitting in seconds. \checkmark indicates whether baseline achieves the accuracy requirements.
    \end{tablenotes}
\end{threeparttable}}
\end{table*}

We use 15 widely-used non-linear functions for performance evaluation, including 8 activation functions used in deep learning and 7 probability distribution functions. 
The input domain of each function is set to the interval without a close-to-zero derivative, as these intervals are hard to approximate while others can be simply approximated with some constants. 
We run these functions on all six protocols in the two MPC platforms, and compare the performance with \emph{direct evaluation} as the baseline  (except \textit{sigmoid}s which is the built-in functions in both platforms).  In all cases, we set the accuracy requirement to $\epsilon = 10^{-3}$ and $\hat{0} = 10^{-6}$ (defined in Eq.~\ref{eq:rela-dis}), and run the experiments on 10,000 evenly spaced $\hat{x}$ samples. 
We measure the computation time on 100-dimensional vectors.  Table~\ref{Tab:mb-activation} shows five examples and Appendix~\ref{sec:full-microbenchmark} lists all the 15 functions.
We have the following important observations:

\smallpara{Performance.  } 
1) \sysname achieves significant performance gain in 93\% of these cases, with an average speedup of $6.5\times$ and a max speedup of $86.1\times$ (\textit{Bs\_dis} on \repz). 

2) \sysname significantly reduces communication, with an average reduction of 39.3\% and a max of 93\%, but it is not proportional to the speedups.  This is because the OPPE algorithm uses a fixed number of communication rounds, and each round involves vectorized comparisons and multiplications (depending on $k$ and $m$), while the baseline evaluates the function step-by-step, and thus may involve more rounds, leading to longer computation time. 

3) The more complicated a function is, the more likely \sysname achieves a better speedup, for the same reason above.  

4) Smaller $m$ values perform better for most functions.  As both MP-SPDZ and PrivPy support vectorized \gtop operations, the latency is largely dependent of $m$, even on the 100-dimensional vectors. 

5) In \emph{malicious} protocols (settings E and F), \sysname is more likely to achieve smaller speedup or even slowdowns on a few functions like \textit{sigmoid}.  There are two reasons: a) the \mulop and \gtop are very slow, and the intensive \mulop's introduce more \emph{validation checks} (the \emph{Ps}-protocols use extra \emph{multiplication triplets} to detect cheating); b) the validation checks prevent efficient batch (vector) operations in MP-SPDZ, resulting in less efficient OPPE execution.

\smallpara{Effectiveness of the profiler.  } 
1) For the \textit{soft\_sign} function, on all protocols in MP-SPDZ, \sysname automatically falls back to direct evaluation (Section~\ref{sec:profiler}), while on PrivPy, it uses $\hat{p}_{8}^{8}$ polynomial and achieves a $6.1\times$ speedup.  This is because the function essentially computes an absolute value (equivalent to a \gtop, plus a \divop).  Both \gtop and \divop are significantly slower than \mulop in PrivPy, but it is not the case for MP-SPDZ, as Table~\ref{settings} shows.   Thus, based on the profiler results, \sysname chooses different evaluation strategies.  In fact, we manually test the polynomial approach for MP-SPDZ, and it is $1.6\times$ slower than direct evaluation (with the most efficient polynomial ($\hat{p}_{6}^{10}$)), showing the effectiveness of the profiler. 

2) The profiler can also select different $(k,m)$ settings for different protocols in MP-SPDZ.  For example, it chooses different $(k, m)$s for different protocols in computing \textit{sigmoid}.  Specifically, it uses a larger $k$ for \repz (more \mulop and fewer \gtop), as \gtop is $4\times$ slower than \mulop in \repz, according to Table~\ref{settings}. 

3) For \repf, \repzps, \repfps and \shamir, the profiler tends to choose larger $m$ for a smaller $k$, because the relative performance between \mulop and \gtop is about $1:1$ in these settings. Thus computing $O(\log{k})$ rounds of {\mulop}s is more expensive than a single round of vectorized \gtop.

\smallpara{Independence and parallelism.  }
We adopt two types of optimizations to accelerate the online phase performance:
1) Independent evaluation of $\times$ and $>$:  
As we have discussed in Section~\ref{sec:oppe}, it is independent to perform the comparisons and to compute $x$ to the $k$-th power. We evaluate the speedup on PrivPy by running them independently\footnote{MP-SPDZ does not support customized multi-threading in its user-level language so we do not adopt this optimization.}. Figure~\ref{Fig:task_parallel} shows the performance results with/without independent evaluation.  We can see the optimization provides a $1.3 - 1.4\times$ speedup even on small $100$-dimensional input.

2) Concurrency on large inputs:  Figure~\ref{Fig:multi-thread} shows the performance with varied numbers of threads on $10^{6}$ input. We observe good speedup up to $10$ threads (about $4\times$ speedup comparing with the single-thread) for all four functions we evaluate.  Using more than 10 threads decreases performance as given the vector size, cost of threading overweights the speedup. 

\begin{figure}[tb]
	\centering
	\subfigure[\rthread{Evaluate $\times$, $>$ Independently (IE)}]{ \label{Fig:task_parallel}
	\begin{minipage}{0.43\linewidth}
	\centering
	\includegraphics[width=\textwidth]{./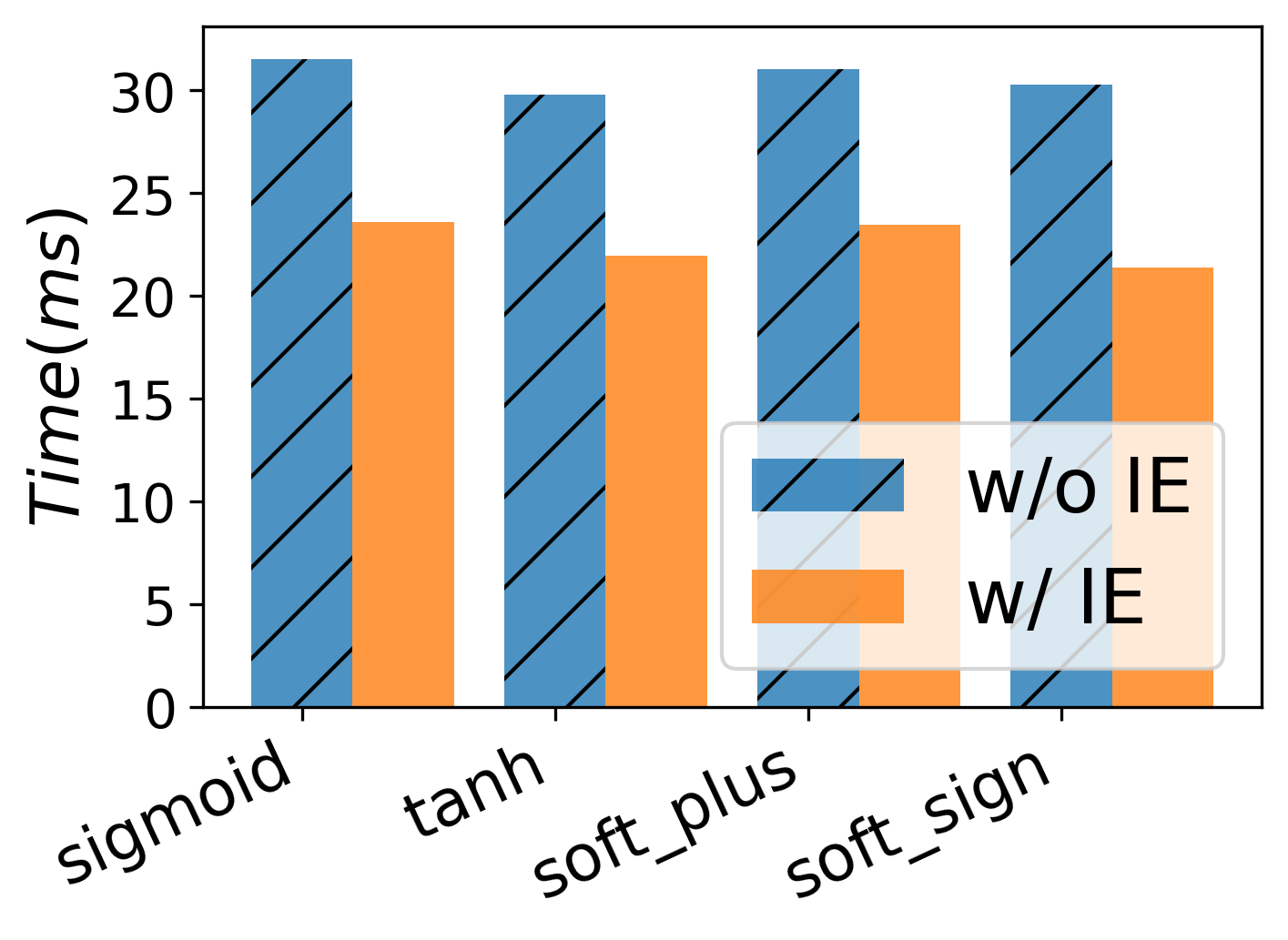} 
	\end{minipage}
	}
	\subfigure[\rthread{Multi-threads Acceleration}]{\label{Fig:multi-thread}
	\begin{minipage}{0.43\linewidth}
	\centering 
	\includegraphics[width=\textwidth]{./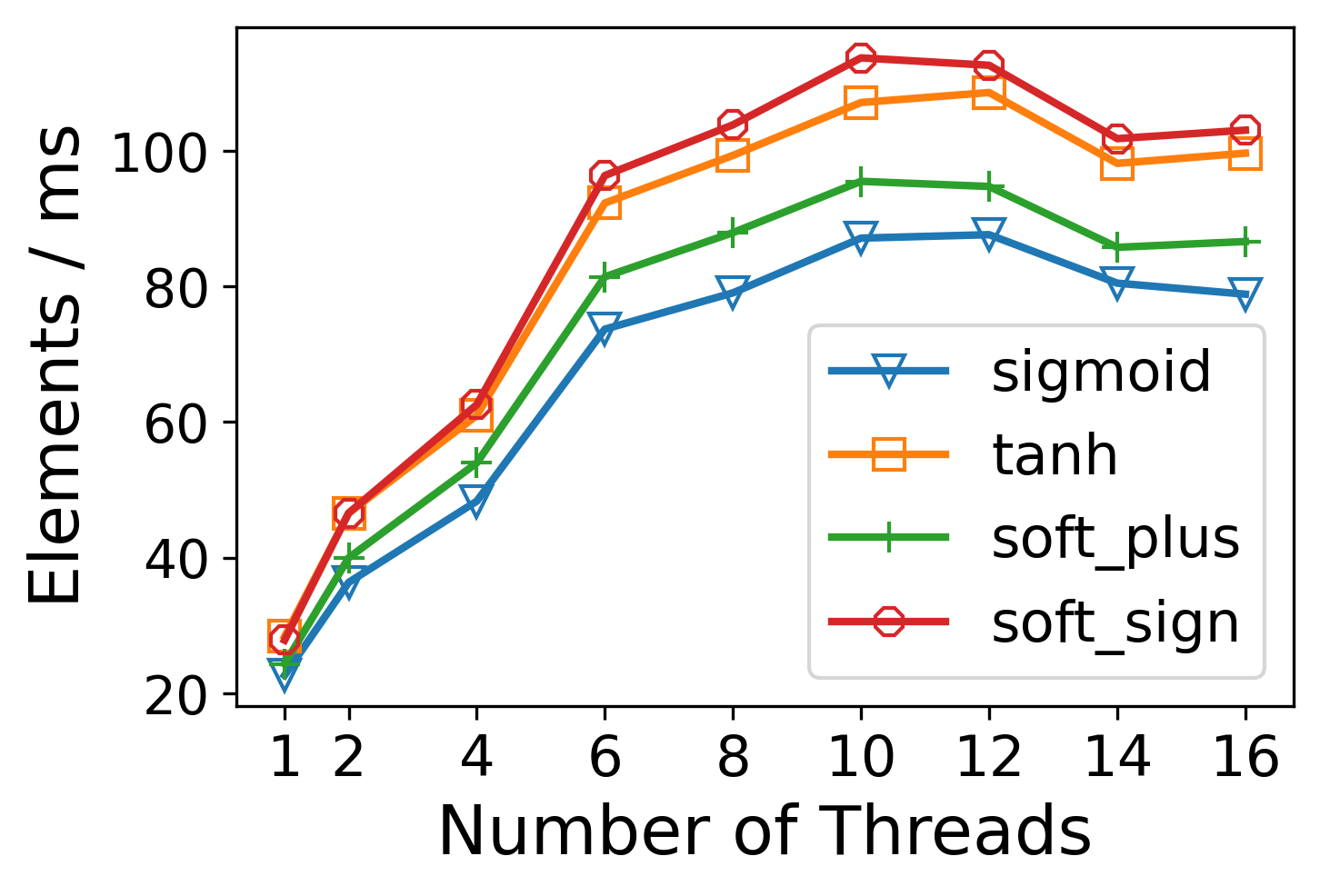}
	\end{minipage}
	}
	\vspace{-0.2in}
	\caption{\rthread{Performance Optimizations}}
	\label{Fig: parallel-optimizations}
\end{figure}

\subsection{Accuracy}
~\label{exp:accuracy}

\begin{figure}
	\flushleft 
	\subfigure{
		\includegraphics[width=0.45\textwidth]{./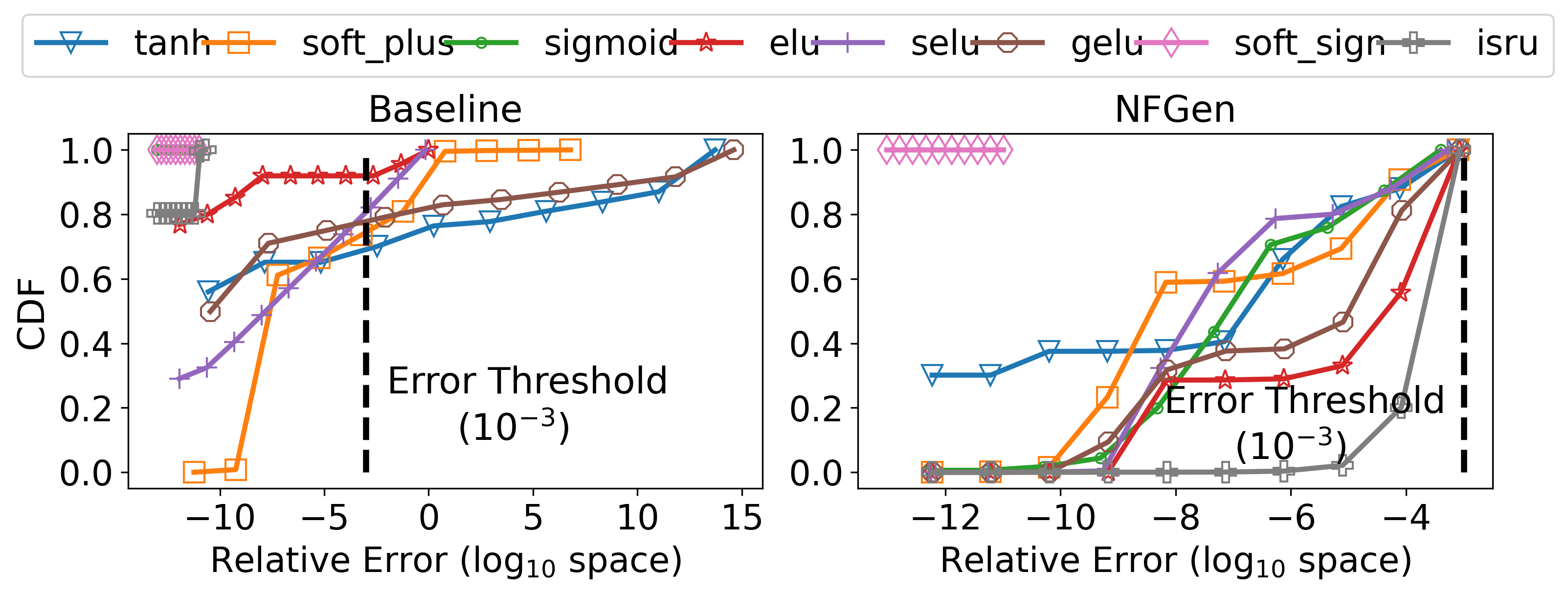}
		\vspace{-6mm}
	}
	\vspace{-4mm}
	\flushleft
	\subfigure{
		\includegraphics[width=0.45\textwidth]{./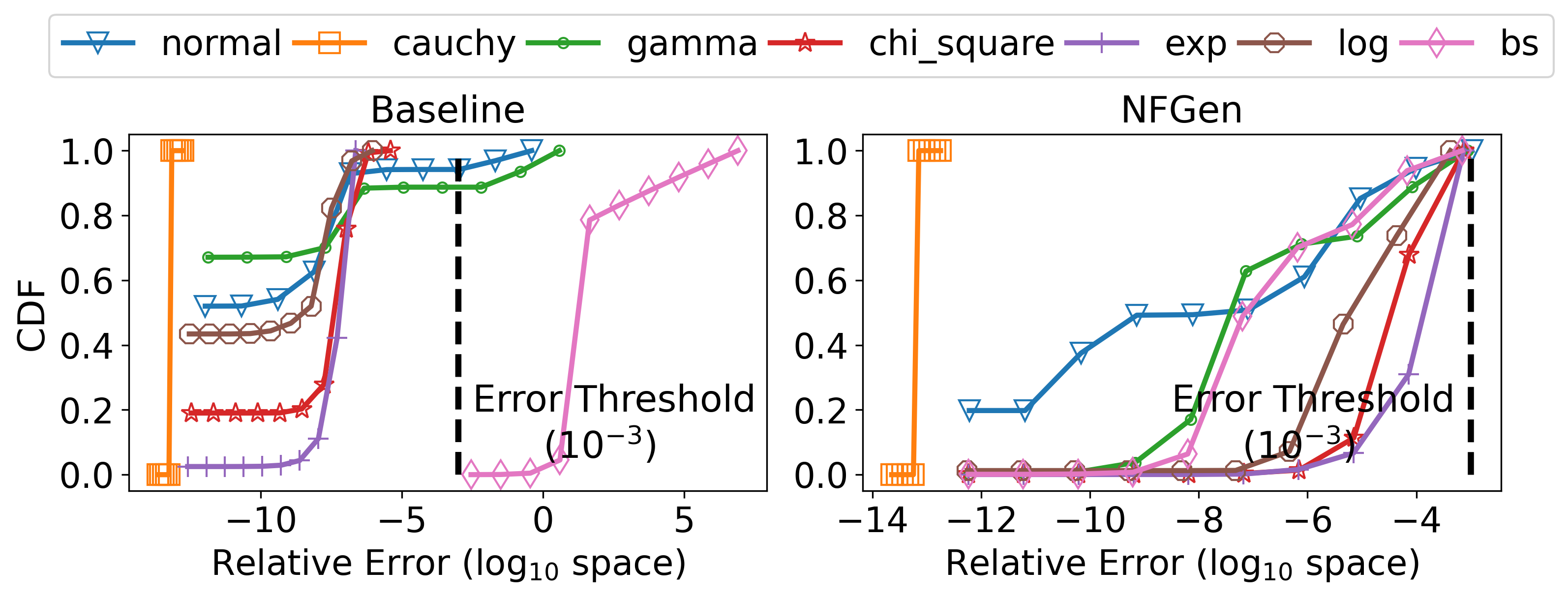}
	}
	\vspace{-0.23in}
	\caption{CDF of the Relative Errors}\label{Fig: accuracy-overview}
\end{figure}

\smallpara{Accuracy against the baseline.  } 
We compute the soft relative distance (SRD in Section~\ref{sec:overview}) on each of the 10,000 input samples.  We see a large variation of SRDs on different functions, and Figure~\ref{Fig: accuracy-overview} shows the \emph{cumulative distribution functions (CDFs)} of the SRDs on each function for both the baseline and \sysname.

We observe the following:
1) 100\% of the SRD of all 10k samples are under $10^{-3}$ (the target $\epsilon$), ranging between $10^{-12}$ and $10^{-3}$, as expected.  
2) In comparison, the baseline shows diverse errors across functions.  The $Cauchy\_dis$ is very accurate as it is just a \divop (\sysname falls back to direct evaluation in this case).
However, in some functions, the baseline shows errors way exceeding the $10^{-3}$ limit (dashed vertical line in Figure~\ref{Fig: accuracy-overview}) on over 10\% samples (Function $Bs\_dis$ has 100\% and $tanh$ has more than 35\%).
There are two reasons: precision loss due to concatenating multiple non-linear functions and errors due to overflow/underflow.
3) Baseline has a larger percentage of samples with smaller errors (e.g. $<10^{-9}$) than \sysname.  This is as expected too: \sysname is based on a regression model to approximate, while baseline is a direct computation.  However, we believe the predictable accuracy is more important than sometimes getting smaller errors.

\smallpara{Errors due to overflows/underflows.  } 
When the evaluation of \f uses an intermediate result in a large range, like \expop, it overflows given an $x$ with $|x| > \ln{2^{(96-48-1)}} \approx 33.2$ in MP-SPDZ with $\langle 96,48 \rangle$-FXP. 
The top two rows in Figure~\ref{Fig:problem-examples} plot SRD vs. $x$ value for the baselines, and show the overflow cases, where the SRD can exceed $10^{5}$, even if the range of \f is perfectly representable.
In comparison, the bottom two rows in Figure~\ref{Fig:problem-examples} show that SRDs are less than $10^{-3}$ in \sysname.  It also uncovers that \sysname has larger SRD when \f is close to zero, as the accuracy of polynomial approximation are constrained by the limited resolution of FXP.

\begin{figure}
	\flushleft
	\subfigure{
		\includegraphics[width=0.45\textwidth]{./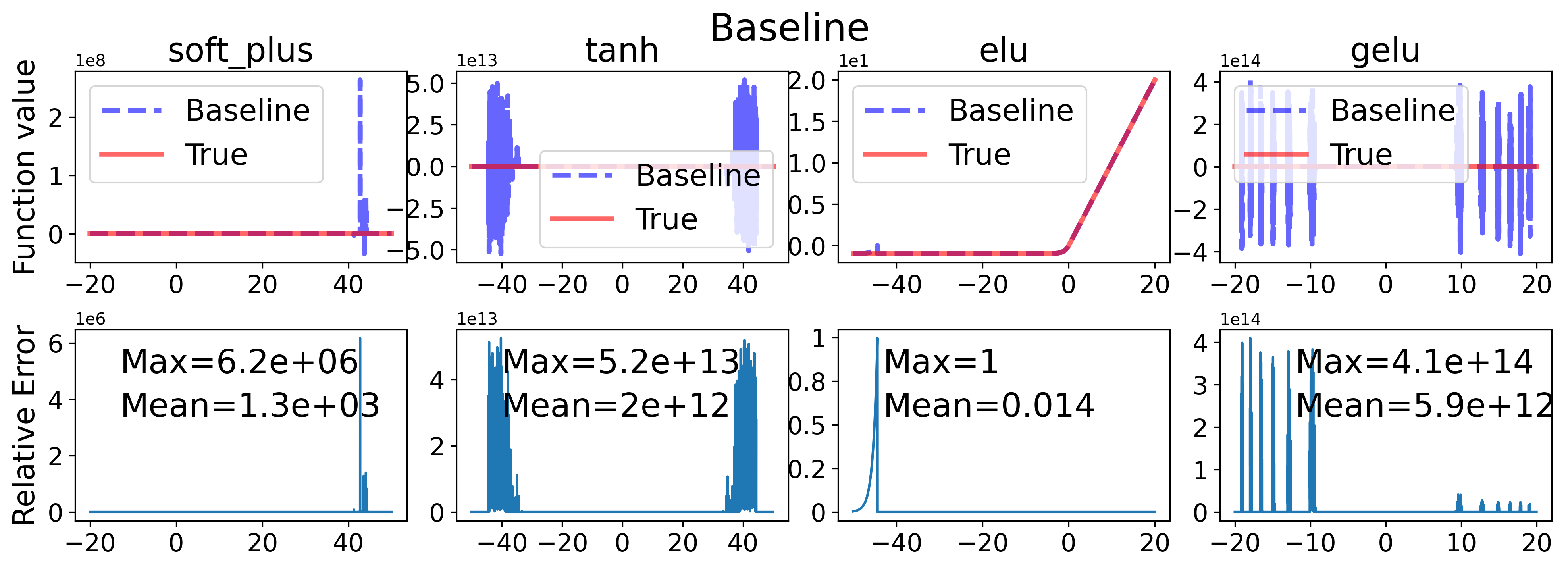}
		\vspace{-6mm}
	}
	\vspace{-4mm}
	\flushleft
	\subfigure{
		\includegraphics[width=0.45\textwidth]{./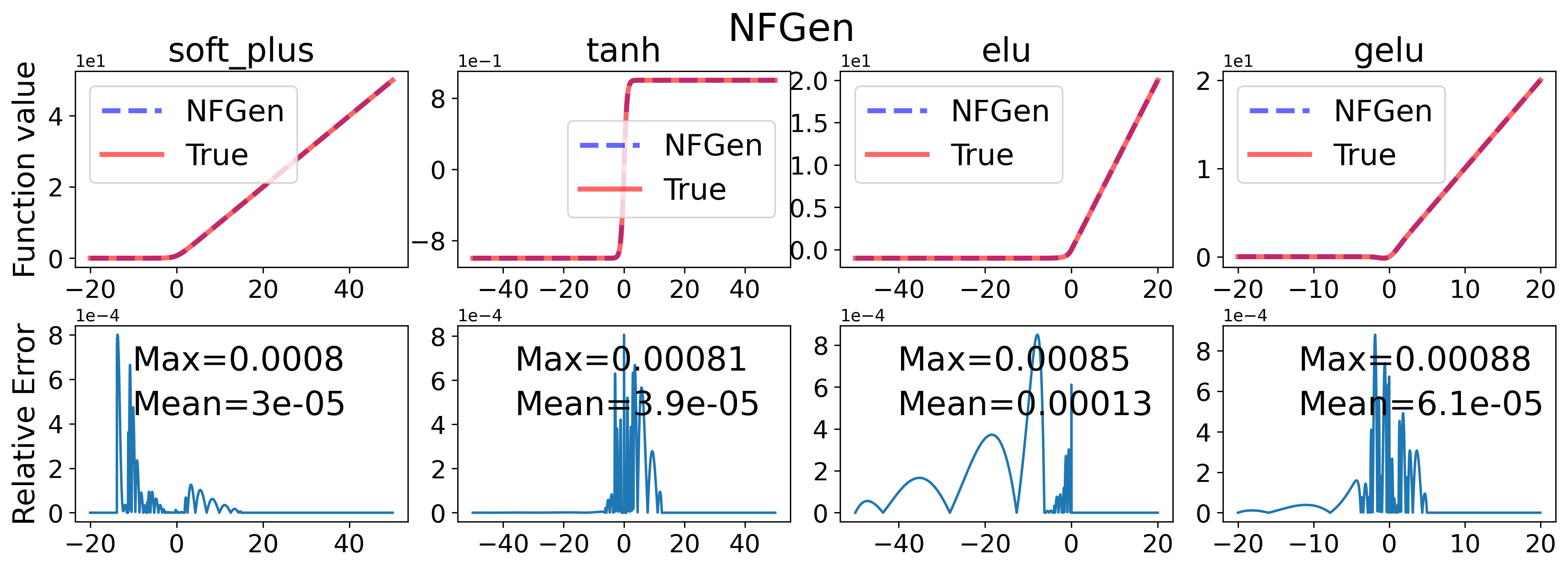}
		\vspace{-6mm}
	}
	\flushleft
	\vspace{-0.23in}
	\caption{Overflow Error Examples}\label{Fig:problem-examples}
\end{figure}

\smallpara{Error accumulation.  } 
The error accumulation problem widely exists in scientific computing~\cite{atkinson2005theoretical}, even with double precision numbers. Using FXP makes the problem worse, especially when the calculation of the target functions has many steps.  As \sysname approximate the entire \f in one shot, it does not accumulate errors. 

We take the $Bs\_dis$ function (Row 5 in Table~\ref{Tab:mb-activation}), whose calculation has four steps, as an example. 1) $x_{11} \leftarrow \sqrt{x}; \, x_{12} \leftarrow \frac{1}{x}$; 2) $x_{2} \leftarrow \sqrt{x_{12}}$; 3) $x_{3} \leftarrow \phi(\frac{x_{11} - x_{2}}{\gamma})$ and 4) $x_{4} \leftarrow (\frac{x_{11} + x_{2}}{2\gamma * x}) * x_{3}$.
Using 10k $\hat{x}$ samples, we compute the SRD after each step, and plot the CDF in Figure~\ref{Fig: error-accumulation-case} (left).  We can see that although the first two steps results in negligible SRD that is smaller than $10^{-10}$, more samples starts to show larger SRD after steps 3 and 4.  After step 4, only less than half of the samples meet the accuracy requirement of $10^{-3}$.  Figure~\ref{Fig: error-accumulation-case} (right) shows the evaluation results, and we find obvious inaccuracies for $\hat{x} \in [0.054, 18]$, without overflow or underflow.  In comparison, \sysname successfully limits the error below $3.8\times 10^{-4}$ using $\hat{p}_{5}^{11}$.  

\begin{figure}
    \centering
    \includegraphics[width=0.45\textwidth]{./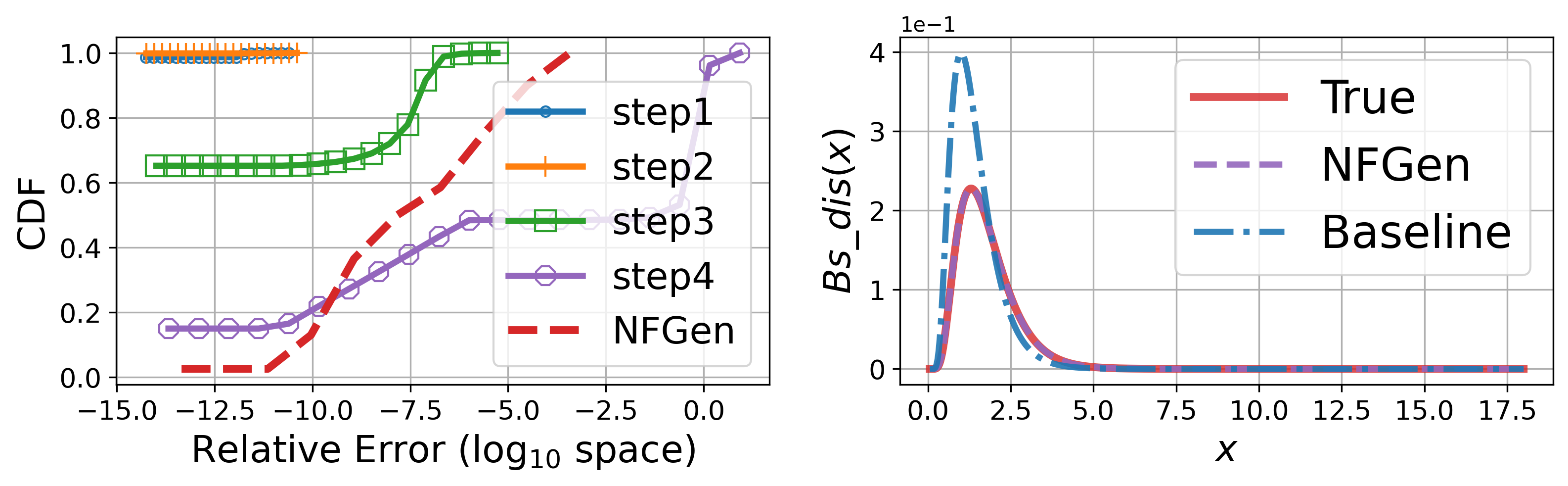}
	\vspace{-0.2in}
    \caption{Error Accumulation Case}\label{Fig: error-accumulation-case}
\end{figure}

\smallpara{Accuracy loss in secret sharing.  } 
The secret sharing and reconstruction processes in MPC platforms may introduce extra inaccuracies as they approximate each real value in fixed-point shares
~\cite{catrina2010secure, mohassel2017secureml} (we call it \emph{secret sharing reconstruction error (SSRE)}).
For example, we observe that the value of exactly $1\times 10^{-14}$, after secret sharing and reconstruction, may become $1.06581\times10^{-14}$.  This SSRE also adds inaccuracy to the final evaluation result.  To quantify the contribution of SSRE, we compare the SRD using real MPC to our simulated FXP (without SSRE).   Table~\ref{error_comparision} shows the comparison (* highlights cases with different results).  We can see that SSRE does increase the inaccuracy, but the contribution is small comparing to other sources of inaccuracy. 

\begin{table}[tb]
	\caption{Secret Sharing Reconstruction Loss}
    \label{error_comparision}
	\vspace{-0.2in}
	\footnotesize
    \centering
    \begin{threeparttable}
    \begin{tabular}{c|cc|ccc}
    \midrule[1.1pt]
    \multirow{2}{*}{} & \multicolumn{2}{c|}{\textbf{Simulation }} & \multicolumn{2}{c}{\textbf{Secret Sharing}} &  \\\hline
     & Max($\times 10^{-4}$) & Mean($\times 10^{-5}$) & Max($\times 10^{-4}$) & Mean($\times 10^{-5}$) &  \\\hline\hline
    tanh & $8.06$ & $3.88$ & $8.06$ & $3.88$ &  \\
    soft\_plus$^{*}$ & $7.67$ & $2.94$ & $8.00$ & $2.98$ &  \\
    sigmoid & $3.98$ & $1.87$ & $3.98$ & $1.87$ &  \\
    elu & $8.50$ & $12.87$ & $8.50$ & $12.87$ &  \\
    selu & $5.24$ & $1.94$ & $5.24$ & $1.94$ &  \\
    gelu & $8.79$ & $6.11$ & $8.79$ & $6.11$ &  \\
    isru & $8.61$ & $28.18$ & $8.61$ & $28.18$ &  \\
    normal\_dis$^{*}$ & $3.40$ & $1.01$ & $10.49$ & $2.53$ &  \\
    gamms\_dis & $8.80$ & $3.11$ & $8.80$ & $3.11$ &  \\
    chi\_square & $7.27$ & $10.70$ & $7.50$ & $10.70$ &  \\
    exp\_dis & $7.39$ & $19.76$ & $7.39$ & $19.77$ &  \\
    log\_dis & $4.24$ & $3.28$ & $4.24$ & $3.28$ & \\
    bs\_dis$^{*}$ & $5.96$ & $1.98$ & $6.91$ & $1.97$ & \\\hline
    \end{tabular}
\end{threeparttable}
\end{table}

\smallpara{Accuracy with different FXP widths.  }
Many MPC platforms offer configurable $\langle n, f\rangle$ for FXP numbers.  While reducing $n$ saves computation cost, a small $n$ limits the input domain for both baseline methods and \sysname, because we need to represent all inputs, intermediate results and outputs with the number of bits.  
We conduct experiments with $n$ ranging from 32 to 128 and compare the supported $\hat{x}$ domain between the baseline and \sysname and summarize the results in Table~\ref{Tab:domain restriction}. 

We observe: 1) As $n$ gets smaller, the ranges shrink for both cases.  \sysname supports a much larger domain even for $n=32$.  This is because the baseline domain is severely limited by the range of intermediate results, e.g. \divop overflows when $x$ gets close to zero and \expop overflows when $x> \ln{2^{n-f-1}}$.  
2) In contrast, $n$ affects the domain in \sysname in two different ways:  a) it directly limits representable $x$ to $|2^{n-f-1}|$; b) it limits the representation of $x^k$, forcing us to use only small $k$ values (Column 3).  We see that \sysname can automatically adapt to the $n$ settings by reducing the $k$ values to prevent overflows. 
3) As expected, a small $k$ limits the accuracy we can achieve.  We empirically determine the minimal possible accuracy (both $\hat{0}$ and $\epsilon$) when we require $m<1000$ (Column 1), and find that even in the $\langle 32, 16\rangle$ setting (i.e. the max representable number is only $2^{15}$), we can still maintain an $\epsilon \approx 5\%$, covering almost the entire representable domain between $-10^{4}$ and $10^{4}$.

    \begin{table}
        \caption{Domain Restriction with Data Representation}\label{Tab:domain restriction}
        \vspace{-0.2in}
        \footnotesize
        \begin{threeparttable}
        \begin{tabular}{c|c|c|c|c}
            \midrule[1.1pt]
            Config & \f & $(k, m)$ & Origin $D$ & Ours $D$ \\
            \hline \hline
            \multirow{4}{*}{\shortstack{$\langle 128, 64 \rangle$ \\ $\hat{0}=10^{-6}$ \\ $\epsilon=10^{-4}$}} 
            & $tanh$ & $(6, 11)$ & $[-44.4, 44.4]$ & $(-10^{19}, 10^{19})$ \\ 
            ~ & \textit{soft\_plus} & $(9, 9)$ & $[-44.4, 44.4]$ & $(-10^{19}, 10^{19})$ \\
            ~ & \textit{Normal\_dis} & $(6, 48)$ & $[-9.4, 9.4]$ & $(-10^{19}, 10^{19})$ \\
            ~ & \textit{Bs\_dis} & $(10, 9)$ & $[0.0, 24.1]$  &  $(0, 10^{19})$ \\
            \midrule
            \multirow{4}{*}{\shortstack{$\langle 96, 48 \rangle$ \\ $\hat{0} = 10^{-6}$ \\ $\epsilon=10^{-3}$}} 
            & $tanh$ & $(6, 11)$ & $[-33.3, 33.3]$ & $(-10^{14}, 10^{14})$ \\ 
            ~ & \textit{soft\_plus} & $(6, 11)$ & $[-33.3, 33.3]$ & $(-10^{14}, 10^{14})$ \\
            ~ & \textit{Normal\_dis} & $(10, 9)$ & $[-8.2, 8.2]$ & $(-10^{14}, 10^{14})$ \\
            ~ & \textit{Bs\_dis} & $(5, 14)$ & $[0.1, 18.6]$  &  $(0, 10^{14})$ \\
            \midrule
            \multirow{4}{*}{\shortstack{$\langle 64, 32 \rangle$ \\ $\hat{0} = 10^{-5}$ \\ $\epsilon=10^{-3}$}} 
            & $tanh$ & $(5, 9)$ & $[-22.2, 22.2]$ & $(-10^{9}, 10^{9})$ \\ 
            ~ & \textit{soft\_plus} & $(7, 9)$ & $[-22.2, 22.2]$ & $(-10^{9}, 10^{9})$ \\
            ~ & \textit{Normal\_dis} & $(10, 9)$ & $[-6.7, 6.7]$ & $(-10^{9}, 10^{9})$ \\
            ~ & \textit{Bs\_dis} & $(5, 14)$ & $[0.1, 13.0]$  &  $(0, 10^{9})$ \\
            \midrule
            \multirow{4}{*}{\shortstack{$\langle 32, 16 \rangle$ \\ $\hat{0} = 10^{-2}$ \\ $\epsilon=5\cdot 10^{-2}$}} 
            & $tanh$ & $(4, 6)$ & $[-11.1, 11.1]$ & $(-10^{4}, 10^{4})$ \\ 
            ~ & \textit{soft\_plus} & $(4, 6)$ & $[-11.1, 11.1]$ & $(-10^{4}, 10^{4})$ \\
            ~ & \textit{Normal\_dis} & $(4, 7)$ & $[-4.7, 4.7]$ & $(-10^{4}, 10^{4})$ \\
            ~ & \textit{Bs\_dis} & $(5, 5)$ & $[0.1, 7.4]$  &  $(0, 10^{4})$ \\
            \midrule
        \end{tabular}
        \begin{tablenotes}
            \item * The function range are actually defined by $\hat{0}$, we set outer range to default constant (e.g., $tanh(\xhat) = -1$ for $\forall \xhat \leq -18.79$).
        \end{tablenotes}
        \end{threeparttable}
    \end{table}

\smallpara{Limitations on accuracy.  }  As \sysname uses approximations, there is no guarantee that it will find a workable polynomial with small $\epsilon$s.  We have shown that when $\epsilon = 10^{-3}$, we can successfully find approximations for all 15 functions.  When we set $\epsilon = 10^{-4}$, we fail to find a good \pkm for $Gamma\_dis$. If we further limit $\epsilon$ to $10^{-5}$, six out of the 15 functions fails to fit.  However, considering that MPC is mostly employed on data mining applications that do not require high precision, we believe \sysname strikes the right balance between efficiency and predictable accuracy. 

\subsection{Applied in Real Algorithms}
\label{exp:real-demo}

\sysname benefits MPC algorithms mainly in two ways.  First, it improves both performance and accuracy for existing MPC algorithms.  Second, it allows people to evaluate advanced non-linear functions that we cannot construct with simple built-ins.  We use logistic regression (LR) as an example to show the first benefit, and use a series of special functions and the $\chi^{2}$ test to show the second.

\smallpara{Logistic Regression (LR) Accuracy.  }
LR~\cite{bishop2006pattern} is one of the most utilized data mining algorithms, both in plaintext and MPCs.  
The major challenge for MPC is the slow performance of evaluating \textit{sigmoid}.  Prior projects use a $3$-piece linear function~\cite{mohassel2017secureml, mohassel2018aby3}, and ~\cite{hesamifard2018privacy} uses single \textit{Chebyshev} polynomial to approximate the \textit{sigmoid}.  Figure~\ref{Fig: sigmoid-cases} compares the \textit{sigmoid} function with different approximations. 

\begin{figure}[tb]
	\centering
	\includegraphics[width=0.3\textwidth]{./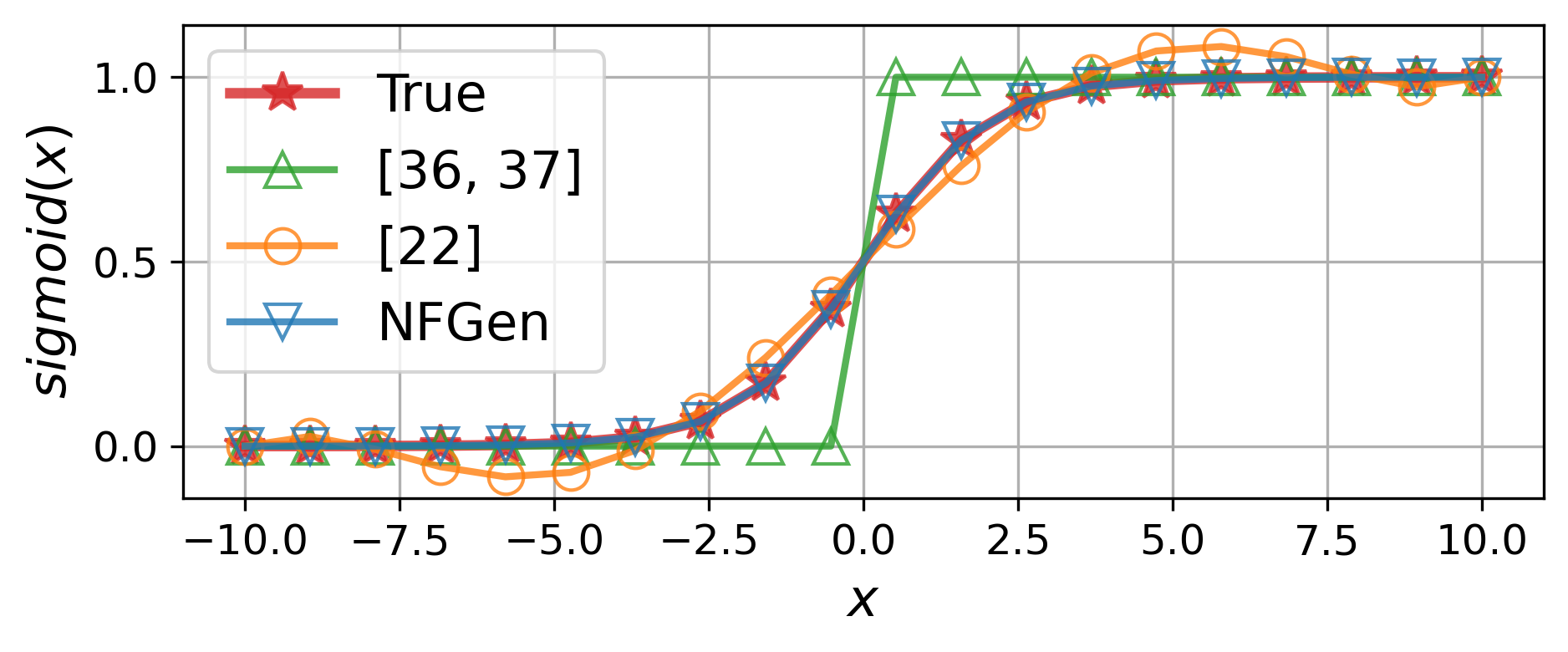}
	\vspace{-0.2in}
	\caption{Sigmoid Approximation Comparison}\label{Fig: sigmoid-cases}
\end{figure}

People argue that the accuracy of \textit{sigmoid} does not affect the LR accuracy\cite{mohassel2017secureml, mohassel2018aby3, hesamifard2018privacy}.  To evaluate this argument, we generate four datasets using Python \code{sklearn}'s \code{make\_classifiation()} method, and use them to train LR models using different approximations. Table~\ref{Tab:sigmoid-comp} reports the LR prediction accuracy.  We see that the $3$-piece approximation can lead to significant LR accuracy loss, while the approximation in~\cite{hesamifard2018privacy} slightly reduces accuracy.  In comparison, \sysname achieves almost the same accuracy as the plaintext result.  Thus,  \textit{sigmoid} accuracy does affect LR performance in some cases.  \sysname provides an efficient way to evaluate \textit{sigmoid} with high accuracy, eliminating the need for \emph{ad hoc} approximations. 

\begin{table}[tb]
	\caption{Performance Analysis of \sigmoid}\label{Tab:sigmoid-comp}
	\vspace{-0.2in}
	\footnotesize
	\centering
	\begin{threeparttable}
		\setlength{\tabcolsep}{2.5mm}
		\begin{tabular}{c|cccc}
			\midrule[1.2pt]
			Data setting$^{*}$ & Real & \sysname & SecureML~\cite{mohassel2017secureml,mohassel2018aby3} & \cite{hesamifard2018privacy} \\
			\hline\hline
			$(0.5, 3, 0.2)$ & 59.2 & 59.2 & 49.2 & 58.6 \\
			$(0.6, 3, 0.1)$ & 62.5 & 62.5 & 50.8 & 62.2 \\
			$(0.7, 3, 0.1)$ & 65.8 & 65.8 & 51.4 & 65.3 \\
			$(0.7, 4, 0.1)$ & 61.9 & 61.9 & 50.0 & 61.5 \\
			\midrule
		\end{tabular}
		\footnotesize
		\begin{tablenotes}
			\item * The three numbers are arguments \smallcode{class\_sep}, \smallcode{clusters\_per\_class} and \smallcode{learning\_rate} passed to the Python \smallcode{sklearn} library's \smallcode{make\_classifiation()} funtion to generate the dataset.
		\end{tablenotes}
	\end{threeparttable}
\end{table}

\smallpara{LR performance.  }
We use 3 real datasets to evaluate LR training and inference time on PrivPy.  We omit evaluation on MP-SPDZ as it needs to pre-compile all input data into the program, but the compiler fails on large datasets.  Independent of the dataset, we set the $\hat{x}$ domain to $[-10, 10]$, which is a typical setting in practice when the distribution of dataset is unknown (when $x \notin [-10, 10]$, output $0$ or $1$).
Table~\ref{Tab:end2end-LR} shows the results.  We can see that \sysname achieves $3.5\times$ to $9.5\times$ speedup in training and $1.8\times$ to $2.3\times$ speedup for inference using $\hat{p}_{8}^{6}$ and with same Accuracy as plaintext LR.

\begin{table}[tb]
	\caption{Logistic Regression Speedups}\label{Tab:end2end-LR}
	\vspace{-0.2in}
	\footnotesize
	\centering
	\begin{threeparttable}
		\begin{tabular}{c|ccc}
			\midrule[1.2pt]
			Dataset &  Method & Train(sec) & Test(sec) \\[2pt]
			\hline \hline
			\multirow{2}*{\shortstack{~ \\[2pt] Adult~\cite{kohavi1996scaling} \\[2pt]($48,842 \times 65$)}} & 
			PrivPy & 413.1 & 1.8 \\[3pt]
			~ & \sysname  & 43.6 / $9.5\times$ & 0.8 / $2.3\times$ \\[3pt]
			\midrule
			\multirow{2}*{\shortstack{~ \\[2pt] Bank~\cite{moro2014data} \\ ($41,188 \times 63$)}}  & 
			PrivPy & 72.8 & 1.6 \\[3pt]
			~  & \sysname  & 20.4 / $3.6\times$ & 0.8 / $2.0\times$ \\[3pt]
			\midrule
			\multirow{2}*{\shortstack{~ \\[2pt] Branch~\cite{BranchTaken} \\($400,000 \times 480$)}} & 
			PrivPy  & 703.8 & 12.2 \\[3pt]
			~ & \sysname  & 199.9 / $3.5\times$ & 6.9 / $1.8\times$ \\[3pt]
			\midrule
		\end{tabular}
	\end{threeparttable}
\end{table}

\begin{table}[tb]
    \caption{Special Functions Demonstration}
    \label{Tab:sp-funcs}
	\vspace{-0.2in}
	\footnotesize
    \centering
    \begin{threeparttable}
    \begin{tabular}{c| c c c}
    \midrule[1.2pt]
    Target Function & Parameter  & $(k, m)$ & $T_{\text{Fit}}$(sec)\\[2pt]
    \hline \hline
    \multirow{3}{*}{$\gamma(x, z) = \int_{0}^{x}t^{z-1}e^{t} \,dt$, $x \in [0, 15]$} 
    & $z=1$ & (6, 4) & 1.1	\\
    & $z=2$ & (5, 6) & 1.6	\\
    & $z=3$ & (6, 6) & 2.0	\\
    \midrule
    \multirow{3}{*}{$\Gamma(x, z) = \int_{x}^{\infty}t^{z-1}e^{t}\,dt$, $x \in [0, 10]$}
    & $z=1$ & (6, 6) & 1.1 \\
    & $z=2$ & (8, 4) & 1.1	\\
    & $z=3$ & (7, 4) & 1.2	\\
    \midrule
    $erf(x) = \frac{2}{\sqrt{\pi}}\int_{0}^{x}e^{-t^2}\,dt$, $x \in [0, 5]$ & NA  & (4, 6) & 0.8 \\
    \midrule
    $\Phi(x) = \frac{2}{\sqrt{2\pi}}\int_{0}^{x}e^{\frac{-t^2}{2}}\,dt$, $x \in [-5, 5]$ &  NA  & (8, 6) & 1.2\\
    \midrule
    \end{tabular}
    \end{threeparttable}
\end{table}

\smallpara{Hard-to-implement functions.  }
A big problem MPC practitioners face is how to implement some commonly-used but hard-to-implement functions, such as $\gamma{(x)}$, $\Gamma{(x)}$ and $\Phi{(x)}$.  These functions are defined as integrals, and it takes much mathematical skills to approximate them using the limited operators in MPC (and impossible sometimes).  \sysname naturally solves the problem for all Lipschitz continuous functions as long as there is a plaintext implementation available.
We demonstrate $8$ hard-to-implement functions in Table~\ref{Tab:sp-funcs}.  We see that like other functions, it only takes about $1$-$2$ seconds to generate the approximation and achieve small $k$ and $m$ meeting the same accuracy requirement.

\smallpara{The $\chi^{2}$ test on real datasets. }
\textit{$\chi^{2}$ test}~\cite{mendenhall2012introduction} is a classic statistical method.  Unfortunately, the $p$ value from $\chi^2$ test depends on the $\gamma$ and $\Gamma$ functions in Table~\ref{Tab:sp-funcs}, as $ p = 1 - \frac{\gamma(\frac{k}{2}, \frac{x}{2})}{\Gamma(\frac{k}{2})}$,
where $\frac{\gamma(\frac{k}{2}, \frac{x}{2})}{\Gamma(\frac{k}{2})}$ is the CDF of $\chi^2$ distribution,  $x$ is the statistical value and $k$ is the degree-of-freedom (Dof) parameter that is typically set to the number of classes minus 1.  No current MPC framework supports $\chi^{2}$ test yet, to our knowledge.  
We show that we can easily implement $\chi^{2}$ test with \sysname on real datasets.  The datasets contain features of a patient with certain diseases, and as a typical task in medical research, we use $\chi^2$ test to determine whether the probability of a disease is correlated to a feature.  Table~\ref{tab:chi-1} shows that we can evaluate cases with different Dofs, and achieve the same result as in plaintext (with 3 significant digits).

\begin{table}[tb]
	\caption{$\chi^{2}$ Test using Real Datasets}
	\label{tab:chi-1}
	\vspace{-0.2in}
	\footnotesize
	\centering
	\begin{threeparttable}
		\begin{tabular}{c|c|cccc}
			\midrule[1.2pt]
			Dataset & Feature & Dof  & $(k,m)$ & Error & Time(sec) \\[2pt]
			\hline \hline
			\multirow{6}*{\shortstack{Cervical~\cite{machmud2016behavior} \\[2pt]($72 \times 20$) \\ $5$ features for demo}} &
			Sexual behavior & 5 & $(6, 11)$ & 0 &  \multirow{6}{*}{13.8} \\
			~ & Eating behavior & 7 & $(5, 13)$ & 0  & \\
			~ & Personal hygine & 11 & $(5, 14)$ & 0  & \\
			~ & Social support & 11 & $(5, 14)$ & 0  & \\
			~ & Attitude & 6 & $(5, 11)$ & 0  & \\
			\midrule
			\multirow{3}*{\shortstack{Sepsis~\cite{chicco2020survival} \\[2pt]($110,204 \times 3$)}} &
			age & 10 & $(5, 12)$ & 0  & \multirow{3}{*}{63.3}\\
			~ & sexual & 1 & $(5, 98)$ & 0  & \\
			~ & episode number & 4 & $(5, 9)$ & 0 & \\
			\midrule
		\end{tabular}
	\end{threeparttable}
\end{table}

\subsection{Effectiveness of Design Choices}
\label{sec:ablation-study}

We conduct an \emph{ablation study} to evaluate the design choices in \sysname, including the \emph{Scaling factor} (Algorithm~\ref{Algo: scalepoly}), \emph{Residual boosting} (Algorithm~\ref{residual-boosting}) and the \emph{Merge stage} in Algorithm~\ref{Algo:sampledpoly}, using the same setting as in Section~\ref{micro-benchmark}, on \repz.

	\begin{table}[tb]
		\caption{Ablation Studies}\label{Tab:ablation-study}
		\vspace{-0.2in}
		\footnotesize
		\begin{threeparttable}
			\centering
			\begin{tabular}{c|c|c|c|c|c}
				\midrule[1.1pt]
				\f & Metric & \sysname & no merge & no boosting & no scaling \\ \hline\hline
				\multirow{4}*{$sigmoid$} 
				& $T_{\text{Fit}}$ & 3.1 & 2.2 & 14.0 & 2.6\\
				& Best $(k, m)$ & (7, 10) & (5, 20) & (7, 10) & (7, 12) \\
				& Failures & 0 & 0 & 1 & 0\\
				\midrule
				\multirow{4}*{\textit{soft\_plus}} 
				& $T_{\text{Fit}}$ & 2.7 & 1.8 & 14.1 & 2.0 \\
				& Best $(k, m)$ & (7, 4) & (4, 30) & (7, 4) & (4, 9)\\
				& Failures & 0 & 0 & 1 & 0 \\
				\midrule
				\multirow{4}*{$selu$} 
				& $T_{\text{Fit}}$ & 1.3 & 0.9 & 1.2 & 36.0 \\
				& Best $(k, m)$ & (8, 9) & (6, 14) & (8, 9) & (6, 12)\\
				& Failures & 0 & 0 & 0 & 3\\
				\midrule
				\multirow{4}*{\textit{Gamma\_dis}} 
				& $T_{\text{Fit}}$ & 4.3 & 1.1 & 40.8 & 1.2 \\
				& Best $(k, m)$ & (7, 21) & (5, 35) & NA & (5, 26) \\
				& Failures & 0 & 0 & 7 & 0 \\
				\midrule
			\end{tabular}
		\end{threeparttable}
	\end{table}

We evaluate each technique on four functions, and Table~\ref{Tab:ablation-study} summarizes the result.  We measure three metrics on each ablation case: 1) $T_{\text{Fit}}$ is the runtime of Algorithm~\ref{Algo:sampledpoly}; 2) Best $(k,m)$ is the best-performance \pkm we find; 3) Failures count the number of cases where we do not find a feasible \pkm for $k$ with $m<m_{max}$.

Key observations include:
1) Though each technique takes extra pre-computation time, the overhead is less than 1 second per technique.  However, without them some functions are even slower to fit, e.g. $selu$ w/o scaling and $Gamma\_dis$ w/o residual boosting are both slower because some otherwise possible \pkm will not pass the check and results in more searching steps. 
2) The merge step significantly reduces $m$, as the splitting strategy often unnecessarily increases the number of pieces.  
3) Failures are more often if we remove residual boosting or scaling, showing that they effectively make some approximation possible by remedying inaccuracies introduced by the FLP-FXP conversion.

\section{Conclusion and Future Work}
\label{sec:conclusion}

Creating general-purpose MPC platforms is analogous to creating a new computation system from scratch using MPC primitives instead of instructions. Non-linear function evaluation, akin to plaintext numeric libraries, is one of these systems' \emph{foundations}. Prior approaches either naively attempted to reuse plaintext algorithms that resulted in erroneous results and/or slow performance, or developed \emph{ad hoc} approximations that were tightly coupled with either specific functions or MPC platforms. Neither method possesses the generality or performance necessary to serve as a viable \emph{foundation}. \sysname is, to our knowledge, the first attempt for a generic solution. We can accurately approximate general non-linear functions using piecewise polynomials by properly handling FXP and FLP operations. We achieve portability across multiple MPC systems and protocols by utilizing code generation and profiler-based performance prediction. Extensive evaluations verify our approach's effectiveness, accuracy, and generality.

As future work, we are going to support more MPC platforms, explore approximation algorithms with stronger theoretical guarantees, as well as support the evaluation of multi-dimensional non-linear functions.

\section*{Acknowledgements}
  We thank Menghua Cao and Zhilong Chen for insightful discussions during the design of \sysname. We thank Yuanxi Dai and Xinze Li for their assistance with the security analysis. We thank Xiang Wang and Haoqing He for their help during the implementations.

  This work is supported in part by the National Natural Science Foundation of China (NSFC) Grant 71872094 and gift funds from Nanjing Turing AI Institute.

\bibliographystyle{ACM-Reference-Format}
\bibliography{main}

\appendix

\section{Analysis of Algorithm}
\subsection{Maximum SRD Analysis for Section~\ref{sec:overview}}\label{appendix:precision}
    We prove the upper-bound of the real maximum \emph{soft relative distance} between the piecewise polynomial \pkm and the target Lipschitz continuous \f over domain $\xhat \in [a, b]$ in this section. 

    \begin{theorem}[Upper-bound of maximum soft relative distance between approximation and target function]\label{theorem:precision-bound}
        By constraining the maximum \emph{soft relative distance} (Eq~\ref{eq:rela-dis}) over sample set $\mathbf{\xhat} = \{\xhat_1, ..., \xhat_i, ..., \xhat_{N}\}$ as Eq~\ref{cons: accuracy}, where the sampling interval $r = \frac{b-a}{N}$, there exist constant $C$, such that the piecewise polynomial \pkm and target Lipschitz continuous function \f satisfy:
        \begin{equation*}
            \max_{\xhat\in [a, b]}|F(\xhat) - \text{\pkm}(\xhat)|_d \leq 
            \begin{cases}
                \frac{C \cdot r}{|F(\xhat)|} + 1, & |F(\xhat)| > \hat{0} \\
                C \cdot r + \epsilon, & |F(\xhat)| \leq \hat{0}
            \end{cases}
        \end{equation*}
    \end{theorem}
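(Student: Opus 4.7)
The plan is to bound the supremum of the soft relative distance over the continuous interval by what we measure at the sample points, using Lipschitz continuity as the bridge. Both $F$ (by hypothesis) and each polynomial piece of \pkm (being a polynomial restricted to a compact subinterval) are Lipschitz continuous; let $L_F$ and $L_p$ denote their respective Lipschitz constants over $[a,b]$, and set $C := L_F + L_p$.

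First, I would fix an arbitrary $\xhat \in [a,b]$ and locate the closest sample point $\xhat_i$, which satisfies $|\xhat - \xhat_i| \leq r/2$. Without loss of generality $\xhat_i$ can be chosen in the same piece as $\xhat$ (if $\xhat$ is at a piece boundary, each side can be handled separately, and the break points $\hat{w}_j$ can be assumed to be among the samples). The triangle inequality then yields
\begin{equation*}
|F(\xhat) - \hat{p}_k^m(\xhat)| \;\leq\; |F(\xhat) - F(\xhat_i)| + |F(\xhat_i) - \hat{p}_k^m(\xhat_i)| + |\hat{p}_k^m(\xhat_i) - \hat{p}_k^m(\xhat)|,
\end{equation*}
and bounding the first and third terms by their Lipschitz estimates gives $|F(\xhat) - \hat{p}_k^m(\xhat)| \leq C r + |F(\xhat_i) - \hat{p}_k^m(\xhat_i)|$, where the middle term is controlled at the sample point by the hypothesis~(\ref{cons: accuracy}).

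Second, I would convert this absolute-error bound into an SRD bound by branching on the definition of $|\cdot|_d$. When $|F(\xhat)| > \hat{0}$, I divide by $|F(\xhat)|$; when $|F(\xhat)| \leq \hat{0}$, the SRD equals the absolute error, immediately giving $C r + \epsilon$ once the sample-point term is bounded by $\epsilon$. In the first branch, since the sample-point estimate is of the form $\epsilon$ or $\epsilon |F(\xhat_i)|$, dividing through by $|F(\xhat)|$ produces the claimed $C r / |F(\xhat)|$ leading term, with the additive remainder absorbed into the slack constant of the theorem.

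The main obstacle is the mismatch between the two branches of the SRD definition at $\xhat$ versus at its nearest sample $\xhat_i$: $\xhat$ may be soft-zero while $\xhat_i$ is not, or conversely. Resolving this requires relating $|F(\xhat_i)|$ to $|F(\xhat)|$ via Lipschitz continuity once more ($\bigl||F(\xhat)| - |F(\xhat_i)|\bigr| \leq L_F r / 2$), then case-splitting on which of $|F(\xhat)|, |F(\xhat_i)|$ exceeds $\hat{0}$. This boundary analysis is what generates the additive slack in the first branch. A secondary subtlety is justifying $L_p$ itself: each polynomial piece $\hat{p}_k^{(j)}$ has bounded derivative on its subinterval, so taking $L_p$ as the maximum of these piecewise Lipschitz constants suffices because, by construction, $\xhat$ and $\xhat_i$ lie in a common piece.
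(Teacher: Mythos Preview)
Your proposal is correct and follows essentially the same approach as the paper: triangle-inequality decomposition through a nearby sample point, Lipschitz bounds on the outer terms, and a case split driven by the two branches of the SRD definition. The paper splits first on whether $|F(\xhat_i)|>\hat 0$ and then on $|F(\xhat)|$, whereas you split in the opposite order, but both orderings lead to the same four sub-cases; your extra care about choosing $\xhat_i$ in the same polynomial piece is a point the paper glosses over.
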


    \begin{proof}
        As \f and piecewise polynomial \pkm$(x)$ (refer to $p(x)$ in the following proof for simplicity) are both Lipschitz continuous, there exist Lipschitz constants $C_F$ and $C_{p}$ that for any interval $[\xhat_i, \xhat_{i+1}]$, where $\xhat_i \, \text{and}\, \xhat_{i+1}\, \in \mathbf{\xhat}$, $|\xhat_{i+1} - \xhat_{i}| = r$ are successive sample points, we have:

        \begin{equation}\label{eq:L-bound}
            \begin{aligned}
                & |F(\xhat) - F(\xhat_i)| \leq C_F \cdot |\xhat - \xhat_i| \leq C_F \cdot r, \\
                & |p(\xhat) - p(\xhat_i)| \leq C_{p} \cdot |\xhat - \xhat_i| \leq C_{p} \cdot r,
            \end{aligned}
        \end{equation}
        for $\forall \xhat \in [\xhat_i, \xhat_{i+1}]$.

        As the accuracy constraint (Eq~\ref{cons: accuracy}) satisfies on sample set $\mathbf{\xhat}$, we have:  
        \begin{equation}
            \max_{\xhat\in \mathbf{\xhat}}|F(\xhat) - p(\xhat)|_d \leq \epsilon.
        \end{equation}

        Thus, the absolute distance $|F(\xhat) - p(\xhat)|$, $\forall \xhat \in [\xhat_i, \xhat_{i+1}]$ satisfies:
        \begin{equation}\label{eq:begin-bound}
            \begin{aligned}
                & |F(\xhat) - p(\xhat)| \\
                & = |F(\xhat) + F(\xhat_i) - F(\xhat_i) + p(\xhat_i) - p(\xhat_i) - p(\xhat)| \\ 
                & \leq |F(\xhat) - F(\xhat_i)| + |F(\xhat_i) - p(\xhat_i)| + |p(\xhat_i) - p(\xhat)| \\
                & \leq C_F \cdot r + C_{p} \cdot r + |F(\xhat_i) - p(\xhat_i)|
            \end{aligned}
        \end{equation}
        
        1) When $|F(\xhat_i)| > \hat{0}$, we have $|F(\xhat_i) - p(\xhat_i)| \leq \epsilon\cdot |F(\xhat_i)|$ (Eq~\ref{eq:rela-dis}). 
        As $|F(\xhat) - F(\xhat_i)| \leq C_F \cdot r$ holds (Eq~\ref{eq:L-bound}), then $|F(\xhat_i)| \leq C_F \cdot r + |F(\xhat)|$, thus from Eq~\ref{eq:begin-bound}, we have:

        \begin{equation*}
            \begin{aligned}
                |F(\xhat) - p(\xhat)| & \leq C_F \cdot r + C_{p} \cdot r + \epsilon\cdot |F(\xhat_i)| \\
                & \leq C_F \cdot r + C_{p} \cdot r + \epsilon\cdot(C_F \cdot r + |F(\xhat)|)\\
                & \leq (C_F + C_{p} + \epsilon\cdot C_F) \cdot r +  \epsilon\cdot |F(\xhat)|, \\
            \end{aligned}
        \end{equation*}
        Let $C = C_F + C_{p} + \epsilon\cdot C_F$, 
        when $|F(\xhat)| > \hat{0}$, we have:
        \begin{equation}\label{eq:res1}
            \begin{aligned}
                |F(\xhat) - p(\xhat)|_d & = \frac{|F(\xhat) - p(\xhat)|}{F(\xhat)} \leq  \frac{C\cdot r}{|F(\xhat)|} + \epsilon, \\
            \end{aligned}
        \end{equation}
        otherwise:
        \begin{equation}\label{eq:res2}
            \begin{aligned}
                |F(\xhat) - p(\xhat)|_d & = |F(\xhat) - p(\xhat)| \leq  C\cdot r.\\
            \end{aligned}
        \end{equation}

        2) When $|F(\xhat_i)| \leq \hat{0}$, we can bound the range of $|F(\xhat)|$ as:
        \begin{equation*}
            \begin{aligned}
                |F(\xhat)| & \leq C_F\cdot r + |F(\xhat_i)| \\
                & \leq C_F\cdot r + \hat{0} < C_F\cdot r + \epsilon, \\
            \end{aligned}
        \end{equation*}
        and we have $|F(\xhat_i) - p(\xhat_i)| \leq \epsilon$, combining with Eq~\ref{eq:begin-bound}:
        \begin{equation*}
            |F(\xhat) - p(\xhat)| \leq C_F \cdot r + C_{p} \cdot r + \epsilon. 
        \end{equation*}
        
        Let $C = C_F +C_{\mathcal{P}}$, when $|F(\xhat)| > \hat{0}$, we have:
        \begin{equation}\label{eq:res3}
            \begin{aligned}
                |F(\xhat) - p(\xhat)|_d & = \frac{|F(\xhat) - p(\xhat)|}{|F(\xhat)|} \\
                & \leq \frac{(C_F + C_{\mathcal{P}})\cdot r}{|F(\xhat)|} + \frac{\epsilon}{|F(\xhat)|} \\
                & \leq \frac{(C_F + C_{\mathcal{P}})\cdot r}{|F(\xhat)|} + \frac{\epsilon}{C_F\cdot r + \epsilon} \\
                & \leq \frac{C\cdot r}{|F(\xhat)|} + 1, \\
            \end{aligned}
        \end{equation}
        otherwise:
        \begin{equation}\label{eq:res4}
            \begin{aligned}
                |F(\xhat) - p(\xhat)|_d = |F(\xhat) - p(\xhat)| \leq C\cdot r + \epsilon.
            \end{aligned}
        \end{equation}

        Combining Eq~\ref{eq:res1},\ref{eq:res2},\ref{eq:res3},\ref{eq:res4}, we prove the result in Theorem~\ref{theorem:precision-bound}.
    \end{proof}




\subsection{Effectiveness of Simulated $\xhat$ through $\mathsf{FLPsimFXP}$ (Algorithm~\ref{flpsimfxp})}\label{proof:FLPsimFXP}
    For all possible FLP encoded input $x \in \mathbb{R}$, the return value $x' = \mathsf{FLPsimFXP}(x, n, f)$ can be represented in $\langle n,f \rangle$-FXP.

    $\forall x \in \mathbb{R}$, if it returns in the first two steps of $\mathsf{FLPsimFXP}$  (Algorithm~\ref{flpsimfxp}), then it is obvious. Otherwise, it does not return in the first two steps of $\mathsf{FLPsimFXP}$, it satisfies that:
    \begin{equation}\label{eq:x-range}
        2^{-f} \leq |x| \leq {2^{n-f-1}}
    \end{equation}

    As $\mathsf{round}_2(x, f)$ rounds-off all the bits beyond $f$ bit after decimal point to $0$, we have that: $\bar{x} = \mathsf{round}_2(x, f) \cdot 2^{f}$ is equivalent to an integer as all bits after the decimal point is $0$. 
    
    As $|\bar{x}| \leq |x| \cdot 2^{f}$, the range of $|\bar{x}|$ satisfies $0 \leq |\bar{x}| \leq 2^{n-1}$. Thus, there exist an $n$-bit integer $\bar{x}$ representing $x'$ where $x' = \bar{x} \cdot 2^{-f}$, which is representable in $\langle n,f \rangle$-FXP number format.

\subsection{Complexity Analysis of OPPE (Algorithm\ref{Algo:OPPE})}\label{sec:oppe-complexity}

    \begin{theorem}[Complexity of OPPE Algorithm]
        The \mulop's complexity is $O(km + k\log{k})$ and \gtop's complexity is $O(m)$.
    \end{theorem}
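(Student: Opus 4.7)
The plan is to walk through Algorithm~\ref{Algo:OPPE} line by line, tallying invocations of $\mathsf{MUL}$ and $\mathsf{GT}$ separately, and then handling the subroutine $\mathsf{CalculateKx}$ in isolation since it is the only non-trivial contributor.

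First I would bound the $\mathsf{GT}$ count. The only appearance of $\mathsf{GT}$ is on Line 1, where the secret input $[\xhat]$ is compared against the plaintext breakpoint vector $\hat{W}$. Since $\hat{W}$ contains $m$ entries (all breakpoints except the right endpoint $\hat{w}_m$), this amounts to exactly $m$ comparisons, whether executed as a single SIMD call or $m$ scalar calls. Thus the $\mathsf{GT}$ complexity is $O(m)$.

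Next I would bound the $\mathsf{MUL}$ count by aggregating the contributions of the three multiplication sources: (i) Lines 3--6 form a loop of $k$ iterations, each executing two summations of $m$ products, contributing $2km$ multiplications; (ii) Line 8 performs the final aggregation of $k$ summands, each requiring two $\mathsf{MUL}$s (the inner $\mathsf{MUL}([\mathsf{coeff}]_i,[\mathsf{xterm}]_i)$ followed by multiplication with $[\mathsf{scaler}]_i$), contributing $2k$ multiplications; (iii) Line 7 invokes $\mathsf{CalculateKx}$, whose cost is analyzed separately. Adding (i) and (ii) already yields $2km + 2k = O(km)$.

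The main obstacle, and the step I would treat most carefully, is bounding the cost of $\mathsf{CalculateKx}$. It initializes a length-$(k{+}1)$ vector $[\mathsf{res}]$ and enters a loop in which $\mathsf{shift}$ doubles each iteration until it reaches $k$, so the loop runs $\lceil \log_2 k \rceil + 1$ times. At iteration with value $\mathsf{shift} = 2^j$, the update $[\mathsf{res}]_{\mathsf{shift}:} \leftarrow \mathsf{MUL}([\mathsf{res}]_{\mathsf{shift}:}, [\mathsf{res}]_{:-\mathsf{shift}})$ performs a vectorized multiplication of length $k{+}1-2^j$. Summing,
\begin{equation*}
\sum_{j=0}^{\lceil \log_2 k \rceil}\bigl(k+1 - 2^j\bigr) \;\leq\; (\lceil \log_2 k\rceil + 1)(k+1) \;=\; O(k\log k).
\end{equation*}
Combining this with the $O(km)$ contribution from (i) and (ii) gives a total $\mathsf{MUL}$ complexity of $O(km + k\log k)$, which is the claimed bound. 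A short remark would note that if one instead counts \emph{rounds} of vectorized $\mathsf{MUL}$ (as used in the paragraph following the algorithm for latency analysis), Line 7 contributes only $\lceil \log_2 k\rceil + 1$ rounds and Lines 3--6 plus Line 8 contribute $O(1)$ rounds, matching the depth bound stated informally earlier.
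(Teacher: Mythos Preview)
Your proposal is correct and mirrors the paper's own proof almost exactly: the same three-part decomposition of the $\mathsf{MUL}$ count (Lines 3--6 giving $2km$, Line 8 giving $2k$, and $\mathsf{CalculateKx}$ giving $O(k\log k)$ via the same per-iteration length sum), and the same one-line observation that Line 1 accounts for all $m$ $\mathsf{GT}$s. The only cosmetic difference is your use of $\lceil\log_2 k\rceil$ where the paper writes $\lfloor\log k\rfloor+1$, which is immaterial for the asymptotic bound.
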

    \begin{proof}
        In OPPE Algorithm\ref{Algo:OPPE}, there are totally $m$ \gtop's, in Line1. Thus the complexity of \gtop's is quite straightforward, which is $O(m)$.
        
        For complexity of \mulop's, which comes from three parts: 1) Line 3-6, $(2km)$ plaintext-ciphertext \mulop's; 2) Line 8, $(2k)$ ciphertext \mulop's and 3) Line 7, from \code{CalculateKx} function.
        In \code{CalculateKx} function, there are totally $(\lfloor\log{k}\rfloor+1)$ rounds of \mulop's. In each $i$ round, there are $((k+1)-2^{i-1})$ ciphertext \mulop's, $i = 1, ..., \lfloor\log{k}\rfloor+1$. Suppose there are totally $n$ \mulop's in \code{CalculateKx}, we have
        \begin{equation}
            \begin{aligned}
                n & = \sum_{i=1}^{i=\lfloor\log{k}\rfloor+1}{(k+1 - 2^{i-1})} \\
                & = (\lfloor\log{k}\rfloor+1)(k+1) - 2^{\lfloor\log{k}\rfloor + 1} + 1 \\
                & \approx (k+1)\log{k} - k + 2
            \end{aligned}
        \end{equation}
        As the cost of plaintext-ciphertext \mulop is equal or less than ciphertext-ciphertext \mulop in most MPC platforms, the complexity of \mulop's is $O((k+1)\log{k} - k + 2 + 2k + 2km) = O(klog{k} + km)$.
    \end{proof}


\subsection{Obliviousness of OPPE (Algorithm~\ref{Algo:OPPE})}\label{appendix:oblivious-oppe}

    An \emph{oblivious} algorithm means that the execution path is independent of the inputs. 
    In multi-party computation, such property refers to a deterministic and data-independent sequence of executing secure operations, like \emph{secure oblivious sort algorithm}~\cite{hamada2014oblivious} and \emph{secure oblivious data access algorithm}~\cite{keller2014efficient}. 
    The obliviousness property of the OPPE Algorithm~\ref{Algo:OPPE} is quite straightforward as there are no branches based on the inputs or any variables calculated directly or indirectly from the inputs. Formally, we have

    \begin{theorem}[Obliviousness of OPPE Algorithm~\ref{Algo:OPPE}]
        With subroutines $\mathsf{ADD}, \mathsf{MUL}$ and $\mathsf{GT}$ work as black boxes evaluating secure addition, multiplication and greater-than, 
        the execution path of OPPE Algorithm~\ref{Algo:OPPE} is independent of the inputs.
    \end{theorem}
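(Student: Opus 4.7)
The plan is to prove obliviousness by direct structural inspection of Algorithm~\ref{Algo:OPPE} (and its subroutine \code{CalculateKx}), showing line-by-line that every control-flow decision depends solely on the plaintext configuration parameters $k$ and $m$ from \pkm, and never on the secret input $[\xhat]$ nor on any ciphertext value derived from it. Since the three subroutines $\mathsf{ADD}$, $\mathsf{MUL}$, $\mathsf{GT}$ are invoked as black boxes, their internal execution paths are outside the scope of OPPE's obliviousness claim; OPPE inherits whatever obliviousness properties they provide, and contributes no data-dependent branching of its own.

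First, I would enumerate the statements of Algorithm~\ref{Algo:OPPE}. Line~1 performs a single vectorized $\mathsf{GT}$ against the plaintext breakpoint vector $\hat{W}$ of fixed length $m$; Line~2 applies $\mathsf{ADD}$ with a shift of constant stride $1$, and both arguments have lengths determined by $m$. Lines~3--6 are a \code{for} loop whose bounds $0$ and $k-1$ are plaintext, and whose body at each iteration $i$ performs a fixed number of $\mathsf{MUL}$s and $\mathsf{ADD}$s indexed over $j = 0,\dots,m-1$; neither loop uses any ciphertext predicate. Line~7 calls \code{CalculateKx}$([\xhat],k)$, whose only loop terminates when the plaintext \code{shift} counter (initialized to $1$ and doubled each iteration) reaches the plaintext $k$; the loop body performs vectorized $\mathsf{MUL}$s on operands of lengths determined by \code{shift} and $k$. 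Line~8 performs a fixed summation and pair of $\mathsf{MUL}$s indexed by $i = 0,\dots,k-1$.

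Next, I would formalize the argument: let $\mathcal{T}([\xhat]; k, m)$ denote the ordered trace of subroutine invocations and their operand shapes produced by OPPE on input $[\xhat]$ with configuration $(k,m)$. By induction over the statement list, each element of $\mathcal{T}$ is determined entirely by $(k,m)$; no conditional expression inside OPPE or \code{CalculateKx} reads the plaintext value underlying any ciphertext operand. Consequently, for any two secret inputs $[\xhat_1]$ and $[\xhat_2]$ of the same shape and any fixed $(k,m)$, we have $\mathcal{T}([\xhat_1]; k, m) = \mathcal{T}([\xhat_2]; k, m)$, which is the definition of an oblivious execution path.

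I do not anticipate a substantive obstacle: the result is essentially a syntactic check, since the algorithm was designed to avoid ciphertext-dependent branching. The only subtlety worth stating carefully is the black-box assumption: we must be explicit that the claim concerns the OPPE-level execution path, i.e., the sequence and shapes of subroutine calls, and that any data-dependent behavior that might occur \emph{inside} $\mathsf{ADD}$, $\mathsf{MUL}$, or $\mathsf{GT}$ is abstracted away by the black-box interface. Under that abstraction, the proof reduces to the inductive trace argument above.
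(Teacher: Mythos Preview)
Your proposal is correct and follows essentially the same approach as the paper: the paper's own proof is a one-sentence sketch stating that for any two representable inputs $(\hat{x},\hat{x}')$ no difference in the evaluation path arises outside the black-box subroutines $\mathsf{ADD}$, $\mathsf{MUL}$, $\mathsf{GT}$. Your line-by-line trace argument over $(k,m)$ is simply a more detailed unpacking of that same observation, with the black-box caveat made explicit.
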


    \par{Proof sketch.  }  For any two representable input values $(\hat{x}, \hat{x'})$, no differences in the evaluation path will be introduced outsides the subroutines ($\mathsf{ADD}, \mathsf{MUL}$ and $\mathsf{GT}$ which work as black boxes).




\section{Security Analysis}~\label{sec:formal-security}

    We adopt the same definitions with Canetti's work~\cite{canetti2000security} in this section. For completeness, we first introduce the security paradigm for readers not familiar with this area, and then give the security definitions and proof of \sysname's protocols.

\subsection{Formal Security Definition}\label{sec:security-defination}

    In the \emph{Real-Ideal} proof paradigm introduced in~\cite{canetti2000security}, two processes \emph{ideal} and \emph{real} are defined. 
    
    In the Ideal-process, we assume a trusted third party exists, who receives the inputs from all parties, and evaluates the target function $f$ locally and distributes the designated results to each party; the ideal $t$-\emph{limited} adversary $\mathcal{A}^{*}$ controls a set of at most $t$ corrupted parties, learns their identities, inputs, internal states, received outputs and can modify their inputs to arbitrary value based on the gathered information. 
    
    In the real-process, parties interact with each other according to a protocol $\pi$ in the presence of a real $t$-limited adversary $\mathcal{A}$, who controls a set of at most $t$ corrupted parties in some adversarial model (\eg semi-honest / malicious, adaptive / non-adaptive). 
    At the end of the computation, the real adversary can let the corrupted parties output some arbitrary value. 
    
    There are several adversarial models to use.  For example, \emph{adaptive} vs. \emph{non-adaptive}.  
    To model an adaptive adversary, people introduce an \emph{environment identity} $\mathcal{Z}$ that can see the inputs, internal states and outputs of all the parties and interacts with the adversary during the evaluation for both ideal and real-processes.  On the other hand, to model a non-adaptive adversary, the adversary cannot interact with $\mathcal{Z}$, nor can it change the member of corrupted parties during the execution.
    
    We define our security in the adaptive model (including both semi-honest and malicious), which is a stronger security definition.
    Let {IDEAL}$_{f, \mathcal{A}^{*}, \mathcal{Z}}$ denote the distribution ensemble of all the parties outputs under any valid security parameter, inputs and randomness in the ideal-process; and let EXEC$_{\pi, \mathcal{A}, \mathcal{Z}}$ denote the same distribution ensemble in the real-process.  In the real-ideal paradigm, $\pi$ securely evaluate function $f$ if it \emph{emulates} the ideal-process in the real-process under any effect of the real adversary $\mathcal{A}$ that can be achieved by \emph{some} ideal adversary $\mathcal{A}^{*}$. We formally define the \emph{secure evaluation} with the following Definition~\ref{def:secure-protocol}:

    \begin{definition}[Secure Evaluation]\label{def:secure-protocol}
        Let $f$ be an $n$-party function and let $\pi$ be a protocol for $n$ parties. We say that protocol $\pi$ adaptively \textbf{$t$-securely evaluates} $f$ if for any adaptive $t$-limited real adversary $\mathcal{A}$ and any environment $\mathcal{Z}$, there exists an adaptive ideal-process adversary $\mathcal{A}^{*}$ whose running time is polynomial in the running time of $\mathcal{A}$, such that

        \begin{equation}
            {IDEAL}_{f, \mathcal{A}^{*}, \mathcal{Z}} \overset{c / s}{\approx} {EXEC}_{\pi, \mathcal{A}, \mathcal{Z}},
        \end{equation}
    	where $\overset{c / s}{\approx}$ means two distribution ensembles computationally or statistically indistinguishable (suitable for computational-limited or unlimited adversary). 
    \end{definition}


\subsection{Security Property Proof of NFGen}\label{proof:security}

    In this section, we first briefly introduce the Canetti's \emph{composition theorem}~\cite{canetti2000security} and use it to prove the security preserving property of \sysname. Note that the following definitions suit for both semi-honest and malicious adversaries~\cite{canetti2000security}.

    \para{Secure composition.  } One commonly used method in developing complex high-level secure protocols for some task is the \emph{modular composition}~\cite{canetti2000security}. We firstly design the high-level protocol by assuming that a series of simple sub-protocols can be carried out securely. Then we design each secure sub-protocol meeting the security guarantee and plug them as subroutines in the high-level protocol. 
    The \emph{composition theorem} states that, if the high-level protocol can securely evaluate (as defined in Definition~\ref{def:secure-protocol}) its function with ideal sub-protocols, then the security and functionality maintained by replacing all the ideal sub-protocols into subroutines.

    \begin{theorem}[secure composition theorem (Collary12 in~\cite{canetti2000security})]\label{theorem:composition}
        Let $t < n$, let $m \in N$ and let $f_1, ..., f_m$ be $n$-party functions. Let $\pi$ be an n-party protocol that adaptively t-securely evaluates $g$ in the $(f_1, ..., f_m)-$hybrid model and assumes that no more than one ideal evaluation call is made at each round. Let $\pi_1, ..., \pi_m$ be n-party protocols that adaptively t-securely evaluate $f_1, ..., f_m$. Then the protocol $\pi^{\pi_1, ..., \pi_m}$ adaptively t-securely evaluates $g$.
    \end{theorem}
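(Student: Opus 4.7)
The plan is to follow the standard hybrid argument that underlies Canetti's framework. Given an arbitrary adaptive $t$-limited real adversary $\mathcal{A}$ attacking the composed protocol $\pi^{\pi_1,\ldots,\pi_m}$ together with an environment $\mathcal{Z}$, I will construct an ideal-process adversary $\mathcal{A}^{*}$ for $g$ and show that $\mathrm{EXEC}_{\pi^{\pi_1,\ldots,\pi_m},\mathcal{A},\mathcal{Z}}$ is indistinguishable from $\mathrm{IDEAL}_{g,\mathcal{A}^{*},\mathcal{Z}}$, as required by Definition~\ref{def:secure-protocol}.

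First I would define a sequence of hybrid experiments $H_0, H_1, \ldots, H_m$, where $H_0$ equals the real execution and $H_j$ replaces the first $j$ sub-protocol invocations (in the canonical order given by the round in which they occur, well-defined by the assumption that at most one ideal call is issued per round) with trusted-party calls to the corresponding $f_i$. By construction, $H_m$ is precisely an execution of $\pi$ in the $(f_1,\ldots,f_m)$-hybrid model, against a derived adversary $\tilde{\mathcal{A}}$ that internally emulates whatever $\mathcal{A}$ would have done during the now-ideal sub-calls.

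Next, I would argue $H_{j-1} \overset{c/s}{\approx} H_j$ by reduction to the $t$-security of the particular $\pi_i$ whose invocation is being replaced. If some $\mathcal{Z}$ distinguished $H_{j-1}$ from $H_j$, one could package the surrounding execution (parties, messages, future sub-protocol calls) as an environment $\mathcal{Z}_i$ and the local behaviour of $\mathcal{A}$ during that sub-call as an adversary $\mathcal{A}_i$, contradicting the assumed security of $\pi_i$. Chaining the $m$ hybrid steps together and then applying the $t$-security of $\pi$ in the hybrid model to move from $H_m$ to $\mathrm{IDEAL}_{g,\mathcal{A}^{*},\mathcal{Z}}$ (setting $\mathcal{A}^{*}$ to be the ideal adversary promised for $\pi$, wrapped around the sub-protocol simulators) closes the chain. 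The running time of $\mathcal{A}^{*}$ is bounded by summing $m$ polynomial-time simulators on top of the simulator for $\pi$, which remains polynomial in the running time of $\mathcal{A}$.

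The step I expect to be the main obstacle is the \emph{adaptive} corruption handling at hybrid boundaries. When $\mathcal{A}$ adaptively corrupts a party while a simulated sub-protocol is still in progress, the reduction must hand back an internal state for that party that is consistent both with the simulated transcript produced so far and with the real randomness the party would have used had $\pi_i$ been executed honestly. This is exactly what adaptive security of $\pi_i$ buys us---its simulator supports post-hoc state explanation---so the reduction goes through, but care is needed to route corruption queries, state-reveal requests, and aborts correctly between the inner and outer simulators. A secondary bookkeeping issue is ensuring that the one-ideal-call-per-round restriction is preserved when constructing $\tilde{\mathcal{A}}$, so that the final appeal to security of $\pi$ in the hybrid model is legal; this follows directly from the hypothesis of the theorem.
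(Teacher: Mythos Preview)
The paper does not supply its own proof of this statement: Theorem~\ref{theorem:composition} is quoted verbatim as Corollary~12 of Canetti's work and then \emph{used} (not proved) to derive the security-preserving property of \textsf{NFGen} in the subsequent theorem. So there is no in-paper argument to compare against.

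That said, your sketch is the standard route and matches what Canetti actually does: define a chain of hybrids that swaps real sub-protocol executions for ideal calls one at a time, reduce each adjacent pair to the security of the corresponding $\pi_i$, and finish by invoking the hybrid-model security of $\pi$. Your identification of the delicate point---producing consistent internal state for adaptively corrupted parties mid-subcall, which is precisely what the adaptive simulators for the $\pi_i$'s guarantee---is accurate, and the one-call-per-round hypothesis is indeed what makes the hybrid ordering well defined. One small clarification: in Canetti's presentation the hybrid index runs over \emph{invocations} of the sub-protocols (there may be many calls to the same $f_i$), not over the $m$ functions, so the chain length is the total number of calls rather than $m$; this does not change the argument but affects how you phrase the polynomial-time bound on $\mathcal{A}^{*}$.
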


    The $(f_1, ..., f_m)-$hybrid model means that the joint parties have the access to call ideal functions $f_1, ..., f_m$.

    \para{Security-preserving property of \sysname. } The generated protocol of \sysname has the property that it guarantees the same \emph{security property} with three provided subroutines $\mathsf{ADD}, \mathsf{MUL}$ and $\mathsf{GT}$ evaluating secure addition, multiplication and greater-than. 
    Formally, it has 

    \begin{theorem}[Security preserving property of NFGen]
        Let $t < n$, and let $f_+, f_{\times}, f_{>}$ be $n$-party functions evaluating addition, multiplication and greater-than. Let $\pi_{+}, \pi_{\times}, \pi_{>}$ be n-party protocols that t-securely evaluate $f_+, f_{\times}, f_{>}$. Then the protocol $\pi_{OPPE}^{\pi_{+}, \pi_{\times}, \pi_{>}}$ generated by \sysname t-securely evaluates \pkm.
    \end{theorem}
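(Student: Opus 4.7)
The plan is to apply Canetti's composition theorem (Theorem~\ref{theorem:composition}) directly, with $g = \text{\pkm}$ and with the three ideal functionalities $f_+, f_\times, f_>$ playing the role of $(f_1,\dots,f_m)$. It suffices to establish two things: (i) that $\pi_{OPPE}$, when executed in the $(f_+, f_\times, f_>)$-hybrid model, adaptively $t$-securely evaluates \pkm, and (ii) that at most one ideal call is issued per round. The conclusion then follows by plugging in the real protocols $\pi_+, \pi_\times, \pi_>$ via Theorem~\ref{theorem:composition}.

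First, I would argue (ii) by inspecting Algorithm~\ref{Algo:OPPE}. Each line that issues a secure operation calls a single ideal functionality (possibly on vector inputs, which we treat as a batched single call), and the sequence of calls is fixed by the public parameters $(k,m)$ only, as shown in Appendix~\ref{appendix:oblivious-oppe}. Hence the one-call-per-round condition is met after a straightforward serialization of the algorithm's instructions.

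The core of the argument is (i). Here I would construct the ideal-process adversary $\mathcal{A}^*$ from any hybrid-model adversary $\mathcal{A}$ as follows. Since the public configuration $(\hat{W}, C, S)$ of \pkm and the control flow of Algorithm~\ref{Algo:OPPE} are independent of secret inputs, $\mathcal{A}^*$ can internally simulate $\mathcal{A}$'s view round-by-round. For every ideal call to $f_+, f_\times, f_>$ made by the honest parties in the hybrid world, the only information returned to corrupted parties is a fresh secret share (or the appropriate leakage of these ideal functionalities, which we inherit from the definitions of $f_+, f_\times, f_>$); $\mathcal{A}^*$ reproduces this by sampling shares from the same distribution. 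When the protocol reaches the final output step, $\mathcal{A}^*$ submits the corrupted parties' inputs to the trusted party, receives $\text{\pkm}(\hat{x})$, and uses it to complete the simulation of $\mathcal{A}$'s view of the final reconstruction. Because every intermediate value handled outside the ideal boxes is either a public plaintext constant from $(\hat{W}, C, S)$ or a fresh share produced by an ideal functionality, the simulated distribution is identical to ${EXEC}_{\pi_{OPPE}, \mathcal{A}, \mathcal{Z}}$ in the hybrid model. Adaptiveness is handled by noting that when $\mathcal{Z}$ corrupts a new party mid-execution, $\mathcal{A}^*$ can consistently expose that party's internal state, since all its internal values are either fixed public data or shares drawn from the ideal-functionality output distribution, which $\mathcal{A}^*$ can resample conditioned on the newly revealed input.

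The main obstacle I anticipate is bookkeeping: checking that every intermediate variable in Algorithm~\ref{Algo:OPPE} (in particular the mask vector in Lines~1--2, the selected coefficients in Lines~3--6, and the powers produced by \code{CalculateKx}) remains in secret-shared form and never gets opened outside an ideal functionality. Once this invariant is verified line-by-line, the simulator's construction is mechanical. A secondary subtlety is handling optional security properties such as \emph{security with abort}: if any of $\pi_+, \pi_\times, \pi_>$ aborts, $\pi_{OPPE}$ inherits the abort verbatim, so the simulator simply propagates the abort signal from the subroutine to the output, which matches the real execution by construction. With these points handled, Theorem~\ref{theorem:composition} replaces the ideal boxes with $\pi_+, \pi_\times, \pi_>$ and yields the stated $t$-security of $\pi_{OPPE}^{\pi_+, \pi_\times, \pi_>}$.
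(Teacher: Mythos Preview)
Your proposal is correct and follows essentially the same route as the paper: argue that $\pi_{OPPE}$ in the $(f_+,f_\times,f_>)$-hybrid model $t$-securely evaluates \pkm because all interaction happens inside the ideal calls (the remaining steps being local computation on public constants and returned shares), and then invoke Theorem~\ref{theorem:composition} to replace the ideal boxes by $\pi_+,\pi_\times,\pi_>$. Your treatment is in fact more detailed than the paper's own proof sketch, which simply observes that OPPE ``organizes each subroutine sequentially without revealing any information nor introducing any interactions'' and then applies the composition theorem; your explicit simulator construction, the line-by-line invariant check, and the handling of the one-call-per-round condition are all consistent with (and flesh out) that sketch.
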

    
    \par{Proof sketch.  } 
    Firstly, the protocol $\pi_{OPPE}$ generated through OPPE Algorithm~\ref{Algo:OPPE} can securely evaluate \pkm in the $(f_+, f_{\times}, f_>)-$hybrid model. 
    The protocol $\pi_{OPPE}$ can be constructed by replacing subroutines $\mathsf{ADD}, \mathsf{MUL}$ and $\mathsf{GT}$ with $f_+, f_{\times}$ and $f_>$ following OPPE Algorithm~\ref{Algo:OPPE}.  Since OPPE Algorithm~\ref{Algo:OPPE} organizes each subroutine $\mathsf{ADD}, \mathsf{MUL}$ and $\mathsf{GT}$ sequentially without revealing any information nor introducing any interactions among parties, any cheating behaviors will only happen inside the subroutines. As these subroutines in $\pi_{OPPE}$ are ideal functions $f_+, f_{\times}$ and $f_>$, thus the real-process distribution ensemble ${EXEC}^{f_+, f_{\times}, f_>}_{\pi_{OPPE}, \mathcal{A}, \mathcal{Z}}$ in the hybrid-model is indistinguishable to the ideal-process distribution ensemble ${IDEAL}_{\hat{p}_{k}^{m}, \mathcal{A}^{*}, \mathcal{Z}}$. 
    Then, by the composition theorem~\ref{theorem:composition}, with protocols $\pi_{+}, \pi_{\times}, \pi_{>}$ that securely evaluate $f_+, f_{\times}$ and $f_>$, $\pi_{OPPE}$ can securely evaluate \pkm by replacing each ideal function calls to the corresponding protocols.

\section{NFGen Code Examples}\label{sec:code-example-NFCG}
    In this section, we give a detailed code example to demonstrate the workflow of \sysname, here the selected case is privacy-preserving LR requiring \textit{sigmoid}. 

    The NFD config is shown in Code~\ref{code:NFD}, containing the function expression(\code{function}), target domain $[a, b]$(\code{range}), accuracy threshold $\epsilon$ and $\hat{0}$(\code{tol} and \code{zero\_mask}), number representaion $\langle n,f \rangle$. In this case, it also provide supported operations with profiled time (\code{time\_dict}) and generated performance model(\code{profiler}), these two configurations can also be offered in a separated PPD file. Also, the user can select or offer corresponding code templet(\code{code\_templet}) and set the output file path(\code{config\_file}). Then the user can generate specific code by revoke 
    \code{generate\_nonlinear\_config} as Code~\ref{code:generation}. The generated code is shown in Code block~\ref{code:generation}, which can be directly executed in MP-SPDZ environment.

    \begin{mypython}[caption={NFD \& PPD config demo}, label=code:NFD]
import NFGen.CodeTemplet.templet as temp
import NFGen.PerformanceModel.time_ops as to
import sympy as sp

def sigmoid(x):
    # mpc_exp: lambda x:sp.exp(x)
    # mpc_reci: lambda x:1/x, indicating cipher operator.
    return 1 * mpc_reci(1 + mpc_exp(-x))

config_sigmoid = { # NFD Config
    'function': sigmoid,
    'range': (-8, 10),
    'tol': 1e-3,
    'n': 96,
    'f': 48,
    # soft zero.
    'zero_mask': 1e-6,
    # Set the value out of range.
    'default_values': (0, 1), 
    # Supported operations.
    'ops': ['mpc_exp', 'mpc_reci', 'mpc_log', 'mpc_sqrt'],
    # SPDZ code templet.
    # PrivPy templet use: temp.templet_privpy_cpp.
    'code_templet': temp.templet_spdz,
    # Output file.
    'config_file': './config_sigmoid.py',
    # PPD part
    # Time for basic ops(identify mpc_func operators).
    'time_dict': to.basic_time['Rep3'],
    # Profiler model.
    'profiler': '../PerformanceModel/Rep3_kmProfiler.pkl'
}
    \end{mypython}

    \begin{mypython}[caption={Code generation}, label=code:generation]
from NFGen.main import generate_nonlinear_config

# Pass the config and generate code.
generate_nonlinear_config(config_sigmoid)
    \end{mypython}

    \begin{mypython}[caption={Generated Code (MP-SPDZ)}, label=code:generation]
@types.vectorize
def sigmoid(x):
    """Version2 of general non linear function.

    Args:
        x (Sfixed): the input secret value.
    Returns:
        Sfixed: secret f(x).
    """
    
    # In user-level mpc file:
    # probability trunction acceleration.
    program.use_trunc_pr = True
    program.use_split(3)

    # Config of piece-wise polynomial
    breaks = [-1009.0, -10.0, -7.5, -5.0, -2.5, -1.25, 0.0, 
        1.25, 10.0]
    coeffA = [[0.0, 0.0, 0.0, 0.0, 0.0, 0.0], 
        [584022.5019194265, 284883.4701294364, 56450.73475537558, 
        5664.37093087678, 287.16193188632, 5.87319252822], 
        [1715838.8184591327, 1068788.52768671, 274863.81230151834, 
        36252.85955923422, 2439.47176809213, 66.71430835583], 
        [2506924.277852222, 1808567.6009333746, 551644.7862473574, 
        87996.36476823808, 7267.35497880345, 246.29936351578], 
        [2098737.027154335, 1041805.6678011644, -34419.2181203451, 
        -140100.88887670645, -37953.67734930042, -3405.3812230294], 
        [2097139.5017196543, 1054758.5017103767, 29315.2376400353,
        -56099.88246721192, 0.0, 0.0], 
        [2097139.5017196543, 1054758.5017103783, -29315.2376400381, 
        -56099.88246721102, 0.0, 0.0], 
        [1857186.3089880324, 1582070.1166638373, -437353.1379880485,
        60781.5325081042, -4203.393886761, 115.03921274796], 
        [1.0, 0.0, 0.0, 0.0, 0.0, 0.0]]
    scaler = [[1.0, 1.0, 1.0, 1.0, 1.0, 1.0], 
        [2.3842e-07, 2.3842e-07, 2.3842e-07, 2.3842e-07, 2.3842e-07, 
        2.3842e-07], [2.3842e-07, 2.3842e-07, 2.3842e-07, 
        2.3842e-07, 2.3842e-07, 2.3842e-07], [2.3842e-07, 
        2.3842e-07, 2.3842e-07, 2.3842e-07, 2.3842e-07, 
        2.3842e-07], [2.3842e-07, 2.3842e-07, 2.3842e-07, 
        2.3842e-07, 2.3842e-07, 2.3842e-07], [2.3842e-07, 
        2.3842e-07, 2.3842e-07, 2.3842e-07, 1.0, 1.0], 
        [2.3842e-07, 2.3842e-07, 2.3842e-07, 2.3842e-07, 1.0, 1.0], 
        [2.3842e-07, 2.3842e-07, 2.3842e-07, 2.3842e-07, 
        2.3842e-07, 2.3842e-07], [1.0, 1.0, 1.0, 1.0, 1.0, 1.0]]
    
    m = len(coeffA)
    k = len(coeffA[0])
    degree = k-1

    # Compute the target mask.
    comp = sfix.Array(m)
    for i in range(m):
        comp[i] = (x >= breaks[i])
    cipher_index = bb.get_last_one(comp) 
    
    # Calculate [1, x, x^2, ..., x^{k}].
    pre_muls = floatingpoint.PreOpL(lambda a,b,_: 
        a * b, [x] * degree)

    # Compute c_i*x^i*s_i.
    poss_res = [0]*m
    for i in range(m):
        poss_res[i] = coeffA[i][0] * scaler[i][0]
        for j in range(degree):
            poss_res[i] += (coeffA[i][j+1] 
            * pre_muls[j] * scaler[i][j+1])

    # Get result with mask and all possible values.
    return sfix.dot_product(cipher_index, poss_res)
    \end{mypython}

     \footnotesize
     \begin{lstlisting}[caption={Generated Code (PrivPy C++ Code)}, label=code:generation-pp]
bool SS4Runner::sigmoid(const size_type length,
    TypeSet::FNumberArr *result,
    const TypeSet::FNumberArr *num,
    bool use_current_thread) {
    check_runner_terminate_status(false);
    TypeSet::FNUMT *result_x = result->get_x(), *result_x_ = result->get_x_();

    const size_type K = 9;
    const size_type config_length = length * K;
    TypeSet::FNumberArr coeff(config_length);
    TypeSet::FNumberArr scaler(config_length);
    TypeSet::FNumberArr x_items(config_length);

    sigmoid_calculateCoeff (length, &coeff, &scaler, num); });
    calculateKx(length, &x_items, num, K); });

    mul<TypeSet::FNUMT>(config_length, &x_items, &x_items, &coeff);
    mul<TypeSet::FNUMT>(config_length, &x_items, &x_items, &scaler);

    double *ftmp2 = new double[length];
    for (int i = 0; i < length; i++) ftmp2[i] = 0.0;

    map2numv<double, TypeSet::FNUMT>(ftmp2, length, result);
    for (size_type i = 0; i < length; i++) {
        for (size_type j = 0; j < K; j++) {
            result_x[i] += x_items.x[i * K + j];
            result_x_[i] += x_items.x_[i * K + j];
        }
    }
    delete[] ftmp2;
    return true;
}

bool SS4Runner::sigmoid_calculateCoeff(const size_type length,
    TypeSet::FNumberArr *coeff,
    TypeSet::FNumberArr *scaler,
    const TypeSet::FNumberArr *num) {
    check_runner_terminate_status(false);
    TypeSet::FNUMT *num_x = num->get_x(), *num_x_ = num->get_x_();

    const size_type M = 7;
    const size_type K = 9;
    const double Breaks[M] = {-1009.0, -10.0, -5.0, -1.25, 0.0, 1.25, 10.0};
    const double CoeffA[M * K] = {0.0, 0.0, 0.0, 0.0, 0.0, 0.0, 0.0, 0.0, 0.0, 2736794.311829414, 2166127.0649512047, 779033.1999940014, 165208.365427917, 22462.75021140543, 1994.81274157933, 112.49785815234, 3.6702035068, 0.05287661137, 2060712.4699856702, 909977.1725834787, -224423.69609669817, -287140.3378066692, -103632.73876614487, -20259.29529210076, -2311.60919005347, -144.85798280994, -3.84685342155, 2097139.5017196543, 1054758.5017103767, 29315.2376400353, -56099.88246721192, 0.0, 0.0, 0.0, 0.0, 0.0, 2097139.5017196543, 1054758.5017103783, -29315.2376400381, -56099.88246721102, 0.0, 0.0, 0.0, 0.0, 0.0, 2020005.5083353834, 1217129.597655899, -113822.68074184433, -88781.97063835277, 35521.12204303316, -6139.86216019294, 574.93708725425, -28.4375893287, 0.58380827441, 1.0, 0.0, 0.0, 0.0, 0.0, 0.0, 0.0, 0.0, 0.0};
    const double Scaler[M * K] = {1.0, 1.0, 1.0, 1.0, 1.0, 1.0, 1.0, 1.0, 1.0, 2.3842e-07, 2.3842e-07, 2.3842e-07, 2.3842e-07, 2.3842e-07, 2.3842e-07, 2.3842e-07, 2.3842e-07, 2.3842e-07, 2.3842e-07, 2.3842e-07, 2.3842e-07, 2.3842e-07, 2.3842e-07, 2.3842e-07, 2.3842e-07, 2.3842e-07, 2.3842e-07, 2.3842e-07, 2.3842e-07, 2.3842e-07, 2.3842e-07, 1.0, 1.0, 1.0, 1.0, 1.0, 2.3842e-07, 2.3842e-07, 2.3842e-07, 2.3842e-07, 1.0, 1.0, 1.0, 1.0, 1.0, 2.3842e-07, 2.3842e-07, 2.3842e-07, 2.3842e-07, 2.3842e-07, 2.3842e-07, 2.3842e-07, 2.3842e-07, 2.3842e-07, 1.0, 1.0, 1.0, 1.0, 1.0, 1.0, 1.0, 1.0, 1.0};

    // 1. Calculate the corresponding index
    // a) Expand num and breaks to outter_comparision.
    const size_type expand_length = length * M;
    TypeSet::FNumberArr findex_arr(expand_length);
    outter_gt<double, TypeSet::FNUMT>(Breaks, num, expand_length, &findex_arr);
    // b) Compute the target mask
    for (size_type i = 0; i < length; i++) {
        memcpy(ctmp + i * M, findex_arr + i*M+1, sizeof(TypeSet::FNUMT) * (M - 1));
    }
    sub<TypeSet::FNUMT>(expand_length, &findex_arr, &findex_arr, &ctmp);
    // 2. Fetch out target coeff and scaler.
    const size_type config_length = length * K;
    TypeSet::FNumberArr coeff(config_length);
    TypeSet::FNumberArr scaler(config_length);
    size_type shape1[2] = {length, M};
    size_type shape2[2] = {M, K};
    inner_product<TypeSet::FNUMT, double>(
        &coeff, shape1, shape2, &findex_arr, CoeffA);
    inner_product<TypeSet::FNUMT, double>(
        &scaler, shape1, shape2, &findex_arr, Scaler);

    return true;
}
     \end{lstlisting}

     \footnotesize
     \begin{mypython}[caption={Generated Code (PrivPy Python Code)}]
@pp.local_import("numpy", "np")
def sigmoid(x):
    import pnumpy as pnp
    
    def _calculate_kx(x, k):
        items = pnp.transpose(pnp.tile(x, (k, 1)))
        items[:, 0] = pp.sfixed(1)

        shift = 1
        while shift < k:
            items[:, shift:] *= items[:, :len(items[0])-shift]
            shift *= 2
        return items

    def _fetch_index(x, breaks):
        if isinstance(x, pp.FixedArr):
            x = pnp.transpose(pnp.tile(x, (len(breaks), 1)))
        breaks = np.tile(breaks, (len(x), 1))

        cipher_comp = x >= breaks
        cipher_index = pnp.util.get_last_one(cipher_comp)

        return cipher_index
    
    # same breaks, coeffA and scaler with other generated code.

    breaks = np.array(breaks)
    coeffA = np.array(coeffA)
    scalerA = np.array(scaler)

    k = int(len(coeffA[0]))
    cipher_index = _fetch_index(x, breaks)
    coeff = pnp.dot(cipher_index, coeffA)
    scaler = pnp.dot(cipher_index, scalerA)
    x_items = _calculate_kx(x, k)
    tmp_res = x_items * coeff
    res = pnp.sum(tmp_res * scaler, axis=1)
    return res
     \end{mypython}

\section{Full Micro-benchmark Results}\label{sec:full-microbenchmark}

\begin{table*}
	\caption{Micro-benchmark for probability distribution functions}\label{Tab:full2}
	\footnotesize
	\resizebox{\textwidth}{!}{
		\begin{threeparttable}
			\centering
			\begin{tabular}{c|c|c|c|c|c|c|c|c|c|c}
				\midrule[1.1pt]
				\multirow{2}{*}{$F(x)$} & \multirow{2}{*}{$\mathcal{S}$} & \multirow{2}{*}{\checkmark}  & \multirow{2}{*}{$(k, m)$} & \multirow{2}{*}{$T_{\text{Fit}}$} & \multicolumn{3}{c|}{Communication(MB)} & \multicolumn{3}{c}{Efficiency(Ms)} \\ \cline{6-11}
                &  &  &  &  & Base & \scriptsize{\sf{NFGen}} & \textbf{Save} & Base & \scriptsize{\sf{NFGen}} & \textbf{SpeedUp} \\[1pt] \hline\hline
				\multirow{6}*{\shortstack{$\textit{Normal\_dis}(x) = \frac{e^{-\frac{x^2}{2}}}{\sqrt{2\pi}}$ \\  $x \in [-10, +10]$, $F(x) \in [0.0, 0.4]$ \\ Non-linear buildling-blocks: $1$}}  
                & A    &  $\times$  &   (8, 12)     &   5.2  &      420     &       295 &        \textbf{30\%} &             67 &         24 &       \textbf{2.8$\times$}  \\
				& B    &  $\times$  &   (8, 12)     &   3.6  &      3       &         2 &        \textbf{45\%} &           4906 &       156 &       \textbf{31.5$\times$} \\
				& C    &  $\times$  &   (8, 12)     &   3.6  &      7       &         5 &        \textbf{27\%} &           5029 &       970 &        \textbf{5.2$\times$}  \\
				& D    &  $\times$  &   (8, 12)     &   3.6 &       24      &        23 &         \textbf{5\%} &           6588 &      1846 &        \textbf{3.6$\times$}  \\
                & E    &  $\times$  &   (5, 22)     &   3.6  &      257     &       481 &       -87\% &          89740 &    166328 &        0.5$\times$  \\
				& F    &  $\times$  &   (8, 12)     &   3.6  &      249     &       301 &       -21\% &          14908 &     14861 &        1.0$\times$  \\
                \midrule
				\multirow{6}*{\shortstack{$\textit{Cauchy\_dis}(x) = \frac{1} {\pi(1 + x^{2})}$ \\  $x \in [-40, +40]$, $F(x) \in [0.0, 0.3]$ \\ Non-linear buildling-blocks: $1$}}  
				& A    & \checkmark   &   (10, 10)&  4.4      &   202 &       295 &       -50\% &             82 &        26 &        \textbf{3.2$\times$} \\
                & B    & \checkmark    &   NA      &  3.6      &   1   &         1 &         0\% &             69 &        71 &        1.0$\times$  \\
				& C    &  \checkmark   &   NA      &  3.6      &   2   &         2 &         0\% &            405 &       403 &        1.0$\times$  \\
				& D    & \checkmark    &   NA      &  3.6      &   8   &         8 &         0\% &            698 &       689 &        1.0$\times$  \\
                & E    & \checkmark    &   NA      &  3.6      &   50  &        50 &         0\% &          15496 &     15677 &        1.0$\times$  \\
				& F    & \checkmark    &   NA      &  3.6      &   47  &        47 &         0\% &           2248 &      2243 &        1.0$\times$  \\
				\midrule
				\multirow{6}*{\shortstack{$\textit{Gamma\_dis}(x) = \frac{x^{\gamma-1}e^{-x}}{\Gamma(\gamma)}$ \\ $\gamma = 0.5$, $x \in [0.0, 50]$, $F(x) \in [0.0, 0.4]$ \\ Non-linear buildling-blocks: $2$}}  
                & A    &  $\times$  &   (7, 19)     &   5.9     &   793 &       393 &       \textbf{50\%}  &           137  &        30 &        \textbf{4.6$\times$}  \\
				& B    &  $\times$  &   (7, 21)     &   4.3     &   3 &         2   &        \textbf{26\%} &           4624 &       216 &       \textbf{21.4$\times$}  \\
				& C    &  $\times$  &   (5, 27)     &   4.3     &   6 &         8   &       -30\% &           5008 &      1443 &        \textbf{3.5$\times$}  \\
				& D    &  $\times$  &   (5, 27)     &   4.3     &   23 &        33  &       -45\% &           7206 &      2695 &        \textbf{2.7$\times$}  \\
                & E    &  $\times$  &   (5, 27)     &   4.3     &   255 &       527 &      -106\% &          89018 &    179739 &        0.5$\times$  \\
				& F    &  $\times$  &   (5, 27)     &   4.3     &   247 &       402 &       -63\% &          14308 &     19226 &        0.7$\times$  \\
				\midrule
				\multirow{6}*{\shortstack{$\textit{Chi\_square\_dis}(x) = \frac{e^{\frac{-x} {2}}x^{\frac{\nu} {2} - 1}}{2^{\frac{\nu} {2}}\Gamma(\frac{\nu} {2}) }$ \\ $v = 4$, $x \in [0.0, 50]$, $F(x) \in [0.0, 0.2]$ \\ Non-linear buildling-blocks: $2$}}  
                & A    & \checkmark  &   (8, 5)  &  2.4  &  419  &       168 &        \textbf{60\%} &             62 &        19 &         \textbf{3.3$\times$}  \\
				& B    & \checkmark    &   (8, 5)  &  1.5  &  3    &         1 &        \textbf{75\%} &           4846 &        72 &       \textbf{67.4$\times$}  \\
				& C    & \checkmark    &   (7, 6)  &  1.5  &  7    &         3 &        \textbf{57\%} &           5074 &       538 &        \textbf{9.4$\times$}  \\
				& D    & \checkmark    &   (7, 6)  &  1.5  & 24    &        13 &        \textbf{46\%} &           6496 &      1016 &        \textbf{6.4$\times$}  \\
                & E    & \checkmark    &   (5, 10) &  1.5  & 257   &       224 &        \textbf{13\%} &          89594 &     77908 &        \textbf{1.2$\times$}  \\
				& F    & \checkmark    &   (7, 6)  &  1.5  &  249  &       140 &        \textbf{44\%} &          14452 &      6904 &        \textbf{2.4$\times$}  \\
				\midrule
				\multirow{6}*{\shortstack{$\textit{Exp\_dis}(x) = e^{-x}$ \\  $x \in [10^{-5}, 10]$, $F(x) \in [0.0, 1.0]$ \\ Non-linear buildling-blocks: $2$}}  
                & A    & \checkmark   &   (9, 3) &  2.9  &  420 &     184 &       \textbf{60\%} &                67 &        21 &        \textbf{3.1$\times$}  \\
				& B    & \checkmark    &   (6, 5) &  1.5  &  3 &         1 &        \textbf{79\%} &            225 &        63 &        \textbf{3.6$\times$}  \\
				& C    & \checkmark    &   (6, 5) &  1.5  &  6 &         2 &        \textbf{61\%} &           1495 &       418 &        \textbf{3.6$\times$}  \\
				& D    & \checkmark    &   (6, 5) &  1.5  &  22 &        10 &        \textbf{53\%} &           2449 &       821 &        \textbf{3.0$\times$}  \\
                & E    & \checkmark    &   (6, 5) &  1.5  & 238 &       132 &        \textbf{45\%} &          83416 &     46803 &        \textbf{1.8$\times$}  \\
				& F    & \checkmark    &   (6, 5) &  1.5  & 231 &       100 &        \textbf{57\%} &          11527 &      4872 &        \textbf{2.1$\times$}  \\
				\midrule
				\multirow{6}*{\shortstack{$\textit{Log\_dis}(x) = \frac{e^{-((\ln x)^{2}/2\sigma^{2})}}{x\sigma\sqrt{2\pi}}$ \\ $\sigma=1.0$, $x \in [10^{-4}, 20]$, $F(x) \in [0.0, 0.7]$ \\ Non-linear buildling-blocks: $3$}}  
				& A     & $\times$  &  (10, 10)    &   4.3     &  2039     &       295 &        \textbf{90\%} &            456 &        25 &        \textbf{18.5$\times$}  \\
                & B     & \checkmark    &  (8, 12)     &   3.5     &  8        &         2 &        \textbf{80\%} &            486 &       149 &        \textbf{3.3$\times$}  \\
				& C     & \checkmark    &  (8, 12)     &   3.5     &  12       &         6 &        \textbf{49\%} &           3088 &      1043 &        \textbf{3.0$\times$}  \\
				& D     & \checkmark    &  (8, 12)     &   3.5     & 37        &        26 &        \textbf{29\%} &           4665 &      2026 &        \textbf{2.3$\times$}  \\
                & E     & \checkmark    &  (6, 17)     &   3.5     & 449       &       433 &         \textbf{4\%} &         142087 &    151929 &        0.9$\times$  \\
				& F     & \checkmark    &  (8, 12)     &   3.5     &  431      &       330 &        \textbf{23\%} &          22545 &     16212 &        \textbf{1.4$\times$}  \\

				\midrule
				\multirow{6}*{\shortstack{$\textit{Bs\_dis}(x)^{1)}= \left (\frac{\sqrt{x} + \sqrt{\frac{1} {x}}} {2\gamma x} \right) 
						\phi \left (\frac{\sqrt{x} - \sqrt{\frac{1} {x}}} {\gamma} \right)$ \\ $\gamma=0.5$, $x \in [10^{-6}, 30]$, $F(x) \in [0.0, 0.2]$ \\ Non-linear buildling-blocks: $3$}}  
                & A  & $\times$  &     (10, 8) &   4.0     &   2815 &      263     &       \textbf{90\%} &            630 &         22 &        \textbf{29.1$\times$}  \\
                & B  & $\times$   &     (7, 11) &   3.2     &   13 &         1      &        \textbf{89\%} &          11463 &       133 &       \textbf{86.1$\times$}  \\
				& C  & $\times$   &     (5, 16) &   3.2     &   23 &         5      &        \textbf{79\%} &          14631 &       915 &       \textbf{16.0$\times$}  \\
				& D  & $\times$   &     (5, 16) &   3.2     &   65 &        22      &        \textbf{66\%} &          19167 &      1763 &       \textbf{10.9$\times$}  \\
                & E  & $\times$   &     (5, 16) &   3.2     &   741 &       352     &        \textbf{53\%} &         239549 &    122325 &        \textbf{2.0$\times$}  \\
				& F  & $\times$   &     (5, 16) &   3.2     &   718 &       268     &        \textbf{63\%} &          42157 &     13136 &        \textbf{3.2$\times$}  \\
				\midrule    
			\end{tabular}
			\begin{tablenotes}
				\item * $T_{\text{Fit}}$ means the time for \pkm fitting, we generate candidate \Pkm with $k$ ranging from $[3, 10]$. The second column (\checkmark) means whether baseline function achieves the same accuracy threshold ($\epsilon=10^{-3}$ with $\hat{0}=10^{-6}$).
				\item 1) \textit{Bs\_dis} means \textit{Birnbaum-Saunders(Fatigue Life) probability distribution}~\cite{birnbaum1969new}.
			\end{tablenotes}
	\end{threeparttable}}
\end{table*}

\begin{table*}
\caption{Micro-benchmark for activation functions}\label{Tab:full1}
\footnotesize
\resizebox{\textwidth}{!}{
\begin{threeparttable}
    \centering
    \begin{tabular}{c|c|c|c|c|c|c|c|c|c|c}
    \midrule[1.1pt]
    \multirow{2}{*}{$F(x)$} & \multirow{2}{*}{$\mathcal{S}$} & \multirow{2}{*}{\textbf{\checkmark}}  & \multirow{2}{*}{$(k, m)$} & \multirow{2}{*}{$T_{\text{Fit}}$} & \multicolumn{3}{c|}{Communication(MB)} & \multicolumn{3}{c}{Efficiency(Ms)} \\ \cline{6-11}
    &  &  &  &  & Base & \scriptsize{\sf{NFGen}} & \textbf{Save} & Base & \scriptsize{\sf{NFGen}} & \textbf{SpeedUp} \\[1pt] \hline\hline
    \multirow{6}*{\shortstack{$sigmoid(x) = \frac{1}{1 + e^{-x}}$ \\  $x \in [-50, +50]$, $F(x) \in [0.0, 1.0]$ \\ Non-linear buildling-blocks: $2$}}  
    & A    & $\times$ &     (10, 8)     &  4.3    &     618 &       263 &        \textbf{60\%} &            147 &        23 &      \textbf{6.3$\times$} \\
    & B    & \checkmark &     (7, 10)     &  3.5    &     1   &         1 &        -5\% &            137 &       124 &      \textbf{1.1$\times$} \\
    & C    & \checkmark &     (5, 14)     &  3.5    &     4   &         4 &        -5\% &           1155 &       802 &      \textbf{1.4$\times$} \\
    & D    & \checkmark &     (5, 14)     &  3.5    &     18  &        19 &        -8\% &           1863 &      1525 &      \textbf{1.2$\times$} \\
    & E    & \checkmark &     (5, 14)     &  3.5    &     212 &       308 &       -45\% &          75949 &    106857 &      0.7$\times$ \\
    & F    & \checkmark &     (5, 14)     &  3.5    &     207 &       234 &       -13\% &           9732 &     11224 &      0.9$\times$ \\
    \midrule
    \multirow{6}*{\shortstack{$tanh(x) = \frac{e^{x} - e^{-x}}{e^{x} + e^{-x}}$ \\  $x \in [-50, +50]$, $F(x) \in [-1.0, 1.0]$ \\ Non-linear buildling-blocks: $3$}}  
    & A    & $\times$ &     (9, 8)  &   4.5     &   1876    &       216 &       \textbf{90\%} &            335 &        21 &       \textbf{15.7$\times$} \\
    & B    & $\times$ &     (5, 9)  &   3.2     &   13      &         1 &       \textbf{92\%} &            800 &        80 &       \textbf{10.0$\times$} \\
    & C    & $\times$ &     (5, 9)  &   3.2     &   19      &         3 &       \textbf{83\%} &           5901 &       597 &       \textbf{9.9$\times$} \\
    & D    & $\times$ &     (5, 9)  &   3.2     &   64      &        14 &       \textbf{78\%} &           8882 &      1115 &        \textbf{8.0$\times$} \\
    & E    & $\times$ &     (5, 9)  &   3.2     &   996     &       197 &       \textbf{80\%} &         337530 &     68550 &        \textbf{4.9$\times$} \\
    & F    & $\times$ &     (5, 9)  &   3.2     &   966     &       150 &       \textbf{84\%} &          45486 &      7309 &        \textbf{6.2$\times$} \\
    \midrule
    \multirow{6}*{\shortstack{$\textit{soft\_plus}(x) = \log(1 + e^{x})$ \\  $x \in [-20, 50]$, $F(x) \in [0.0, 49.9]$ \\Non-linear buildling-blocks: $2$}}  
    & A    & $\times$ &     (10, 7) &   3.6  &      1127&       248 &        \textbf{80\%}  &            221 &        23 &        \textbf{9.5$\times$} \\
    & B    & $\times$ &     (8, 9)  &   2.6  &      5   &         1 &        \textbf{78\%}  &            384 &       110 &        \textbf{3.5$\times$} \\
    & C    & $\times$ &     (6, 11) &   2.6  &      8   &         4 &        \textbf{49\%}  &           2847 &       797 &        \textbf{3.6$\times$} \\
    & D    & $\times$ &     (6, 11) &   2.6  &      27  &        19 &        \textbf{29\%}  &           4054 &      1475 &        \textbf{2.7$\times$} \\
    & E    & $\times$ &     (4, 19) &   2.6  &      318 &       343 &        -8\%  &         105809 &    116529 &        0.9$\times$ \\
    & F    & $\times$ &     (6, 11) &   2.6  &      307 &       220 &        \textbf{28\%}  &          15739 &     10780 &        \textbf{1.5$\times$} \\
    \midrule
    \multirow{6}*{\shortstack{$elu(x) = \begin{cases} x & x > 0 \\ \alpha*(e^{x}-1) & x \leq 0 \end{cases}$ \\  $\alpha = 1.0$, $x \in [-50, 20]$, $F(x) \in [-1.0, 19.9]$ \\ Non-linear buildling-blocks: $2$}}  
    & A    & $\times$ &     (7, 4) &    1.7     &  440  &       153 &        \textbf{70\%}  &             82 &        20 &        \textbf{4.0$\times$} \\
    & B    & $\times$ &     (4, 7) &    0.9     &   3   &         1 &        \textbf{78\%}  &            241 &        66 &        \textbf{3.7$\times$} \\
    & C    & $\times$ &     (4, 7) &    0.9     &  6    &         2 &        \textbf{64\%}  &           1590 &       390 &        \textbf{4.1$\times$} \\
    & D    & $\times$ &     (4, 7) &    0.9     &  22   &         9 &        \textbf{57\%}  &           2540 &       757 &        \textbf{3.4$\times$} \\
    & E    & $\times$ &     (4, 7) &    0.9     &  246  &       127 &        \textbf{48\%}  &          85950 &     43302 &        \textbf{2.0$\times$} \\
    & F    & $\times$ &     (4, 7) &    0.9     &  238  &        96 &        \textbf{60\%}  &          11900 &      4783 &        \textbf{2.5$\times$} \\
    \midrule
    \multirow{6}*{\shortstack{$selu(x) = \lambda \begin{cases} \mbox{$x$} & \mbox{if } x > 0\\ \mbox{$\alpha e^x-\alpha$} & \mbox{if } x \leq 0 \end{cases}$~\cite{klambauer2017self} \\ $\alpha = 1.6732632$ and $\lambda = 1.05007010$ \\$x \in [-50, 20]$, $F(x) \in [-1.8, 20.9]$ \\ Non-linear buildling-blocks: $2$}}  
    & A   & $\times$ &      (7, 4) &    2.5     &   440 &       153 &        \textbf{70\%}  &             85 &        19 &        \textbf{4.5$\times$} \\
    & B   & $\times$ &      (7, 4) &    1.3     &   3   &         1 &        \textbf{82\%}  &            247 &        53 &        \textbf{4.7$\times$} \\
    & C   & $\times$ &      (4, 8) &    1.3     &   6   &         2 &        \textbf{60\%}  &           1664 &       426 &        \textbf{3.9$\times$} \\
    & D   & $\times$ &      (4, 8) &    1.3     &   22  &        10 &        \textbf{54\%}  &           2578 &       860 &        \textbf{3.0$\times$} \\
    & E   & $\times$ &      (4, 8) &    1.3     &   250 &       146 &        \textbf{42\%}  &          86658 &     49364 &        \textbf{1.8$\times$} \\
    & F   & $\times$ &      (4, 8) &    1.3     &   243 &       111 &        \textbf{54\%}  &          11968 &      5402 &        \textbf{2.2$\times$} \\
    \midrule
    \multirow{6}*{\shortstack{$gelu(x) = 0.5x\left(1+\text{tanh}\left(\sqrt{\frac{2}{\pi}}(x+\alpha\,x^3)\right)\right)$ \\ $\alpha = 0.04472$ $x \in [-20, 20]$, $F(x) \in [-0.0, 20.0]$ \\ Non-linear buildling-blocks: $3$}}  
    & A   & $\times$ &      (8, 6) &    1.1  &  267     &       200 &        \textbf{30\%}  &             41 &        21 &        \textbf{2.0$\times$} \\
    & B   & $\times$ &      (4, 9) &    0.6  &   13     &         1 &        \textbf{93\%}  &            800 &        75 &       \textbf{10.7$\times$} \\
    & C   & $\times$ &      (4, 9) &    0.6  &   19     &         3 &        \textbf{87\%}  &           6007 &       522 &       \textbf{11.5$\times$} \\
    & D   & $\times$ &      (4, 9) &    0.6  &  65      &        11 &        \textbf{83\%}  &           9058 &       936 &       \textbf{9.7$\times$} \\
    & E   & $\times$ &      (4, 9) &    0.6  &  1009    &       164 &         \textbf{84\%}  &         344271 &     56269 &        \textbf{6.1$\times$} \\
    & F   & $\times$ &      (4, 9) &    0.6  &  978     &       124 &        \textbf{87\%}  &          46253 &      6109 &        \textbf{7.6$\times$} \\
    \midrule
    \multirow{6}*{\shortstack{$\textit{soft\_sign}(x) = \frac{x}{1 + |x|}$ \\  $x \in [-50, 50]$, $F(x) \in [-1.0, 1.0]$ \\ Non-linear buildling-blocks: $2$}}  
    & A   & $\times$ &      (8, 8)  &  1.9      &       518 &       231 &         \textbf{60\%} &            131 &        21 &        \textbf{6.1$\times$} \\
    & B   & \checkmark &      NA      &  1.3      &       1   &         1 &         0\%  &             79 &        78 &        1.0$\times$ \\
    & C   & \checkmark &      NA      &  1.3      &       2   &         2 &         0\%  &            451 &       437 &        1.0$\times$ \\
    & D   & \checkmark &      NA      &  1.3      &        8  &         8 &         0\%  &            741 &       753 &        1.0$\times$ \\
    & E   & \checkmark &      NA      &  1.3      &       52  &        52 &         0\%  &          15507 &     15520 &        1.0$\times$ \\
    & F   & \checkmark &      NA      & 1.3       &       49  &        49 &         0\%  &           2315 &      2373 &        1.0$\times$ \\
    \midrule
    \multirow{6}*{\shortstack{$isru(x) = \frac{x}{\sqrt{1 + x^2}}$ \\  $x \in [-50, 50]$, $F(x) \in [-1.0, 1.0]$ \\ Non-linear buildling-blocks: $2$}}  
    & A   & $\times$ &      (6, 8)  &  4.4      &  576  &       203 &        \textbf{60\%}  &            157 &        21 &         \textbf{7.4$\times$} \\
    & B   & \checkmark &      (6, 8)  &  3.4      & 3     &         1 &        \textbf{66\%}  &            209 &        96 &        \textbf{2.2$\times$} \\
    & C   & \checkmark &      (4, 13) &  3.4      &   5   &         3 &        0\%  &           1430 &       699 &        \textbf{2.0$\times$} \\
    & D   & \checkmark &      (4, 13) &  3.4      &  15   &        15 &        0\%  &           2088 &      1246 &        \textbf{1.7$\times$} \\
    & E   & \checkmark &      NA      &  3.4      & 145   &       145 &        0\%  &          44751 &     45257 &        1.0$\times$ \\
    & F   & \checkmark &      NA      &  3.4      &  140  &       140 &        0\%  &           7336 &      7399 &        1.0$\times$ \\
    \midrule
    \end{tabular}
    \begin{tablenotes}
        \item * $T_{\text{Fit}}$ means the time for \pkm fitting, we generate candidate \Pkm with $k$ ranging from $[3, 10]$. The second column (\checkmark) means whether baseline function achieves the same accuracy threshold ($\epsilon=10^{-3}$ with $\hat{0}=10^{-6}$).
    \end{tablenotes}
\end{threeparttable}}
\end{table*}


\end{document}